\newtheorem{theorem}{Theorem}[section]
\newtheorem*{theorem*}{Theorem}
\newtheorem{problem}[theorem]{Problem}
\newtheorem{lemma}[theorem]{Lemma}
\newtheorem{fact}[theorem]{Fact}
\newtheorem*{lemma*}{Lemma}
\newtheorem{corollary}[theorem]{Corollary}
\newtheorem{proposition}[theorem]{Proposition}
\newtheorem*{remark}{Remark}
\newtheorem{definition}[theorem]{Definition}
\begin{document}

\title{Complexity of Fermionic 2-SAT}

\author{Maarten Stroeks}
\affiliation{Delft Institute of Applied Mathematics, Delft University of Technology, The Netherlands}
\affiliation{QuTech, Delft University of Technology, The Netherlands}
\email{m.e.h.m.stroeks@tudelft.nl}
\author{Barbara M. Terhal}
\affiliation{Delft Institute of Applied Mathematics, Delft University of Technology, The Netherlands}
\affiliation{QuTech, Delft University of Technology, The Netherlands}


\maketitle

\begin{abstract}
We introduce the fermionic satisfiability problem, \textsc{Fermionic} $k$\textsc{-SAT}: this is the problem of deciding whether there is a fermionic state in the null-space of a collection of fermionic, parity-conserving, projectors on $n$ fermionic modes, where each fermionic projector involves at most $k$ fermionic modes. We prove that this problem can be solved efficiently classically for $k=2$. In addition, we show that deciding whether there exists a satisfying assignment with a given fixed particle number parity can also be done efficiently classically for \textsc{Fermionic} 2-SAT: this problem is a quantum-fermionic extension of asking whether a classical 2-SAT problem has a solution with a given Hamming weight parity. We also prove that deciding whether there exists a satisfying assignment for particle-number-conserving \textsc{Fermionic} 2-SAT for some given particle number is NP-complete.
Complementary to this, we show that \textsc{Fermionic} $9$-SAT is ${\rm QMA}_1$-hard.
\end{abstract}

\tableofcontents

\section{Introduction}
Classical $k$-satisfiability ($k$-SAT) is the problem of deciding whether there exists a configuration of $n$ Boolean variables which satisfies a collection of $k$-variable Boolean clauses. While $2$-SAT is solvable in linear time \cite{Lintimeclassical2sat}, $(k\geq 3)$-SAT is NP-complete. Bravyi introduced the problem \textsc{Quantum} $k$-SAT \cite{bravyi:quantumSAT}: this is the problem of deciding whether there exists an $n$-qubit state which is in the null-space of some collection of $k$-qubit projectors. Bravyi showed that \textsc{Quantum} $2$-SAT can be decided classically efficiently, and later it was shown it can even be done in linear time \cite{ASSZ:linearSAT, lineartimeQ2sat2}. For $k\geq 3$, it has been shown that \textsc{Quantum} $k$-SAT is QMA$_{1}$-complete \cite{bravyi:quantumSAT, GN:3SAT}. 

Here, we introduce the \textit{fermionic} satisfiability problem. An $n$-mode fermionic system corresponds to set of annihilation operators $\{a_j\}_{j=1}^n$ (with Hermitian conjugate creation operators $\{a_{j}^{\dagger}\}_{j=1}^n$) which satisfy $\{a_i^{\dagger},a_j\} := a_i^{\dagger} a_j+a_j a_i^{\dagger} =\delta_{ij}I$, $\{a_i,a_j\}=0$. In addition, there is a state $\ket{\rm vac}$, ---called the vacuum, or empty, zero-particle state---, for which $a_i \ket{\rm vac}=0, \; \forall i$. 

\textsc{Fermionic} $k$-SAT is the problem of deciding whether there exists an $n$-mode fermionic state which is in the null-space of a collection of $k$-mode fermionic projectors. Each $k$-mode projector is a polynomial in $a_{j}$ and $a_{j}^{\dagger}$ with $j\in S$, where $S\subseteq [n]$ and $|S| = k$. Furthermore, the $k$-mode projector commutes with the parity operator $\hat{P}:=(-1)^{\sum_{i=1}^n a_i^{\dagger} a_i}$.
As a special case, we also study \textsc{Particle-number-conserving (PNC) Fermionic} $k$\textsc{-SAT}, where each projector additionally commutes with the particle-number operator $\hat{N}:=\sum_{i=1}^n a_i^{\dagger} a_i$. In addition to studying the satisfiability of these problems, we investigate the complexity of \textsc{PNC Fermionic} $2$\textsc{-SAT} and \textsc{Fermionic} $2$\textsc{-SAT} when, respectively, the particle number is fixed to some given value $\hat{N}=N\in \{0,1\ldots, n\}$, or the parity $\hat{P}=P\in \{+1,-1\}$ is fixed.

\subsection{Main results}
\label{sec:mainresults}
\begin{problem}[\textsc{Fermionic} $k$\textsc{-SAT}]
Given is a set of projectors $\{\Pi_{i}\}$ on $n$ fermionic modes, where each $\Pi_{i}$ is a polynomial in the operators $a_{j}^{\dagger}$ and $a_{j}$ with index 
$j$ contained in a subset of at most $k$ fermionic modes. Furthermore, $[\Pi_i,\hat{P}] = 0$ for all projectors $\{\Pi_i\}$ with $\hat{P}=(-1)^{\sum_{i=1}^n a_i^{\dagger} a_i}$.
Decide whether there exists a $\ket{\psi}$ s.t. $\forall i, \Pi_i\ket{\psi} = 0$  (YES), or for all $\ket{\psi},\;\sum_i\bra{\psi}\Pi_i\ket{\psi} > \frac{1}{{\rm poly}(n)}$ (NO).
\label{problem:pcFkSAT}
\end{problem}

To be precise in our definition of \textsc{Fermionic} $k$\textsc{-SAT} one should, in principle, provide the accuracy with which the projectors $\Pi_i$ are specified, i.e.~with respect to what elementary gate set, see e.g. \cite{ASSZ:linearSAT, rudolph_precise} for how this is handled for \textsc{Quantum} $k$-\textsc{SAT}. Since this is separate from our results, we leave such precise definition to future work.

We also consider \textsc{Fermionic} $k$-SAT with global additional constraints, i.e. one asks for an assignment with given particle number or parity:

\begin{problem}[\textsc{Fermionic} $k$\textsc{-SAT} with fixed parity $P$]
Given $n$ fermionic modes, $P\in \{-1,+1\}$, and a set of projectors $\{\Pi_i\}$ where each projector $\Pi_i$ is a polynomial in the operators $a_{j}^{\dagger}$ and $a_{j}$ with index $j$ contained in a subset of at most $k$ fermionic modes. Furthermore, $\forall i, [\Pi_i,\hat{P}] = 0$. Decide whether there exists $\ket{\psi}$ s.t. $\forall i, \;\Pi_i\ket{\psi} = 0$ and $\hat{P}\ket{\psi} = P\ket{\psi}$ (YES), or 
$\forall \ket{\psi}$ s.t. $\hat{P}\ket{\psi} = P\ket{\psi}$, $\sum_i\bra{\psi}\Pi_i\ket{\psi} > \frac{1}{{\rm poly}(n)}$ (NO).
\label{problem:parityconstrainedFkSAT}
\end{problem}

A special case of Problem \ref{problem:pcFkSAT} is the case when each projector commutes with the particle number, i.e. $\forall i,\;[\Pi_i,\hat{N}] = 0$, which we will refer to as \textsc{Particle-number-conserving (PNC) Fermionic} $k$\textsc{-SAT}. We also consider the following problem:

\begin{problem}[\textsc{PNC Fermionic} $k$\textsc{-SAT} with fixed particle number $N$]
Given $n$ fermionic modes, an integer $N\in \{0,1,\ldots,n\}$ and a set of projectors $\{\Pi_i\}$ where each projector $\Pi_i$ is a polynomial in the operators $a_{j}^{\dagger}$ and $a_{j}$ with index $j$ contained in a subset of at most $k$ fermionic modes. Furthermore, $\forall i, [\Pi_i,\hat{N}] = 0$. Decide whether there exists a state $\ket{\psi}$ s.t. $\forall i,\;\Pi_i\ket{\psi} = 0$ and $\hat{N}\ket{\psi} = N\ket{\psi}$ (YES), or 
$\forall \ket{\psi}$ s.t. $\hat{N}\ket{\psi} = N\ket{\psi}$, $\sum_i\bra{\psi}\Pi_i\ket{\psi} > \frac{1}{{\rm poly}(n)}$ (NO).
\label{problem:NparticleFkSAT}
\end{problem}

\bigbreak 
\noindent
Our main results are:

\begin{theorem}
\normalfont{\textsc{Fermionic \normalfont{2}-SAT}} \textit{$\in$} \normalfont{\textsc{P}}, \textit{and can be decided in time $O(n+m)$, where $m=|\{\Pi_{i}\}|$ denotes the number of projectors.}
\label{theorem:parityconservingP}
\end{theorem}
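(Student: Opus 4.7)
The plan is to mirror the rule-based algorithm for \textsc{Quantum} $2$-\textsc{SAT} (Bravyi's original, and the linear-time version of Arad--Santha--Sundaram--Zhang) in the fermionic setting. The starting point is a structural observation: on modes $j<l$, the parity operator acts locally as $(-1)^{n_j+n_l}$, so the four-dimensional local Fock space splits as $\mathrm{span}(\ket{00}_{jl},\ket{11}_{jl})\oplus \mathrm{span}(\ket{01}_{jl},\ket{10}_{jl})$. A parity-conserving projector $\Pi_{jl}$ therefore decomposes as a direct sum $\Pi^+_{jl}\oplus\Pi^-_{jl}$ of block projectors of rank in $\{0,1,2\}$, giving a finite catalogue of constraint types. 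Rank-4 projectors immediately force NO and rank-0 blocks are vacuous, leaving rank-$1$, -$2$, and -$3$ cases that structurally generalise the qubit \textsc{Quantum} $2$-\textsc{SAT} catalogue.

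Using this catalogue, I would define a finite set of inference rules: whenever two constraints share a mode, algebraic elimination derives either a strictly stronger local constraint (on a single mode, on the shared pair, or on a new pair of modes) or a contradiction. These rules parallel Bravyi's for qubit \textsc{2}-\textsc{SAT}, with the main new ingredient being the parity-graded structure of each block. Upon saturation without contradiction, I would show that a satisfying state can be constructed explicitly, plausibly as a fermionic Gaussian (BCS-like) state on each connected component of the constraint graph, exploiting the fact that parity-conserving operators on disjoint sets of modes commute (rather than anticommute). For the $O(n+m)$ runtime, I would adapt the union-find-based data structures of Arad--Santha--Sundaram--Zhang so that each projector triggers only a constant number of derivation steps, with the mode indices $j$ and $l$ acting as the "variables" and the block decomposition letting inference queues be maintained separately for the even and odd sectors.

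The hardest step I foresee is establishing a fermionic analog of Bravyi's product-state theorem: qubit \textsc{Quantum} $2$-\textsc{SAT} is satisfiable iff satisfiable by a product state, but parity-conserving fermionic constraints can entangle the two parity blocks of a mode-pair (e.g.\ through pairing terms enforcing a null vector of the form $\alpha\ket{00}_{jl}+\beta\ket{11}_{jl}$), so chains of such constraints may demand coordinated entanglement across many modes and one must identify the right tractable replacement for product states. Bookkeeping the fermionic sign conventions introduced by operator ordering --- e.g.\ $\ket{10}_{jl}=a_j^\dagger\ket{\mathrm{vac}}$ versus $\ket{01}_{jl}=a_l^\dagger\ket{\mathrm{vac}}$ --- consistently across iterated inferences is an additional subtle point, since sloppy sign tracking could otherwise produce spurious contradictions or apparent solutions when two constraints on overlapping pairs are combined.
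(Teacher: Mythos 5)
Your proposal correctly identifies where the difficulty lies --- finding the right replacement for Bravyi's product-state theorem --- but it does not close that gap, and the candidate replacement you suggest is actually false. There exist satisfiable instances of \textsc{Fermionic} 2-SAT whose \emph{unique} satisfying assignment is neither a product state nor a fermionic Gaussian state: the paper's Appendix \ref{app:exampleNG} exhibits a 4-mode cluster (three $\Pi^{1,q}$ clauses on a line plus one $\Pi^{02,q}$ clause, whose Jordan--Wigner image carries a $Z_3$ string) whose only satisfying assignment is a non-Gaussian, non-product state of fixed parity. So an algorithm that certifies satisfiability by constructing a BCS-like Gaussian witness per connected component cannot be correct as stated. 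The correct structural statement (Proposition \ref{lemma:global}) is a \emph{cluster-product form}: after stripping classical clauses, each connected quantum cluster carries a state labelled either by a hidden cluster particle number (for clusters that become particle-number-conserving after a particle-hole transformation) or by a cluster parity (for the rest), and the decisive quantitative control is a uniquely fermionic sign obstruction (Lemma \ref{lemma:degreeatleast3}): three genuinely quantum clauses meeting at one vertex impose amplitude relations that force $+1=-1$ for any support on hidden particle numbers $2,\ldots,n_q-2$. This restricts hPNC clusters to $N_q\in\{0,1,n_q-1,n_q\}$, rules out non-hPNC clusters with a degree-$3$ vertex and $n_q\geq 5$ entirely, and leaves only lines and loops (handled via Jordan--Wigner and the ferromagnetic Heisenberg model) plus the 4-mode non-Gaussian exception. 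Nothing in your parity-block catalogue produces this; it comes from the anticommutation signs, not from the local $\Pi^+\oplus\Pi^-$ decomposition.

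The algorithmic shape is also different in a way that matters. Bravyi-style inference works because an entangled two-qubit projector propagates a product assignment on one qubit to a product assignment on the other, so saturation yields new two-body constraints. A genuinely quantum fermionic clause does not do this: by the Partner Lemma (Lemma \ref{lemma:partners}), it forces a satisfying assignment to have support on \emph{all} occupation patterns of a given cluster particle number or parity across the whole connected cluster, so the derived constraints are global, not a finite catalogue of new two-mode projectors, and it is unclear your inference queues would terminate with anything local. The paper instead reduces to \emph{classical} 2-SAT: every hPNC cluster always admits the all-empty and all-filled classical assignments in the particle-hole-transformed basis (Corollary \ref{cor:restricted-PN}, point \ref{alwaysemptyfull}), and whenever a cluster must carry a non-classical state, every classical clause straddling that cluster must be satisfied by its classical endpoint alone; encoding these facts as 2-SAT clauses and running a linear-time classical solver gives the $O(n+m)$ bound. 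Your union-find runtime plan is plausible in spirit, but it is attached to an inference scheme whose correctness is the very thing left unproven.
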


\begin{theorem}
\normalfont{\textsc{Fermionic \normalfont{2}-SAT} with fixed parity} \textit{$\in$} \normalfont{\textsc{P}}, \textit{and can be decided in time $O(nm)$, where $m=|\{\Pi_{i}\}|$ denotes the number of projectors.} 
\label{theorem:parityconstrainedP}
\end{theorem}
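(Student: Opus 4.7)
The plan is to build on the polynomial-time algorithm of Theorem~\ref{theorem:parityconservingP}. Since every projector $\Pi_i$ commutes with $\hat{P}$, the common null space $\mathcal{N}=\bigcap_i\ker\Pi_i$ is $\hat{P}$-invariant and decomposes as $\mathcal{N}=\mathcal{N}_+\oplus\mathcal{N}_-$; the task reduces to deciding whether $\mathcal{N}_P\neq\{0\}$.

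First, I would invoke Theorem~\ref{theorem:parityconservingP} to check whether $\mathcal{N}\neq\{0\}$ and, if so, to produce an explicit witness $\ket{\phi}\in\mathcal{N}$; by $\hat{P}$-invariance of $\mathcal{N}$, we may assume $\ket{\phi}$ has definite parity $P_\phi$. If $P_\phi=P$, output YES. The interesting case is $P_\phi=-P$, where one must test whether $\mathcal{N}$ also contains a state of the opposite parity. My approach here is to probe the instance mode by mode: for each $j\in[n]$, rerun the Theorem~\ref{theorem:parityconservingP} algorithm after augmenting the projector set with either $a_j^\dagger a_j$ or $I-a_j^\dagger a_j$ (both are $1$-local parity-conserving projectors, hence valid additions to a \textsc{Fermionic 2-SAT} instance). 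From the feasibility pattern of these $2n$ auxiliary runs, I would reconstruct enough structure of $\mathcal{N}$ (its fixed, free, and correlated modes) to decide whether some state in $\mathcal{N}$ has total parity $P$. I expect the resulting combinatorial check to reduce to a linear feasibility problem over $\mathbb{F}_2$, and since each auxiliary run costs $O(n+m)$, the total cost is $O(nm)$.

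The main obstacle will be ensuring that this local probing genuinely captures the global parity structure of $\mathcal{N}$. A parity-conserving 2-mode projector may act nontrivially within $\mathrm{span}(\ket{00},\ket{11})$ or within $\mathrm{span}(\ket{01},\ket{10})$, and thus force coherent superpositions between different local particle-number configurations; a satisfying state may therefore require a genuine quantum superposition on some mode even though no eigenstate of $\hat{n}_j$ is consistent with the constraints. I will need to argue carefully that, despite such fermionic entanglement, the set of total parities achievable in $\mathcal{N}$ is nonetheless determined by the outcomes of the $2n$ local probes, and can be extracted without reverting to exponential enumeration over occupation patterns.
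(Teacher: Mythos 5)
Your reduction to $2n$ single-mode feasibility probes does not work, and the obstacle you flag at the end is in fact fatal rather than a technicality to be patched. Knowing, for each mode $j$, whether $\mathcal{N}$ contains a state with $n_j=0$ and whether it contains one with $n_j=1$ does not determine the set of achievable total parities, because the solution set of even a purely \emph{classical} 2-SAT instance is not an affine subspace over $\mathbb{F}_2$ (it is median-closed, not linear), so there is no ``linear feasibility problem over $\mathbb{F}_2$'' to extract. The paper itself gives a counterexample to your probing scheme in Appendix \ref{app:parityconstrained2SAT}: an instance made of disjoint four-variable blocks whose only satisfying assignments per block are all-zeros and all-ones. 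Every one of your $2n$ probes returns ``feasible'' (each variable can be $0$ and can be $1$), exactly as for an unconstrained instance, yet every global satisfying assignment has even parity. The feasibility pattern of single-mode probes is therefore not a sufficient statistic for the achievable parities, and no amount of care in interpreting it can recover them. This is precisely why the paper has to prove, as a standalone result (Theorem \ref{thm:parity} in Appendix \ref{app:parityconstrained2SAT}), that classical 2-SAT with fixed Hamming-weight parity is in P, via a genuinely different algorithm: compress strongly connected components of the implication graph, topologically sort, assign $\mathbb{F}_2$ weights to the compressed variables, and iterate over odd-weight variables testing whether each can be flipped without contradiction. That subroutine, not local probing, is the engine of the fixed-parity result.

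The second gap is that your probes interact badly with the quantum clusters. For a non-hPNC cluster, Corollary \ref{cor:allparticlenumberallparity} forces any satisfying assignment to have support on \emph{both} occupations of every internal mode, so adding $a_j^\dagger a_j$ or $I-a_j^\dagger a_j$ for such a $j$ makes both augmented instances infeasible; for an hPNC cluster both probes succeed (via the all-empty and all-filled assignments), but those two options have different cluster parities linked by a global constraint your probes cannot see, and whether an intermediate hidden particle number (which would flip the parity) is admissible depends on consistency conditions along lines and loops (Lemmas \ref{lem:loop} and \ref{lem:nonhPNClinesloops}) that are invisible to occupation constraints. The paper's actual proof instead works cluster by cluster on the cluster-product form (Proposition \ref{lemma:global}): it first runs the parity-constrained classical 2-SAT solver on the derived instance $\mathcal{C}$, then tries to flip the parity of a single non-hPNC cluster (whose internal modes never enter $\mathcal{C}$), and finally, for each hPNC cluster, checks via Lemma \ref{lem:ccheck} whether a parity-flipping non-classical assignment exists and re-solves a modified classical instance $\mathcal{C}_{G_q^i}$ in which the straddling clauses are replaced by their classical-literal residues. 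You would need essentially all of this machinery; the probing idea does not substitute for it.
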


\begin{theorem}
\normalfont{\textsc{PNC Fermionic \normalfont{2}-SAT} with fixed particle number} \textit{is} \normalfont{\textsc{NP}}\textit{-complete}. 
\label{theorem:NPcomplete}
\end{theorem}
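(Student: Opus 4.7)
The plan is to prove NP-hardness by a reduction from \textsc{Independent Set} and to establish NP-membership by exhibiting a polynomial-size classical witness obtainable from the algorithm of Theorem~\ref{theorem:parityconservingP}.

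For NP-hardness, given an \textsc{Independent Set} instance $(G=(V,E),k)$, I would construct a \textsc{PNC Fermionic 2-SAT} instance on $n=|V|$ fermionic modes by introducing, for every edge $\{i,j\}\in E$, the 2-local operator $\Pi_{ij}:= a_i^\dagger a_j^\dagger a_j a_i$. Reordering the fermionic operators gives $\Pi_{ij}=(a_i^\dagger a_i)(a_j^\dagger a_j)$, so $\Pi_{ij}$ is manifestly Hermitian, idempotent, and commutes with $\hat N$, hence a legitimate 2-local PNC projector. Set the target particle number $N=k$. For completeness, any independent set $S\subseteq V$ of size $k$ yields $\ket{S}:=\prod_{i\in S} a_i^\dagger \ket{\rm vac}$, which satisfies $\hat N\ket{S}=N\ket{S}$ and is annihilated by every $\Pi_{ij}$ because no edge lies inside $S$. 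For soundness, each $\Pi_{ij}$ is diagonal in the occupation basis, so for any $\ket{\psi}=\sum_{|T|=N}c_T\ket{T}$,
\begin{equation*}
\sum_{\{i,j\}\in E}\bra{\psi}\Pi_{ij}\ket{\psi} \;=\; \sum_{|T|=N}|c_T|^2\,|E(T)|,
\end{equation*}
where $E(T)$ counts the edges of $G$ with both endpoints in $T$. If no $k$-independent set exists then $|E(T)|\geq 1$ for every such $T$, whence the energy is at least $1$, well above the required $1/\mathrm{poly}(n)$ gap.

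For NP-membership, the verifier should consume a polynomial-size classical description of some satisfying $N$-particle state. For the projectors used in the reduction above the witness is just a size-$N$ independent set, checked in $O(|E|)$ time. For general PNC 2-local projectors the satisfying subspace can contain genuinely quantum states (e.g.\ $(\ket{1100}+\ket{0011})/\sqrt{2}$ is neither a basis state nor a Slater determinant), and here the plan is to leverage the efficient \textsc{Fermionic 2-SAT} algorithm of Theorem~\ref{theorem:parityconservingP}. That algorithm propagates 2-local PNC constraints into a structural description of the satisfying subspace, which decomposes into independent blocks; within each block the satisfying states admit a polynomial-size description (e.g.\ Slater-determinant orbital data or an equivalent compact ansatz). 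The NP witness packages, block by block, the local particle numbers (which must sum to $N$) together with the compact state descriptions, and the verifier checks the null-space condition and the total particle number in polynomial time.

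The main obstacle is the NP-membership direction in full generality. Unlike classical 2-SAT with a Hamming-weight constraint, the quantum satisfying subspace need not be spanned by computational-basis configurations, so one must verify that the block decomposition produced by the \textsc{Fermionic 2-SAT} algorithm is compatible with particle number and that each block has polynomial-size satisfying states for every locally attainable particle number. The NP-hardness direction is comparatively clean and rests only on the elementary observation that the diagonal projector $a_i^\dagger a_j^\dagger a_j a_i$ fits inside the PNC 2-local framework, embedding the classical NP-complete problem \textsc{Independent Set} into the fermionic problem.
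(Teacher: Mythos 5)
Your NP-hardness reduction is correct and essentially interchangeable with the paper's: you embed \textsc{Independent Set} via the diagonal projectors $a_i^{\dagger}a_j^{\dagger}a_j a_i = a_i^{\dagger}a_i\,a_j^{\dagger}a_j$ (the classical clause $\Pi^2_e$ excluding occupation $11$), whereas the paper embeds \textsc{Vertex Cover} (equivalently weighted 2-SAT) via monotone clauses $(x_i \vee x_j)$ excluding $00$. These are complementary problems, and your soundness computation --- the energy of any $N$-particle state is a convex combination of the edge counts $|E(T)|$, hence at least $1$ in the NO case --- is clean and complete.

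The genuine gap is in NP-membership, which you yourself flag as ``the main obstacle.'' Your plan has the witness carry ``compact state descriptions'' of the satisfying states within each block and has the verifier ``check the null-space condition,'' but you do not establish that such descriptions exist for every attainable cluster particle number, nor that the null-space check on them runs in polynomial time. Indeed, on a line cluster the unique $N_q$-particle satisfying state is the image of a permutation-symmetric state under a product of invertible diagonal maps (Lemma \ref{lem:line}); for $1<N_q<n_q-1$ it is neither a basis state nor a Slater determinant, so ``Slater-determinant orbital data'' does not cover it. The paper sidesteps this entirely: in the PNC case all quantum clauses are of type $\Pi^{1,q}_e$, so every cluster is PNC, and by Proposition \ref{lemma:global} a satisfying assignment is determined up to the list of integers $W=\big(\{N_q^{i}\},\{x_k\}\big)$; that list \emph{alone} is the witness. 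The verifier never reconstructs a state --- Lemma \ref{lem:ccheck} checks in linear time whether each claimed $N_q^{i}$ is an allowed cluster particle number, Corollary \ref{cor:restricted-PN} tells the verifier which cluster modes must be treated as simultaneously empty and filled so that all classical clauses can be checked combinatorially, and finally the verifier checks that the total particle number is $N$. To close your argument you would need to make exactly this move: replace the state descriptions by the quantum numbers and invoke the cluster-check machinery, at which point your proof coincides with the paper's.
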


As part of the proof of Theorem \ref{theorem:parityconstrainedP}, we show that classical 2-SAT with fixed (Hamming weight) parity can be solved in $O(nm)$ time, see Theorem \ref{thm:parity} in Appendix \ref{app:parityconstrained2SAT} for details. To the best of our knowledge, this has not been proved before.

To provide further context for our results, we also provide a straightforward proof that \textsc{Fermionic} $k$-\textsc{SAT} is \textsc{QMA}$_1$-hard for $k=9$ in Section \ref{sec:kSAT}.

\subsection{Overview of paper and main ideas}

\textsc{Fermionic} $2$-\textsc{SAT} instances are defined on a graph $G$ whose vertices are fermionic modes and whose edges correspond to projectors involving pairs of modes. These projectors are of rank at most three (if there is to be a satisfying assignment) in the two fermionic mode subspace, consisting of at most three rank-1 projectors. These rank-1 projectors --- we will refer to them as \textit{clauses} --- can be either genuinely quantum clauses, or classical clauses which exclude a particular occupation of the two modes. When {\em restricted to a two-qubit space}, a genuinely quantum clause can exclude support on a state such as $\alpha\ket{00}+\beta \ket{11}$ (with $|\alpha|,|\beta| \neq 0$) which we call a $\Pi^{02,q}_e$ clause, or exclude support on a state such as $\alpha\ket{01}+\beta \ket{10}$ (with $|\alpha|,|\beta| \neq 0$) which we call a $\Pi^{1,q}_e$ clause. Since projectors in \textsc{Fermionic} $2$-\textsc{SAT} are parity conserving by definition, these are the only allowed quantum clauses besides the classical clauses. Note that we mention the caveat {\em when restricted to a two-qubit space} since fermionic operators when viewed as acting on an $n$-qubit space are generally non-local due to Jordan-Wigner strings. 

Let us discuss the ideas behind the proofs of our main results of Section \ref{sec:mainresults}. Given the graph $G$, we first remove all classical clauses, leaving us possibly with a set of disconnected graphs which we call quantum clusters, see e.g. Figure \ref{fig:quantumclusters} in Section \ref{sec:qclusters}. We will refer to modes that are not involved in any quantum clauses as classical modes. 
It turns out to be useful to distinguish at least two types of quantum clusters, namely ones that are so-called hidden particle-number-conserving (hPNC) and ones which are not (non-hPNC). The hPNC clusters can be brought to a form in which they only contain $\Pi^{1,q}_e$ quantum clauses by a particle-hole transformation defined in Section \ref{sec:ph-trafo}: after this transformation, every clause commutes with the cluster particle number operator, hence the terminology. In Section \ref{section:prelims} we introduce and discuss these various concepts.

In Section \ref{sec:eff} we start by examining the effect of `constraint propagation' by $\Pi^{1,q}_e$ and $\Pi^{02,q}_e$ quantum clauses, see Lemma \ref{lemma:partners} and its Corollary \ref{cor:allparticlenumberallparity}. This leads up to a characterization of the satisfying assignments: wlog, they are of --- what we call --- \textit{cluster-product form}, see Proposition \ref{lemma:global}. Compared to \textsc{Quantum} 2-SAT which, except for on some rank-3 projectors, always has a product satisfying assignment (if any) \cite{ASSZ:linearSAT}, this form is more general, but also restricted in an interesting way. For assignments with or without overall fixed parity, we will show that one can limit oneself to cluster-product states which are products of operators (acting on the vacuum) which create (1) classical occupations on so-called classical modes, (2) Gaussian states on hPNC clusters, (3) Gaussian states on non-hPNC clusters which are lines or loops, and (4) non-Gaussian states on certain 4-fermion non-hPNC clusters. 
To obtain this result, we proceed in Section \ref{sec:ferm-qubit} to exclude certain quantum numbers, ---these are hidden cluster particle number or cluster parity which uniquely fix a cluster satisfying assignment---, on quantum clusters. Crucial in this is Lemma \ref{lemma:degreeatleast3} which provides a uniquely fermionic simplification, which leaves us to analyze quantum clusters which are lines and loops in Section \ref{sec:linesandloops}. Fermions on a line do map directly onto qubits, hence we can use some proof techniques borrowed from \textsc{Quantum} 2-SAT investigations.
Importantly, these results allow one to efficiently (with effort linear in the number of modes and clauses) verify which quantum numbers, i.e. hidden cluster particle number and cluster parity, are allowed on quantum clusters in the cluster-product form. This is expressed in Lemma \ref{lem:ccheck}. 

To prove Theorems \ref{theorem:parityconservingP} and \ref{theorem:parityconstrainedP} in Section \ref{sec:eff}, we bring back all classical clauses in the graph $G$ and reduce the problem of deciding \textsc{Fermionic} 2-SAT after several preprocessing steps to deciding a classical 2-SAT instance (thereby proving Theorem \ref{theorem:parityconservingP}). For Theorem \ref{theorem:parityconstrainedP}, there is the additional constraint of asking for a satisfying assignment of fixed parity. In Appendix \ref{app:parityconstrained2SAT}, we provide an efficient algorithm for deciding classical 2-SAT where one asks for a solution of fixed Hamming weight parity (Theorem \ref{thm:parity}), which is used as a subroutine in the algorithm for the fermionic problem (Theorem \ref{theorem:parityconstrainedP}). This result on parity-constrained classical 2-SAT may not be surprising, but we are unaware of any previous algorithm or proof. Crucial in the reductions in the proofs of Theorems \ref{theorem:parityconservingP} and \ref{theorem:parityconstrainedP} is the fact that non-classical assignments on quantum clusters heavily constrain how classical clauses which straddle the quantum cluster and a classical mode can be satisfied, namely the clause has to be true by only the choice of the classical mode, since the non-classical assignment requires that each mode inside the cluster should be allowed to be both empty (0) \textit{and} filled (1). The nature of the hPNC and non-hPNC clusters asks for a different treatment in the proofs of these theorems. 

In Section \ref{sec:NPC}, we examine the complexity of particle-number-conserving \textsc{Fermionic} 2-SAT where we ask for an assignment with given fixed particle number, and prove Theorem \ref{theorem:NPcomplete}. Given the efficiently checkable cluster-product form, it may not be surprising that this problem is contained in NP. It can be proved to be NP-complete since the class of problems contains the weighted 2-SAT problem which asks for a 2-SAT satisfying assignment with fixed Hamming weight, as a special case. In Section \ref{sec:kSAT}, we prove a quantum hardness result for \textsc{Fermionic} $k$-SAT, still leaving a considerable gap between our efficient classical algorithms for $k=2$ and this hardness result for $k=9$. We discuss open problems in the Discussion Section \ref{sec:discuss}.

\subsubsection{Differences with Quantum 2-SAT}
We take a paragraph to reflect on the difference between our \textsc{Fermionic} 2-SAT findings and \textsc{Quantum} 2-SAT. 

For \textsc{Quantum} $k$-SAT and \textsc{Fermionic} $k$-SAT, the projectors act nontrivially on $k$ qubits resp. involve $k$ fermionic modes. Fermion-to-qubit mappings allow for the projectors of a given instance of \textsc{Fermionic} $k$-SAT to be mapped to qubit projectors. However, each fermionic projector involving $k$ modes gets mapped ---in a general setting--- onto a $q$-qubit projector, where $q\geq k$ \cite{bravyi00ferm}. Therefore, one cannot deduce the complexity of \textsc{Fermionic} $k$-SAT directly from the complexity of \textsc{Quantum} $k$-SAT. 

Central to the solution of \textsc{Quantum} $2$-SAT is the fact that a satisfying assignment, if it exists, can be taken to be a product state (except for two-qubit projectors whose null-space is a single entangled state), see e.g. Theorem 4 in \cite{ASSZ:linearSAT}. We do not have such a statement for \textsc{Fermionic} 2-SAT. In fact, we find that satisfying assignments on non-hPNC clusters can include small 4-fermion non-Gaussian states. Furthermore, for \textsc{Fermionic} $2$-SAT, quantum clusters with vertices with degree at least 3 obey specific constraints which relate to uniquely fermionic signs, see Lemma \ref{lemma:degreeatleast3} in Section \ref{sec:ferm-qubit}, excluding many particle numbers (but leaving only Gaussian states for hPNC clusters). In contrast, for quantum 2-SAT, product state assignments are generally built from superpositions of all particle numbers.
Indirectly, Lemma \ref{lemma:degreeatleast3} also limits the use of any possible non-Gaussian assignment: non-Gaussian states are only \textit{sometimes} needed on non-hPNC clusters of at most 4 fermionic modes (and any state of three or fewer fermionic modes is always Gaussian \cite{MCT}). In Appendix \ref{app:exampleNG} we work through this 4-fermion non-Gaussian example explicitly, showing how the departure from product state assignments for \textsc{Quantum} 2-SAT comes about.


\section{Preliminaries} 
\label{section:prelims}

In this section we review fermionic language and concepts, and introduce several tools for the treatment of assignments on quantum clusters.


For a subset $S\subseteq [n]$ and $|S|=k$, we  define the operator $a_S^{\dagger} := a_{i_{1}}^{\dagger}a_{i_{2}}^{\dagger}\ldots a_{i_{k}}^{\dagger}$, where $i_{1}<i_{2}<\ldots <i_{k}$. This ordering of the fermionic modes will later be conveniently chosen depending on the \textsc{Fermionic} 2-SAT instance in Section \ref{sec:qclusters}. We can write any $n$-mode fermionic state as 
\begin{align}
   \ket{\psi} =  \sum_{S\subseteq [n]}\alpha_{S}\ket{S}, \; \ket{S} := a_S^{\dagger}\ket{\rm vac},
\end{align} 
where $\alpha_S \in \mathbb{C}$ is the amplitude for state $\ket{S}$. Each state $\ket{S}$ corresponds to a configuration of mode {\em occupations} $\{x_i\}_{i=1}^n$, i.e. the bit $x_i=1$ (resp. $x_i=0$) signifies that mode $i$ is occupied, so $i\in S$, (resp. unoccupied, so $i \notin S$). When we speak of a \textit{classical assignment} in the remainder of this work, we mean a state of type $\ket{S} = a_{S}^{\dagger}\ket{\rm vac}$.  


One can map an (ordered) set of fermionic creation and annihilation operators via the Jordan-Wigner (JW) transformation onto a set of qubit operators: \begin{align}
        a_i^{\dagger} \stackrel{J.W.}{=} Z_1\ldots Z_{i-1}\sigma_i^+ ,
        \label{eq:JW}
    \end{align} and the fermionic vacuum state $\ket{\rm vac} \rightarrow \ket{00\ldots 0}$, using the definition $\sigma_i^+=\ket{1}\bra{0}_i$ (and similarly $\sigma_i^-=\ket{0}\bra{1}_i$).
    
Gaussian fermionic states are thermal states and eigenstates of \textit{quadratic} fermionic Hamiltonians --- corresponding to so-called non-interacting fermions --- and these states can be fully characterized by a $2n \times 2n$ correlation matrix 
\begin{align}
    M_{ij}=\frac{i}{2}{\rm Tr}([c_i, c_j] \rho),\; c_{2i-1}=a_i+a_i^{\dagger}, \; c_{2i}=i(a_i-a_i^{\dagger}),\;  \{c_i, c_j\}=2\delta_{ij}I.
    \label{eq:cor-mat}
\end{align}
 A state is a pure Gaussian state iff its correlation matrix obeys $M^T M=I$, see e.g. \cite{MCT, Kraus_2009}. The $n$-mode states with particle number $\hat{N}=0$ (vacuum), $\hat{N}=1$ (single-particle "Slater determinant" state), $\hat{N}=n-1$ (single-hole state) and $\hat{N}=n$ (all-filled) are Gaussian states. The classical states $\ket{S}$ are also examples of Gaussian states. See more information on Gaussian states in \cite{Surace_2022}. 

\subsection{Characterization of the Fermionic 2-SAT projectors} 
\label{sec:char}
Let $G=(V,E)$ be a graph such that the vertices $v \in V$ label the modes ($|V| = n$) and each edge is associated with a projector $\{\Pi_e\}_{e\in E}$. Since the projectors in \textsc{Fermionic} 2\textsc{-SAT} are parity conserving, any projector $\Pi_{e = (j,k)}$ is a polynomial in $a_{j},a_{j}^{\dagger}, a_{k}$ and $a_{k}^{\dagger}$ of even degree.
Obviously, since each projector $\Pi_e$ on modes $j$ and $k$ commutes with the overall parity operator $\hat{P}$, it also commutes with the edge parity $P_{jk}=(-1)^{a_j^{\dagger}a_j+a_k^{\dagger}a_k}$. Similarly, if $\Pi_e$ commutes with $\hat{N}$, it commutes with the edge particle number $\hat{N}_{jk}=\hat{N}_j+\hat{N}_k$. Hence, if we consider a two-mode projector in isolation, it can be represented as a \textsc{Quantum} 2-SAT projector on two qubits which is parity-conserving (or, in special cases, also particle number-conserving). This means that if the projector, say of rank-1 in the 2-qubit space, projects onto a product state, the product state can only be a computational basis state. And if it projects onto an entangled state, then in the number-conserving case, it can only project onto states such like $\alpha \ket{01}+\beta \ket{10}$. In the merely parity-conserving case, it can also project onto a state $\alpha \ket{00}+\beta \ket{11}$ for some $\alpha, \beta$. 
This picture is entirely preserved when we consider the action of the fermionic projectors onto $n$ modes. In principle, one can apply a Jordan-Wigner transformation, see Eq.~\eqref{eq:JW}, and represent these fermionic projectors as 2-qubit projectors to which strings of Pauli $Z$'s are appended, so that the qubit projectors are not 2-local, but this does not affect the number- or parity-conserving properties. In some sense, we thus consider a simplified form of \textsc{Quantum} 2-SAT due to number- and parity-conserving constraints, but the fermionic nature of the problem, i.e. the effect of the Pauli $Z$-strings through the Jordan-Wigner transformation, makes the problem genuinely different from \textsc{Quantum} 2-SAT.

Let's now discuss the nature of rank-1 projectors\footnote{They have rank-1 only in the 2-mode subspace, not in the $n$-mode fermionic space, but for convenience we use this term throughout this paper. Same for statements on higher rank projectors which have a rank $r\leq 4$ in the 2-mode subspace.} in fermionic language. With the two-qubit picture in mind, we have two different types of rank-1 projectors for an edge $e=(j,k)$ (with convention $j < k$). The only possible rank-1 projector onto an odd-parity state is
\begin{multline}
\Pi^1_{e}=
|\beta_{e}|^2a_{j}^{\dagger}a_{j}a_{k}a_{k}^{\dagger}+\beta_{e} \gamma_{e}^* a_{j}^{\dagger}a_{k} - \gamma_{e} \beta_{e}^{*}a_{j}a_{k}^{\dagger} + |\gamma_{e}|^{2}a_{j}a_{j}^{\dagger}a_{k}^{\dagger}a_{k} \\ \quad\rightarrow\: \;\mbox{"project onto/exclude $\beta_e\ket{10}+\gamma_e\ket{01}$"},
\label{eq:projectordef_1}
\end{multline}
with $|\beta_e|^2+|\gamma_e|^2=1$. When $\Pi^1_e$ does not project onto just "01" or "10", we call this a genuinely quantum clause and write $\Pi^{1,q}_{e}$, otherwise we refer to them as classical clauses $\Pi^{1,c}_e$. The only type of rank-1 projector onto an even-parity state is $\Pi^{02}_{e}$: 
\begin{multline}
    \Pi^{02}_{e} = |\alpha_{e}|^2 a_{j}a_{j}^{\dagger}a_{k}a_{k}^{\dagger} - \alpha_{e}\delta_{e}^{*}a_{j}a_{k} + \delta_{e}\alpha_e^{*}a_{j}^{\dagger}a_{k}^{\dagger} + |\delta_{e}|^{2}a_{j}^{\dagger}a_{j}a_{k}^{\dagger}a_{k} \\ \quad\rightarrow\: \;\mbox{"project onto/exclude $\alpha_e\ket{00}+\delta_e\ket{11}$"},
    \label{eq:projectordef_02}
\end{multline}
with $|\alpha_e|^2+|\delta_e|^2=1$. Again, when $\Pi^{02}_{e}$ does not project onto just "00" or "11", it is a genuinely quantum clause and we write $\Pi^{02,q}_{e}$, otherwise it is a classical projector $\Pi^{02,c}_{e}$. 

For convenience, we introduce the following shorthand for a given quantum clause on an edge $e$:
\begin{align}
u_{e} :=&\: \gamma_{e}^{*}/\beta_{e}^{*} \in \mathbb{C},\text{ for }\Pi^{1,q}_{e},\nonumber \\
v_{e} :=&\: \alpha_{e}^{*}/\delta_{e}^{*} \in \mathbb{C},\text{ for }\Pi^{02,q}_{e},
\label{eq:defue}
\end{align}
which we shall only use later on when $\beta_{e}, \delta_{e} \neq 0$, i.e. in Lemmas \ref{lemma:partners}, \ref{lemma:degreeatleast3}, \ref{lem:loop} and \ref{lem:nonhPNClinesloops}.


For PNC \textsc{Fermionic} 2-SAT, we only have projectors of type $\Pi^{1}_{e}$ in Eq. 
\eqref{eq:projectordef_1}, and only classical projectors 
\begin{align}
\Pi^0_{e}= &\: a_{j}a_{j}^{\dagger}a_{k}^{}a_{k}^{\dagger} \quad\rightarrow\: \;\mbox{"project onto/exclude $\ket{00}$"}, \nonumber \\ 
\Pi^2_{e}= &\: a_{j}^{\dagger}a_{j} a_{k}^{\dagger}a_{k} \quad\rightarrow\: \;\mbox{"project onto/exclude $\ket{11}$"}.
\label{eq:pi0andpi2}
\end{align}
Proposition \ref{prop:02invariance} in Appendix \ref{app:prelims} proves that these projectors are invariant under taking any unitary linear combination of the two modes $j$ and $k$, but this understanding also follows from the two-qubit picture above.

\subsubsection{Higher-rank projectors}
\label{sec:higherrank}
The projectors appearing in \textsc{Fermionic} 2\textsc{-SAT} are of rank at most three and can be obtained by summing the unit-rank projectors above as long as these project onto orthogonal states. This is true since the $1$-eigenvalue sub-space of a projector $\Pi_{e}$ is spanned by eigenstates of the commuting edge parity operator $\hat{P}_{e=(j,k)}$, or, in the particle-number-conserving case, by eigenstates of the commuting edge number operator $\hat{N}_{e=(j,k)}$. By viewing the action of the rank-1 projectors in the two-qubit space, it is clear that there exist two orthogonal projectors in the $\{\ket{01,10}\}$ space and two orthogonal projectors in the $\{\ket{00},\ket{11}\}$ space. 

Observe that a rank-2 projector which is the sum of two orthogonal $\Pi^{02,q}_e$ clauses is a rank-2 classical projector (a sum of two classical clauses, excluding both "00" and "11"), and similarly a rank-2 projector which is the sum of two orthogonal $\Pi_e^{1,q}$ clauses is classical (excluding both "01" and "10"). 
We will call an edge {\em a classical edge} ($\in E_c$) when its at-most-rank-3 projector is constructed from rank-1 classical clauses, otherwise we call it a quantum edge ($\in E_q$), thus the edgeset can be written as $E=E_c \cup E_q$. 
Hence, an edge in $E_q$ consists of one --- $\Pi^{02,q}_e$ \textit{or} $\Pi^{1,q}_e$ --- or two --- $\Pi^{02,q}_e$ \textit{and} $\Pi^{1,q}_e$ --- quantum clauses. Otherwise, the edge belongs to $E_c$.
It is worthwhile to note that a rank-2 projector $\Pi_e^{1,q}+\Pi_e^{02,q}$ only consists of quadratic terms in creation and annihilation operators, i.e.~the quartic contributions cancel as the quartic contribution $\propto a_j^{\dagger}a_j a_k^{\dagger}a_k$ of each quantum clause is independent of the choice of $\alpha_e,\beta_e,\gamma_e,\delta_e$ and opposite in sign for a $\Pi^{02,q}_e$ clause versus a $\Pi^{1,q}_e$ clause as can be seen from Eqs.~\ref{eq:projectordef_1} and \ref{eq:projectordef_02}. This fact will be used in Lemma \ref{cor:onlygaussianSAs}. 

\subsection{Particle-hole transformation}
\label{sec:ph-trafo}
We define the unitary particle-hole transformation $K_{S}$ on a subset $S \subseteq [n]$ of the $n$ modes. The transformation interchanges the creation of a particle in $S$ with the annihilation of a particle, i.e. the creation of a hole or absence of a particle, in $S$:
\begin{align}
    \forall j\in S \quad \rightarrow&\: \quad K_{S}a_{j}K_{S}^{-1} = \bigg[\hspace{-0.1cm}\prod_{\substack{k<j \\ \text{ s.t. } k\in S}} \hspace{-0.3cm} (-1) \hspace{0.1cm}\bigg] \hspace{0.1cm} a_{j}^{\dagger}, \hspace{0.5cm} K_{S}a_{j}^{\dagger}K_{S}^{-1} = \bigg[\hspace{-0.1cm}\prod_{\substack{k<j \\ \text{ s.t. } k\in S}} \hspace{-0.3cm} (-1) \hspace{0.1cm}\bigg] \hspace{0.1cm} a_{j}, \nonumber \\
    \forall j\notin S \quad \rightarrow&\: \quad K_{S}a_{j}K_{S}^{-1} = \bigg[\hspace{-0.1cm}\prod_{\substack{k<j \\ \text{ s.t. } k\in S}} \hspace{-0.3cm} (-1) \hspace{0.1cm}\bigg] \hspace{0.1cm} a_{j}, \hspace{0.5cm} K_{S}a_{j}^{\dagger}K_{S}^{-1} = \bigg[\hspace{-0.1cm}\prod_{\substack{k<j \\ \text{ s.t. } k\in S}} \hspace{-0.3cm} (-1) \hspace{0.1cm}\bigg] \hspace{0.1cm} a_{j}^{\dagger}.
\label{eq:ph_transformation}
\end{align}
$K_S$ also transforms the vacuum state $\ket{{\rm vac}}$, i.e. $\ket{\rm vac} \rightarrow K_{S}\ket{\rm vac}$, 
where the new vacuum state corresponds to the all-filled state for the modes in the set $S$, and the original vacuum for the modes which are not in the set $S$. The $\pm$ signs in the transformation $K_{S}$ in Eq.~\eqref{eq:ph_transformation} can be understood by performing a Jordan-Wigner transformation which leads to the manifestly-unitary transformation $K_{S} = \bigotimes_{k\in S} X_{k}$, with $X_{k}$ denoting Pauli-$X$ on qubit $k$. When $S=\emptyset$, $K_{S=\emptyset}=I$.

When we apply the particle-hole transformation $K_{S}$ to the genuinely quantum projectors in Eqs.~\eqref{eq:projectordef_1} and  \eqref{eq:projectordef_02}, it can interchange $\Pi_e^{1,q}$ clauses and $\Pi_e^{02,q}$ clauses, depending on the edge $e$ and the subset $S$. In Appendix \ref{app:prelims} we give the effect of the transformation in mathematical detail (which is not extremely insightful). 

An additional property of the particle-hole transformation $K_S$ is that it preserves Gaussianity of a state $\rho$. We can express the unitary transformation $K_S$ on the Majorana operators $c_{2i-1}\stackrel{J.W.}{=}Z_1\ldots Z_{i-1}X_i, c_{2i}\stackrel{J.W.}{=}Z_1\ldots Z_{i-1}Y_i$ (consistent with Eqs.~\ref{eq:JW} and ~\ref{eq:cor-mat}). Then $K_S$ is some product of Majorana operators $\Pi_{i \in S_{\rm Maj}} c_i$, using $X_k=\big[\prod_{1\leq j < k}(-ic_{2j-1}c_{2j})\big]c_{2k-1}$. We have $K_S c_{k} K_{S}^{-1}=\pm c_k$, with the sign depending on whether $k\in S_{\rm Maj}$ and $|S_{\rm Maj}|$, inducing a (diagonal) orthogonal transformation in the $\{c_i\}_{i=1}^{2n}$ basis. Any orthogonal transformation $R\in O(2n)$ preserves the Gaussianity of a pure fermionic state as it preserves the property that the correlation matrix $M$ still obeys $M^T M=I$ \cite{MCT}, hence $K_S$ preserves Gaussianity. Note that $K_S\hat{P} K_S^{-1}=\pm \hat{P}$, switching the overall parity $P$ when $|S_{\rm Maj}|$ is odd. 


\subsection{Quantum clusters}
\label{sec:qclusters}

The graph $G=(V,E)$ describing the fermionic 2-SAT problem (where an edge represents a rank-1, 2 or 3 projector) can contain zero, one or more {\em quantum clusters}, see Fig.~\ref{fig:quantumclusters}, defined as follows.

\begin{definition}[Quantum clusters]
Take the subgraph $G_{\rm sub}$ of $G$ obtained by removing all classical clauses in $E_c$, and removing all vertices that are not touching any $e\in E_{q}$. We define the quantum clusters $G_{q}^{i} = (V^{i}_{q},E^{i}_{q})$ (for $i=1,\ldots,Q$, and with $n_q^{i}:=|V^{i}_{q}|$) to be the connected components of $G_{\rm sub}$. The cluster particle number equals $\hat{N}^i_q = \sum_{j\in V^i_q} a_j^{\dagger} a_j$ and the cluster parity is $\hat{P}^i_q = (-1)^{\sum_{j\in V^i_q} a_j^{\dagger} a_j}$.
\label{def:quantumclusters}
\end{definition}

\begin{figure}[t]
\centering
\includegraphics[width=0.40\textwidth]{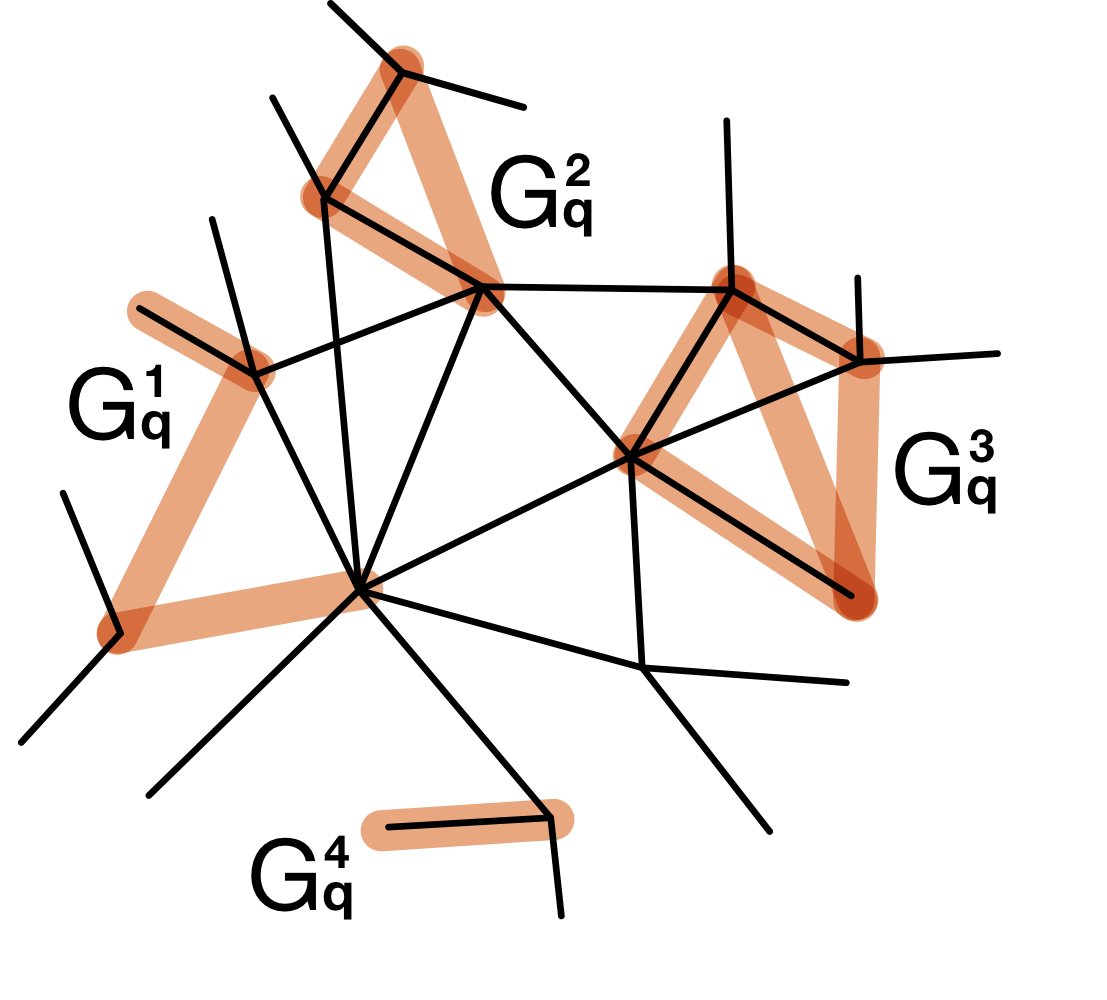}
\caption{Several connected subgraphs $G_q^i$ of $G$ with genuinely quantum clauses (orange), the other clauses (black) are classical clauses. The graph $G_{\rm sub}$ is obtained by removing all classical edges and vertices not touching quantum edges (referred to as classical modes), leaving the disconnected quantum clusters $G_q^i$.}
\label{fig:quantumclusters}
\end{figure}

If the quantum cluster contains only $\Pi^{1,q}_e$ clauses, we can call the cluster particle-number-conserving, and states with a fixed cluster particle number span the set of possible satisfying assignments in the cluster. Some clusters are not directly particle-number conserving, but can be brought to this form by a unitary particle-hole transformation defined in Section \ref{sec:ph-trafo}. We call these clusters {\em hidden particle number conserving} (hPNC), see also Fig.~\ref{fig:Tqi_graph}.

\begin{definition}[Hidden Particle Number Conserving Cluster (hPNC)]
A quantum cluster $G_q$ on the set of vertices $V_q$ is hidden particle number conserving (hPNC) if we can divide the set of vertices into two sets, $V_q=A \cup B$, and all quantum clauses connecting vertices from $A$ with vertices from $B$ are $\Pi^{02,q}_e$ clauses, while all quantum clauses connecting vertices in either $A$ or $B$ internally, are $\Pi^{1,q}_e$ clauses. A special case is when there are no $\Pi^{02,q}_e$ clauses whatsoever in the quantum cluster, ---$A$ is the set of all vertices, and $B=\emptyset$---, and $G_q$ is particle number conserving (PNC). 
\label{def:hiddenPNCcluster}
\end{definition}

Observe that in a hPNC cluster $G_q$, one can assume wlog that all edges are single clauses (rank-1): if they are of rank-2 or higher, than either, they are classical edges (hence they had been removed from the graph $G$, hence cluster $G_q$), or they are a sum of a $\Pi^{02,q}_e$ clause and a $\Pi^{1,q}_e$ clause, and hence the cluster cannot be hPNC. Note that if we have the cluster parity operator $\hat{P}_q$ for $n_q$ modes, then the particle-hole transformation in Eq.~\eqref{eq:ph_transformation} for a subset $S \subseteq [n_q]$, transforms $K_S \hat{P}_q K_S^{-1}=\pm \hat{P}_q$: hence an eigenstate of cluster parity is still an eigenstate of cluster parity after a particle-hole transformation, but possibly with a flipped eigenvalue.

In order to efficiently determine whether $G_q$ is hPNC and examine some other properties later, it will be useful to define a ``maximal spanning" subgraph $T_q$ which is hPNC as follows. The idea is that this (non-unique) subgraph $T_q$ can be obtained via a greedy approach and can be used to efficiently decide whether a quantum cluster $G_q$ is hPNC. 

\begin{definition}[Maximal spanning hPNC subgraph of a quantum cluster]
Given a quantum cluster $G_q$ (with $n_q$ vertices and $|E_q|$ quantum clauses/edges), the graph $T_q$ is a graph with the same vertices as $G_q$ but a subset of quantum clauses of $G_q$ such that (1) $T_q$ is hPNC, (2) The quantum clauses in $T_q$ generate a spanning tree for the graph $G_q$ (i.e. all vertices in $G_q$ are connected by some quantum clauses in $T_q$) and (3) $T_q$ is maximal, i.e. it becomes non-hPNC when adding any more clauses from $G_q$.
Such (generally non-unique) $T_{q}$ can be constructed with effort $O(n_q+|E_{q}|)$ as follows. 
\vspace{0.1cm}

\noindent
$\to$ Start with $T_{q}$ containing all $n_q$ vertices, but no clauses. Pick a vertex and assign it to $A$. 

\noindent
$\to$ For each vertex adjacent to the current one, add it to $A$ if the connecting clause is of type $\Pi^{1,q}_{e}$, and add it to $B$ if it is of type $\Pi^{02,q}_{e}$ (if it is the sum of $\Pi^{1,q}_{e}$ and $\Pi^{02,q}_{e}$, pick one of them at random). Add these connecting clauses to $T_{q}$ and move to the neighbor vertex.

\noindent
$\to$ Repeat this procedure for all vertices adjacent to the previous vertices (if they have not yet been assigned to $A$ or $B$) until all vertices are in the bipartition $A \cup B$. 

\noindent
$\to$ Then, for all clauses in $G_{q}$ which are not yet in $T_{q}$, do the following. Add the clause to $T_{q}$ if it is a $\Pi^{02,q}_{e}$ clause and the vertices at its ends are in opposite sets (i.e., one in $A$ and the other in $B$). Conversely, include the clause in $T_{q}$ if it is a $\Pi^{1,q}_{e}$ clause and the vertices at its ends have equal assignments.

\vspace{0.1cm}
\noindent
If after this efficient procedure, no clauses are left in $G_q$, then obviously $T_q=G_q$ and $G_q$ is hPNC.
\label{def:Tgraph}
\end{definition}

\begin{figure}[t]
\centering
\includegraphics[width=0.65\textwidth]{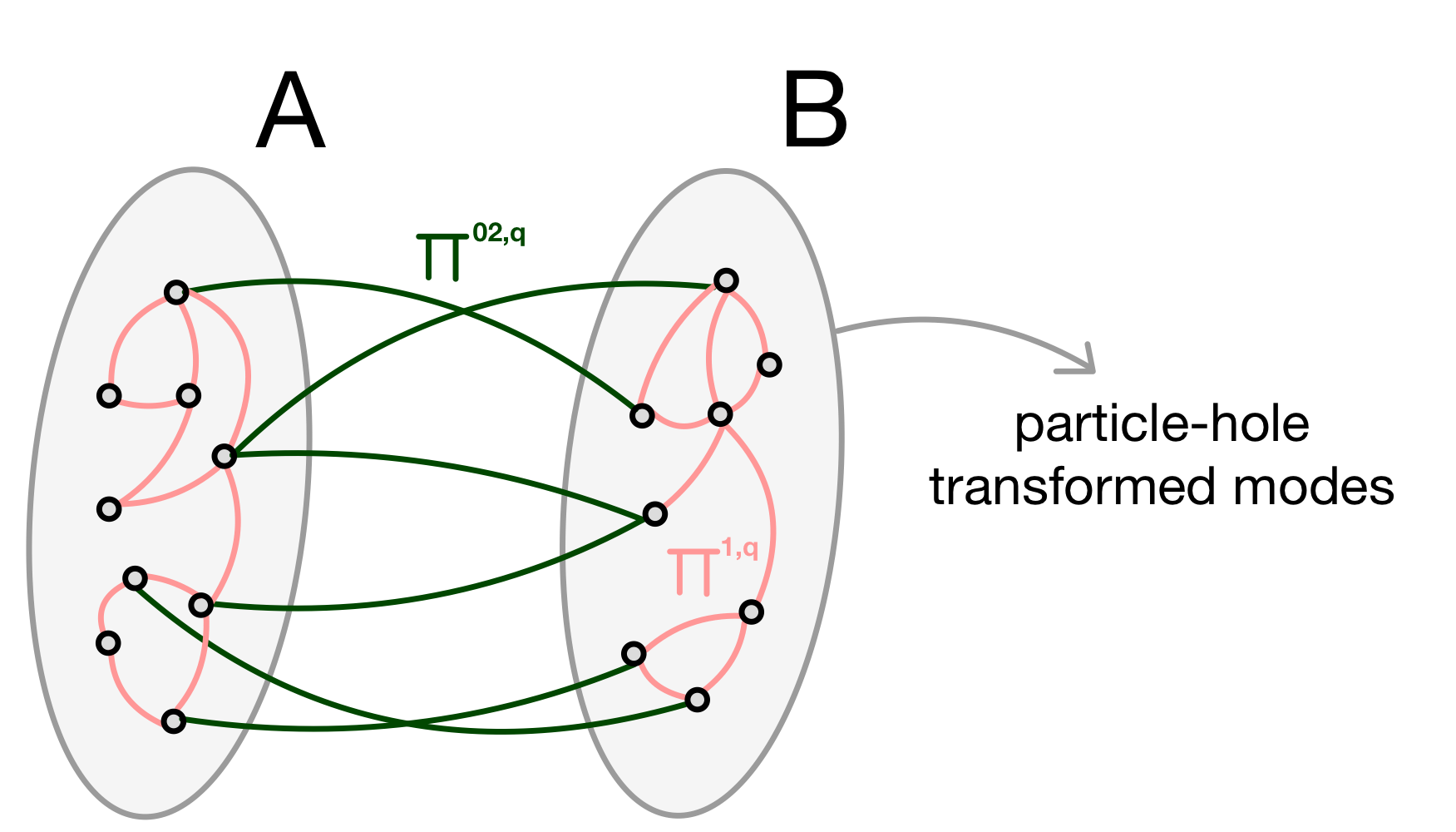}
\caption{Example of a hidden particle number conserving quantum cluster $G_{q}$ as given in Definition \ref{def:hiddenPNCcluster} where the green quantum edges are $\Pi_e^{02,q}$ clauses and the pink quantum edges are $\Pi_e^{1,q}$ clauses. If one does a particle-hole transformation $K_B$ on all the fermionic modes in $B$ (or $K_A$ on $A$), the green $\Pi_e^{02,q}$ clauses become $\tilde{\Pi}_e^{1,q}$ clauses, while the pink $\Pi_e^{1,q}$ clauses become $\tilde{\Pi}_e^{1,q}$ clauses, see Eq.~\eqref{eq:ph-edge}, thus staying of particle-conserving type. Hence, after the particle-hole transformation, all clauses in $G_q$ are particle-number-conserving, and the space of satisfying assignments is then spanned by eigenstates $\ket{\psi}$ of the so-called {\em hidden cluster particle number} $\hat{N}_q$, obeying $ \hat{N}_q (K_B \ket{\psi})=N_q (K_B \ket{\psi})$. Note that for a hPNC cluster with $B\neq \emptyset$, a satisfying assignment $\ket{\psi}$ will generally not be an eigenstate of cluster particle number in the original basis, but an eigenstate of the cluster parity.}
\label{fig:Tqi_graph}
\end{figure}

For later reference, we state a simple fact about a choice for picking the starting vertex in $T_q$:
\begin{fact}
 Suppose a quantum cluster $G_{q}$ contains a vertex with degree at least $3$, then a maximal spanning hPNC subgraph $T_q$ can be efficiently constructed such that it also contains a vertex of degree at least $3$, simply by choosing the initial vertex in the greedy approach in Definition \ref{def:Tgraph} to be the vertex in $G_{q}$ of degree at least $3$. 
\label{fact:degree3_Tgraph}
\end{fact}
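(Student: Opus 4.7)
The plan is to trace through the greedy construction in Definition \ref{def:Tgraph} and observe that when $v \in V_q$ with $d := \deg_{G_q}(v) \geq 3$ is chosen as the initial vertex, all $d$ edges of $G_q$ incident to $v$ pick up at least one clause in $T_q$ during the very first pass of the procedure. Since $T_q$ is a subgraph of $G_q$, the degree of $v$ in $T_q$ is at most $d$; the content of the fact is therefore to verify that it cannot be less.

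To make this precise, I would fix such a $v$, assign it to $A$, and then inspect the first iteration of the procedure. Every neighbor $u$ of $v$ in $G_q$ is still unassigned at this point, so the greedy rule assigns $u$ to $A$ if the edge $(v,u)$ carries a $\Pi^{1,q}_e$ clause, to $B$ if it carries a $\Pi^{02,q}_e$ clause, and (in the rank-$2$ case) to whichever side matches the clause selected at random. In each case at least one of the clauses on $(v,u)$ is placed in $T_q$, and no conflict can occur since the target side of $u$ is dictated solely by the single edge back to the already-placed vertex $v$.

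Once this first pass finishes, $v$ is connected in the partial $T_q$ to each of its $d$ neighbors in $G_q$. The subsequent steps of the procedure only ever add clauses and never remove them, so $v$ retains degree $d \geq 3$ in the final $T_q$. The total work is still $O(n_q + |E_q|)$ because the only modification to the algorithm is the choice of starting vertex, which costs at most one scan of the adjacency lists to identify a vertex with $\deg_{G_q} \geq 3$. I do not foresee any substantive obstacle here: the statement reduces to the observation that the root of the greedy traversal is free to claim at least one clause on every incident edge before any competing assignment could arise.
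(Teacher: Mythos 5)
Your proposal is correct and is essentially the paper's own argument: the paper states this as a Fact whose justification is precisely the one-line observation that starting the greedy traversal at the degree-$\geq 3$ vertex forces every incident edge to contribute a clause to $T_q$ in the first pass. You have simply spelled out the details of that same observation, so nothing further is needed.
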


\section{Characterizing the satisfying assignments of Fermionic 2-SAT}
\label{sec:eff}

\subsection{Properties of satisfying assignments}

The following `Partner' Lemma captures the idea of {\em propagation of a constraint} by genuinely quantum clauses: recall that in the qubit case, a rank-1 {\em entangled} projector is always `propagating', see Definition 1 and Lemma 5 in \cite{ASSZ:linearSAT}: the assignment of a product state for one qubit uniquely fixes the assignment of a product state for the other qubit, given that the state has to be orthogonal to the entangled projector. Here propagation is a more general concept as we can't assume a product state assignment. The Lemma roughly means that if a satisfying assignment for the whole graph $G$ has support on a certain $n$-mode state, it also needs to have support on a `partnered' $n$-mode state.

\begin{lemma}[Partner Lemma]
Given an edge $e = (j,k)$ (with $j<k$), with (1) clause $\Pi^{1,q}_{e}$ or (2) $\Pi^{02,q}_{e}$ (or both) in an instance of \textsc{Fermionic} $2$-SAT. Suppose that $\ket{\psi}$ is an $n$-mode satisfying assignment. 
\begin{enumerate}
    \item \label{1-edge} $\Pi_e^{1,q}$ clause $\Rightarrow$ If $\ket{\psi}$ has support on a $\ket{S}=a_S^{\dagger}\ket{\rm vac}$ which is given by occupations $\{x_i\}_{i=1}^n$ such that $x_{j} \neq x_{k}$, then $\ket{\psi}$ must also have support on a `partnered' state $\ket{S'}$ with occupations $\{x_i'\}_{i=1}^n$ with $x'_j=\overline{x_j},x'_k=\overline{x_k},x_{i\neq j,k}'=x_i$.
     In particular, we must have for the amplitudes $\alpha_S$ in the satisfying assignment:
    \begin{align}
    \alpha_{S} = - \text{\rm sign}(k,S')/\text{\rm sign}(j,S)\: u_{e}\:\alpha_{S'},   \mbox{ if } x_{j} = 1 \mbox{ and } x_{k} = 0, \notag \\
    \alpha_{S'} = - \text{\rm sign}(k,S)/\text{\rm sign}(j,S')\: u_{e}\:\alpha_{S}, \mbox{ if } x_{j} = 0 \mbox{ and } x_{k} = 1, 
    \label{eq:coefficientrelationquantum1}
    \end{align} 
    where $\text{sign}(l,T)$ for $l\in T$ denotes the sign arising from the reordering of $a_{T}^{\dagger}$ to $a_{l}^{\dagger}a_{T\backslash l}^{\dagger}$ and $u_e$ as in Eq.~\eqref{eq:defue} for $\Pi^{1,q}_e$.
    \item \label{02-edge} $\Pi_e^{02,q}$ clause $\Rightarrow$ If $\ket{\psi}$ has support $\ket{S}$ with occupations $\{x_i\}_{i=1}^n$ such that $x_{j} = x_{k}$, then $\ket{\psi}$ must also have support on $\ket{S'}$ with $\{x_i'\}_{i=1}^n$ with $x'_j=\overline{x_j},x'_k=\overline{x_k},x_{i\neq j,k}'=x_i$. In particular, we must have for the amplitudes $\alpha_S$ in the satisfying assignment:
    \begin{align}
    \alpha_{S} = - \text{\rm sign}(jk,S)\: v_{e}\:\alpha_{S'},   \mbox{ if } x_{j} = x_{k} = 1, \notag \\
    \alpha_{S'} = - \text{\rm sign}(jk,S')\: v_{e}\:\alpha_{S}, \mbox{ if } x_{j} = x_{k} = 0,
    \label{eq:coefficientrelationquantum02}
    \end{align}
    where ${\rm sign}(lm,T)$ for $l,m\in T$ denotes the sign arising from reordering of $a_{T}^{\dagger}$ to $a_{l}^{\dagger}a_{m}^{\dagger}a_{T\backslash \{l,m\}}^{\dagger}$ and $v_e$ as in Eq.~\eqref{eq:defue} for $\Pi^{02,q}_e$.
\end{enumerate}
\label{lemma:partners}
\end{lemma}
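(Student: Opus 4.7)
The strategy is to expand a satisfying assignment as $\ket\psi = \sum_T \alpha_T \ket T$ in the occupation basis $\ket T = a_T^\dagger \ket{\rm vac}$, impose $\Pi_e\ket\psi = 0$, and read off linear relations among the amplitudes by setting the coefficient of each basis state in the image to zero. Since every monomial in the explicit forms \eqref{eq:projectordef_1} and \eqref{eq:projectordef_02} is a parity-preserving quartic in $a_j, a_j^\dagger, a_k, a_k^\dagger$, its action on $\ket T$ depends only on the pair $(x_j,x_k)$ and returns either $\ket T$ itself or the partner $\ket{T'}$ (obtained by flipping only the $j$-th and $k$-th occupations), times a fermionic sign that must be tracked using the $j<k$ convention and the definitions of $\text{sign}(l,T)$ and $\text{sign}(lm,T)$ given in the lemma statement.

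For case~\ref{1-edge}, the diagonal monomials $|\beta_e|^2 a_j^\dagger a_j a_k a_k^\dagger$ and $|\gamma_e|^2 a_j a_j^\dagger a_k^\dagger a_k$ return $\ket T$ itself when $(x_j,x_k)=(1,0)$ and $(0,1)$ respectively, and annihilate $\ket T$ otherwise, whereas the off-diagonal monomials swap the two bit-patterns $(0,1)\leftrightarrow(1,0)$. The central sign computation is $a_j a_k^\dagger\ket S = -a_k^\dagger a_j\ket S = -\text{sign}(j,S)\text{sign}(k,S')\ket{S'}$ (for $S$ with $x_j=1, x_k=0$), where the overall minus sign comes from $\{a_j, a_k^\dagger\}=0$ and the two sign factors arise from pulling $a_j^\dagger$ to the front of $a_S^\dagger$ and then reinserting $a_k^\dagger$ into the sorted product for $S'$; an analogous identity handles $a_j^\dagger a_k\ket{S'}$. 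Zeroing the coefficient of $\ket{S'}$ in $\Pi^{1,q}_e\ket\psi=0$ and substituting $u_e=\gamma_e^*/\beta_e^*$ then yields the first amplitude relation; the second follows from equating the coefficient of $\ket S$.

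For case~\ref{02-edge}, the same approach works with $(0,1)/(1,0)$ replaced by $(0,0)/(1,1)$: the diagonal terms return $\ket{S'}$ for $x_j=x_k=0$ and $\ket S$ for $x_j=x_k=1$, while the off-diagonal monomials satisfy $a_j^\dagger a_k^\dagger\ket{S'} = \text{sign}(jk,S)\ket S$ directly from the definition of $\text{sign}(jk,\cdot)$, and $a_j a_k\ket S = -\text{sign}(jk,S)\ket{S'}$, with the extra minus sign produced by a single $a_j \leftrightarrow a_k^\dagger$ anticommutation when reducing $a_j a_k$ via $a_S^\dagger = \text{sign}(jk,S)\,a_j^\dagger a_k^\dagger a_{S\setminus\{j,k\}}^\dagger$. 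Setting both the $\ket S$- and $\ket{S'}$-coefficients in $\Pi^{02,q}_e\ket\psi=0$ to zero and using $v_e=\alpha_e^*/\delta_e^*$ gives the two claimed relations. The main obstacle is purely combinatorial---keeping the fermionic signs straight, in particular verifying identities such as $\text{sign}(k,S\cup\{k\})=-\text{sign}(k,S')$ when $j<k$ and $j\in S$, which accounts for the conspicuous overall $-1$ appearing in the lemma---after which the amplitude relations drop out of one-line linear equations, and the support claim is immediate since $u_e, v_e \neq 0$ for a genuinely quantum clause.
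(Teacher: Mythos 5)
Your proposal is correct and follows essentially the same route as the paper: expand $\ket{\psi}$ in the occupation basis, apply the explicit monomial form of $\Pi^{1,q}_e$ resp.\ $\Pi^{02,q}_e$, track the fermionic reordering signs $\text{sign}(l,T)$ and $\text{sign}(lm,T)$, and extract the two-term linear relation in each $\{\ket{S},\ket{S'}\}$ sector (the paper phrases this as vanishing of each $|\cdot|^2$ term in $\bra{\psi}\Pi_e\ket{\psi}=0$, which for a rank-1 projector is equivalent to your coefficient-by-coefficient condition $\Pi_e\ket{\psi}=0$). Your sign computations, e.g.\ $a_j a_k^\dagger\ket{S}=-\text{sign}(j,S)\text{sign}(k,S')\ket{S'}$ and $a_j a_k\ket{S}=-\text{sign}(jk,S)\ket{S'}$, check out and reproduce Eqs.~\eqref{eq:coefficientrelationquantum1} and \eqref{eq:coefficientrelationquantum02}.
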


\begin{proof}
Consider case \ref{1-edge}. Since $\ket{\psi} = \sum_{T \subseteq [n]}\alpha_{T}a_{T}^{\dagger}\ket{\rm vac}$ is a satisfying assignment, we must have $\bra{\psi}\Pi^{1,q}_{e}\ket{\psi} = 0$. Using Eq.~\eqref{eq:projectordef_1} for $\Pi^{1,q}_{e=(j,k)}$ ($j <k$), this implies
\begin{equation}
    \bra{\psi}\Pi^{1,q}_{e}\ket{\psi} = \hspace{-0.4cm} \sum_{T\subseteq [n] \text{ s.t. }j\in T,\: k\notin T} \hspace{-0.4cm} \bigl\lvert \: \text{sign}(j,T)\: \beta_{e}^{*}\alpha_{T} + \text{sign}(k,T')\: \gamma_{e}^{*}\alpha_{T'} \bigr\rvert^{2} = 0, 
    \label{eq:cohcancel1}
\end{equation}
where for each subset $T = (x_{1},\ldots,x_{j} = 1,\ldots,x_{k} = 0,\ldots,x_{n})$, the partnered subset $T'$ is defined as $(x_{1},\ldots,x_{j} = 0,$ $\ldots,x_{k} = 1,\ldots,x_{n})$. The relation between $\alpha_S$ and $\alpha_{S'}$ follows from Eq.~\eqref{eq:cohcancel1}.
Consider case \ref{02-edge}. To ensure $\bra{\psi}\Pi^{02,q}_{e}\ket{\psi} = 0$ with $\Pi^{02,q}_e$ in Eq.~\eqref{eq:projectordef_02}, we must have that 
\begin{equation}
    \bra{\psi}\Pi^{02,q}_{e}\ket{\psi} = \hspace{-0.4cm} \sum_{T\subset [n] \text{ s.t. }j,k\in T} \hspace{-0.4cm} \bigl\lvert \: \alpha_{e}^{*}\alpha_{T'} + \text{sign}(jk,T)\: \delta_{e}^{*}\alpha_{T} \bigr\rvert^{2} = 0, 
    \label{eq:cohcancel02}
\end{equation}
where for each subset $T = (x_{1},\ldots,x_{j} = 1,\ldots,x_{k} = 1,\ldots,x_{n})$, the partnered subset $T'$ is defined as $(x_{1},\ldots,x_{j} = 0,$ $\ldots,x_{k} = 0,\ldots,x_{n})$. The relation between $\alpha_S$ and $\alpha_{S'}$ follows from Eq.~\eqref{eq:cohcancel02}.

\end{proof}

Lemma \ref{lemma:partners} strongly restricts the possible satisfying assignments on quantum clusters. The idea is that for a hPNC cluster $G_q$ repeatedly applying the case \ref{1-edge} partner rule in Lemma \ref{lemma:partners} makes particles propagate on the connected quantum cluster graph $G_q$, so that all occupations for a fixed hidden cluster particle number are generated in the cluster. Then given a fixed hidden cluster particle number, there is at most 1 unique satisfying assignment. 

If the cluster is non-hPNC, one can argue that there will be some additional $\Pi^{02,q}_e$ clauses, creating and annihilating pairs of particles via case \ref{02-edge}, and in that case there is at most 1 unique satisfying assignment given fixed parity (and no satisfying assignments with fixed hidden cluster particle number).

In these arguments, care must be taken not to assume anything about the structure of global assignment, i.e. we derive the necessary structure of such global assignment based on how quantum clauses on each cluster can be satisfied. We capture these ideas in the following Corollary:

\begin{corollary}
Suppose a quantum cluster $G_{q}$ is hPNC with bipartition $A \cup B$, and $\ket{\psi}$ is a satisfying assignment for the entire graph $G$ containing $G_q$. If $K_B\ket{\psi}$ (with $K_B=I$ in the strictly PNC case) has support on a $n$-mode state $\ket{S}$ with hidden cluster particle number $N_q$ on $G_{q}$, i.e. $\hat{N}_q \ket{S}=N_q \ket{S}$, then it must have support on all $\binom{n_q}{N_q}$  $n$-mode states $\ket{S'}$ with particle number $N_q$ on $G_{q}$ where $\ket{S'}$ has equal occupation as $\ket{S}$ on all other modes in $G$. Moreover, there is at most one satisfying assignment per hidden cluster particle number $N_q\in \{0,1,\ldots,n_q\}$ on $G_{q}$ alone.
Thus for a hPNC cluster, $N_q$ is the cluster particle number of the particle-hole transformed cluster, which we will refer to as the hidden cluster particle number. \\
\\
Suppose a quantum cluster $G_{q}$ is non-hPNC and $\ket{\psi}$ is a satisfying assignment for $G$. If $\ket{\psi}$ has support on a $n$-mode state $\ket{S}$ with cluster parity $P_q$ on $G_{q}$, i.e. $\hat{P}_q \ket{S}=P_q \ket{S}$, then it must have support on all $2^{n_q}/2$ $n$-mode states $\ket{S'}$ with parity $P_q$ on $G_{q}$ where $\ket{S'}$ has equal occupation as $\ket{S}$ on all other modes in $G$. There is at most one satisfying assignment per parity $P\in \{-1,+1\}$ on $G_{q}$ alone.
\label{cor:allparticlenumberallparity}
\end{corollary}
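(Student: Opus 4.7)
The plan is to prove both parts by repeatedly invoking Lemma \ref{lemma:partners}, the Partner Lemma, on the quantum clauses inside $G_q$. In each case, the partnering relations (i) preserve the relevant conserved quantity (hidden cluster particle number for hPNC, cluster parity for non-hPNC), (ii) fix the ratio of amplitudes between partnered computational basis states up to an overall constant, and (iii) exhaust, through iteration, all computational basis states in the relevant symmetry sector of $G_q$. Items (i) and (iii) together give the ``must have support on all $\binom{n_q}{N_q}$'' or ``all $2^{n_q}/2$'' statement, while (ii) gives the ``at most one satisfying assignment'' uniqueness statement.

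For the hPNC part, I would first apply the particle-hole transformation $K_B$. By Definition \ref{def:hiddenPNCcluster} and the transformation rules of Section \ref{sec:ph-trafo} (illustrated in Figure \ref{fig:Tqi_graph}), every cluster clause then becomes a $\tilde{\Pi}^{1,q}_e$ clause. Since $K_B$ is unitary, $K_B\ket{\psi}$ is a satisfying assignment for the transformed clauses, and $K_B\ket{S}$ still has particle number $N_q$ on $G_q$ in the transformed basis. I would then apply part \ref{1-edge} of the Partner Lemma to each $\tilde{\Pi}^{1,q}_e$ clause along paths in $G_q$: this swaps occupations on the edge while preserving $N_q$, and forces the partnered basis states to appear in $K_B\ket{\psi}$ with amplitude ratios dictated by Eq.~\eqref{eq:coefficientrelationquantum1}. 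Because $G_q$ is connected, a short graph-theoretic argument gives that the set of $n_q$-mode occupations with fixed particle number $N_q$ is connected under nearest-neighbor ``particle hops'' along the edges of $G_q$, so all $\binom{n_q}{N_q}$ such basis states must appear, each with amplitude uniquely determined by the initial one.

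For the non-hPNC part, I would build the maximal spanning hPNC subgraph $T_q$ of Definition \ref{def:Tgraph}, extract its bipartition $A\cup B$, and perform $K_B$. By the maximality of $T_q$ and the assumption that $G_q$ is non-hPNC, at least one clause of $G_q\setminus T_q$ must exist; by construction this clause is either a $\Pi^{02,q}_e$ clause internal to $A$ or $B$, or a $\Pi^{1,q}_e$ clause between $A$ and $B$, and in either case $K_B$ turns it into a $\tilde{\Pi}^{02,q}_e$ clause sitting alongside the $\tilde{\Pi}^{1,q}_e$ clauses coming from $T_q$. Applying part \ref{1-edge} of the Partner Lemma on the $T_q$-clauses propagates particles throughout all of $G_q$ (since $T_q$ spans $G_q$), reaching every occupation at a given transformed particle number; applying part \ref{02-edge} on an extra clause in $G_q\setminus T_q$ then creates or annihilates a pair of particles, shifting the transformed particle number by $\pm 2$ while preserving parity. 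Iterating these two moves visits every basis state with a given transformed cluster parity, with all amplitudes fixed by Eqs.~\eqref{eq:coefficientrelationquantum1} and \eqref{eq:coefficientrelationquantum02}. Undoing $K_B$ (and noting that $K_B\hat{P}_q K_B^{-1}=\pm\hat{P}_q$ merely flips the parity eigenvalue when $|B|$ is odd) transports the conclusion back to $\ket{\psi}$ and its parity $P_q$.

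The main obstacle I anticipate is the combinatorial step of verifying that the partner moves generate the entire symmetry sector of $G_q$, i.e.\ connectivity in configuration space rather than in graph space. For the hPNC part this reduces to showing that the ``particle-hop'' graph on $\binom{[n_q]}{N_q}$ induced by the edges of the connected graph $G_q$ is itself connected, which follows by induction along a spanning tree of $G_q$. For the non-hPNC part the analogous statement is that particle-number-preserving hops via $T_q$ combined with a single $\pm 2$-particle move via an extra clause suffice to connect all parity-$P_q$ configurations; once this is in place the amplitude rigidity is mechanical. A further subtlety to track is the bookkeeping of the Jordan--Wigner signs $\text{sign}(j,S)$ and $\text{sign}(jk,S)$ in Eqs.~\eqref{eq:coefficientrelationquantum1}--\eqref{eq:coefficientrelationquantum02}, but these only affect the phase of each coefficient and hence not which basis states have nonzero amplitude.
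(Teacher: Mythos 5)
Your proposal is correct and follows essentially the same route as the paper's proof: both parts rest on repeated application of the Partner Lemma, the hPNC case via particle-hole transforming to pure $\tilde{\Pi}^{1,q}_e$ clauses and propagating particles at fixed $N_q$, and the non-hPNC case via the maximal spanning hPNC subgraph $T_q$ with the same two-case analysis (an internal $\Pi^{02,q}_e$ clause or an $A$--$B$ connecting $\Pi^{1,q}_e$ clause, either of which becomes pair-creating after $K_B$) to connect all particle numbers of a given parity. Your explicit attention to the connectivity of the fixed-$N_q$ configuration space under unlabeled particle hops along edges of the connected cluster is a detail the paper leaves implicit, but it is a standard fact and does not change the argument.
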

\begin{proof}
The first part of the Corollary follows directly from repeatedly applying `the propagation rule' of case 1 of Lemma \ref{lemma:partners} to an $N_q$-particle $n$-mode state $\ket{S}$ on which $K_B\ket{\psi}$ is supported, propagating the particles (only for $N_q=0$ (all empty) and $N_q=n_q$ there is nothing to propagate and case 1 does not apply). The coefficient relation Eq.~\eqref{eq:coefficientrelationquantum1} implies that there is at most one satisfying assignment per hidden cluster particle number $N_q\in \{0,1,\ldots,n_q\}$ on the vertices of $G_{q}$ alone.
As for the second part of the corollary, consider a non-hPNC cluster and construct the maximal spanning hPNC subgraph $T_q$ in Definition \ref{def:Tgraph} with bipartition $A \cup B$. If we were to apply the particle-hole transformation $K_B$ on $T_q$, any satisfying assignment $\ket{\psi}$ for the projectors in this subgraph $T_q$ would be such that $K_B \ket{\psi}$ is a unique state for fixed cluster particle number as per the first part of this corollary. Since $G_q$ is non-hPNC, using this bipartition of $T_q$, there will be (a) at least one internal $\Pi_e^{02,q}$ clause inside either $A$ or $B$, or (b) one $\Pi_e^{1,q}$ clause connecting $A$ and $B$. If we apply the particle-hole transformation $K_B$, then in case (a) such $\Pi_e^{02,q}$ clause stays a $\tilde{\Pi}_e^{02,q}$ ``pair-creating" clause, implying that any satisfying assignment $\ket{\psi}$ for the quantum clauses in $G_q$ has the property that $K_B\ket{\psi}$ is a unique state given some cluster parity: this follows from case 2 in Lemma \ref{lemma:partners} and states of any particle number with that cluster parity must occur in the superposition (working in the particle-hole transformed basis). If $K_B\ket{\psi}$ is a state which involves all particle numbers with given parity, then so is the satisfying assignment $\ket{\psi}$, since $K_B$ maps any state $\ket{S}$ onto some other state $\ket{S'}$.
In case (b) upon the particle-hole transformation $K_B$, the connecting clause $\Pi_e^{1,q}$ transforms to ``pair-creating" clause $\tilde{\Pi}_e^{02,q}$ and then the same arguments apply as in case (a).
\end{proof}

\subsection{Cluster-product form of satisfying assignments}
\label{sec:clus-product}

Let's examine the consequences of Corollary \ref{cor:allparticlenumberallparity} for satisfying assignments for the entire graph $G$. Let $\mathsf{hPNC}$ be the set of hPNC quantum clusters, let $\mathsf{non-hPNC}$ be the set of non-hPNC quantum clusters and let $\mathsf{Class}$ be the collection of remaining modes not contained in any quantum cluster, to which we have referred as classical modes. Let's us fix an ordering of the fermionic modes in the graph $G$ using these sets $\mathsf{hPNC},\mathsf{non-hPNC}$ and $\mathsf{Class}$ and some chosen internal ordering of modes inside the sets. Say, we first assign labels $1,2,\ldots, n_q^1$ to the modes in the first hPNC quantum cluster $G_q^1$, then labels $n_{q}^1+1,n_{q}^1+2,\ldots, n_{q}^1+n_{q}^2$ to the modes in the second hPNC quantum cluster $G_q^2$, etc., then do the same for the non-hPNC quantum clusters, and then finally for all the classical modes. What is relevant is that in this ordering, we order the modes in each cluster immediately following each other, so the satisfying assignment of the cluster itself can be used in the assignment for the whole graph in so-called cluster-product form. Given this ordering, Corollary \ref{cor:allparticlenumberallparity} implies that {\em any} satisfying assignment $\ket{\psi}$ must be a superposition of states of ordered cluster-product form, i.e.
\begin{align}
    \ket{\psi} = \sum_{\{N_q^{i}\},\{P_q^{j}\},\{x_{k}\}} \hspace{-0.2cm} \beta_{\{N_q^{i}\},\{P_q^{j}\},\{x_k\}}\notag \\
     \left(\prod_{k\in \mathsf{Class}} O_k(x_k)\right)\left(\prod_{j\in \mathsf{non-hPNC}} O_j(P_q^j)\right)\left(\prod_{i\in \mathsf{hPNC}} O_i(N_q^i)\right) 
    \ket{\rm vac},
    \label{eq:superposproduct}
\end{align}  
Here the operator $O_i(N_q^i)$ creates the unique state with hidden cluster particle number $N_q^i$ using the modes of hPNC quantum cluster $i$, the operator $O_j(P_q^j)$ creates the unique state with cluster parity \footnote{Remember that cluster parity is defined in the original basis while {\em hidden} cluster particle number is defined in the particle-holed transformed basis in Definition \ref{def:quantumclusters} and Fig.~\ref{fig:Tqi_graph}.} $P_q^j=\pm 1$ using the modes in the non-hPNC cluster $j$, and the operator $O_k(x_k=0)=I$, $O_k(x_k=1)=a_k^{\dagger}$ for a classical mode $k \in \mathsf{Class}$. Note that the operators $O_{k}(x_k)$, $O_j(P_q^j)$, $O_i(N_q^i)$ do not generally commute. We note that for a hPNC cluster $i$, non-hPNC cluster $j$ and $k\in \mathsf{Class}$, some values for respectively $N_q^{i}$, $P_q^{j}$ and $x_k$ in Eq.~\eqref{eq:superposproduct} may be excluded in order to be a satisfying assignment. For example, this can be due to further constraints on parity or particle number which we will examine in the next Section \ref{sec:ferm-qubit}. We only consider the constraints due to classical clauses in the proofs of Theorems \ref{theorem:parityconservingP} and \ref{theorem:parityconstrainedP} in Section \ref{sec:fixedparity}.

Let us now argue that taking a superposition of cluster-product states as in Eq.~(\ref{eq:superposproduct}) is unnecessary, i.e. individual cluster-product states suffice. For this, we first examine some basic properties of the cluster-product form.
For any $i_0 \in \mathsf{hPNC}$, the cluster-product form is an eigenstate of the hidden cluster particle number $\hat{N}_q^{i_0}$ (where $K_B^{i_0}$ applies the particle-hole transformation for cluster $i_0$), i.e.
\begin{align}
    \hat{N}_q^{i_0}\left(K_B^{i_0} \left(\Pi_{k\in \mathsf{Class}} O_k(x_k)\right)\left(\Pi_{j\in \mathsf{non-hPNC}} O_j(P_q^j)\right)\left(\Pi_{i\in \mathsf{hPNC}} O_i(N_q^i)\right)\ket{\rm vac}\right)= \notag \\ \left(\Pi_{k\in \mathsf{Class}} O_k(x_k)\right)\left(\Pi_{j\in \mathsf{non-hPNC}} O_j(P_q^j)\right)\left(\Pi_{i\in \mathsf{hPNC}} [\hat{N}_q^{i} K_B^{i}]^{\delta_{i,i_0}} O_{i}(N_q^{i})\right)\ket{\rm vac}=\notag \\
    N_q^{{i_0}}\left(K_B^{i_0} \left(\Pi_{k\in \mathsf{Class}} O_k(x_k)\right)\left(\Pi_{j\in \mathsf{non-hPNC}} O_j(P_q^j)\right)\left(\Pi_{i\in \mathsf{hPNC}} O_i(N_q^i)\right)\ket{\rm vac}\right),
    \end{align}
using Eq.~\eqref{eq:ph_transformation}, i.e. we can commute these operators through the ordered product.  Similarly, for any $j_0\in \mathsf{non-hPNC}$
    \begin{align}
    \hat{P}_q^{j_0}\left(\left(\Pi_{k\in \mathsf{Class}} O_k(x_k)\right)\left(\Pi_{j\in \mathsf{non-hPNC}} O_j(P_q^j)\right)\left(\Pi_{i\in \mathsf{hPNC}} O_i(N_q^i)\right)\ket{\rm vac}\right) \notag = \\
    P_q^{j_0}\left(\left(\Pi_{k\in \mathsf{Class}} O_k(x_k)\right)\left(\Pi_{j\in \mathsf{non-hPNC}} O_j(P_q^j)\right)\left(\Pi_{i\in \mathsf{hPNC}} O_i(N_q^i)\right)\ket{\rm vac}\right),
    \end{align}
    and for any mode $k_0\in \mathsf{Class}$, we have
    \begin{align}
    \hat{N}_q^{k_0}\left(\left(\Pi_{k\in \mathsf{Class}} O_k(x_k)\right)\left(\Pi_{j\in \mathsf{non-hPNC}} O_j(P_q^j)\right)\left(\Pi_{i\in \mathsf{hPNC}} O_i(N_q^i)\right)\ket{\rm vac}\right) \notag = \\
    N_q^{k_0}\left(\left(\Pi_{k\in \mathsf{Class}} O_k(x_k)\right)\left(\Pi_{j\in \mathsf{non-hPNC}} O_j(P_q^j)\right)\left(\Pi_{i\in \mathsf{hPNC}} O_i(N_q^i)\right)\ket{\rm vac}\right),
    \end{align}
   
    
    We will use the following shorthand for the cluster-product form:
\begin{equation}
    \bigl\lvert \phi\big( \{N_q^{i}\},\{P_q^{j}\},\{x_{k}\} \big) \bigr\rangle \equiv 
     \left(\prod_{k\in \mathsf{Class}} O_k(x_k)\right)\left(\prod_{j\in \mathsf{non-hPNC}} O_j(P_q^j)\right)\left(\prod_{i\in \mathsf{hPNC}} O_i(N_q^i)\right) 
    \ket{\rm vac}.
    \label{eq:sh}
\end{equation}


\begin{figure}[t]
\centering
\includegraphics[width=0.8\textwidth]{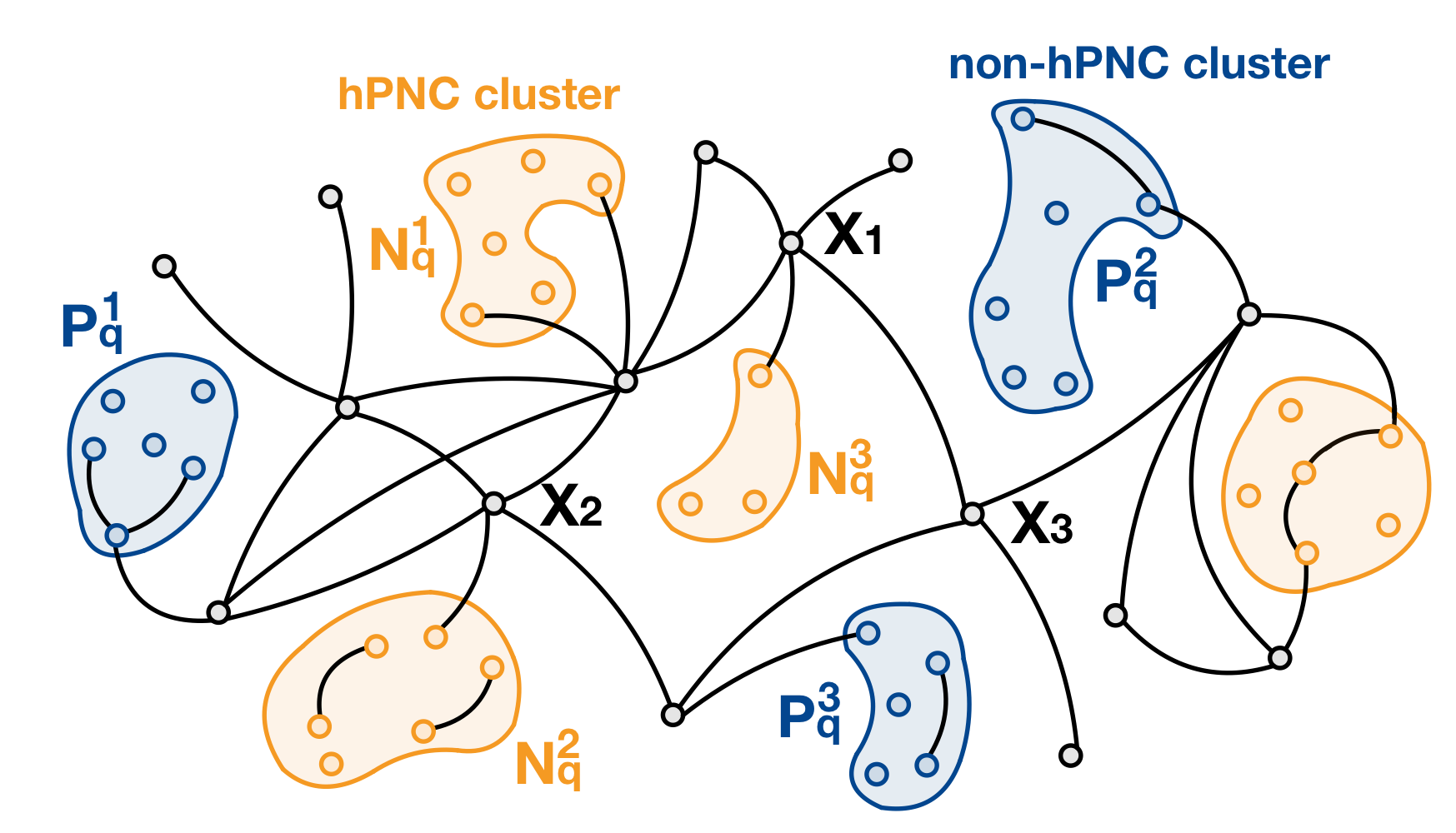}
\caption{Satisfying assignments of \textsc{Fermionic} 2-SAT can be taken to be of cluster-product form: ordered products of operators which create satisfying assignments on quantum clusters and classical assignments on classical modes starting from the vacuum state, as in Eq.~\eqref{eq:sh}. The possibly-created states on a hPNC cluster $G_q^i \equiv i \in \mathsf{hPNC}$ (orange clusters) are completely specified by (hidden) cluster particle numbers $N_q^{i}=0,1,\ldots, n_q^i$. The possibly-created states on a non-hPNC cluster $j \in \mathsf{non-hPNC}$ (blue clusters) are completely specified by the cluster parity $P_q^{j}=\pm 1$. The possibly-created states for a classical mode $k \in \mathsf{Class}$ are completely specified by the occupied/unoccupied label $x_k=0,1$. The black edges correspond to classical clauses. Note that these classical clauses can also be internal to a quantum cluster or straddle a quantum cluster and a classical mode.}
\label{fig:productform}
\end{figure}

These facts give us a final characterization of the satisfying assignments, see Fig.~\ref{fig:productform}:

\begin{proposition}[Cluster-product form of satisfying assignments]
If there exists a satisfying assignment of a \textsc{Fermionic} $2$-SAT problem, possibly with fixed particle number or fixed parity, then one can assume, with loss of generality, that it is of cluster-product form 
$\bigl\lvert \phi\big( \{N_q^{i}\},\{P_q^{j}\},\{x_{k}\} \big) \bigr\rangle$ in Eq.~\eqref{eq:sh}.
\label{lemma:global}
\end{proposition}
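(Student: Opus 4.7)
The plan is to start from the superposition form in Eq.~\eqref{eq:superposproduct}, which by Corollary~\ref{cor:allparticlenumberallparity} already expresses any satisfying assignment $\ket{\psi}$ as $\sum_A \beta_A \bigl\lvert\phi(A)\bigr\rangle$ with $A\equiv(\{N_q^i\},\{P_q^j\},\{x_k\})$, and then argue that one such term suffices. I would first record the structural observation that each cluster-product state $\bigl\lvert\phi(A)\bigr\rangle$ is a simultaneous eigenstate of the commuting family $\{K_{B^i}\hat{N}_q^i K_{B^i}^{-1}\}_{i\in\mathsf{hPNC}}$, $\{\hat{P}_q^j\}_{j\in\mathsf{non-hPNC}}$, and $\{\hat{N}_k\}_{k\in\mathsf{Class}}$. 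Because each of these operators is a function of the local number operators $\hat{N}_l$, every computational basis state $\ket{x}$ has a unique ``address'' $A^*(x)$ read off from its occupations, and each $\bigl\lvert\phi(A)\bigr\rangle$ has support only on basis states with $A^*(x)=A$. Distinct cluster-product states therefore live in mutually orthogonal sectors of the Fock space.

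Given this setup, I would peel off the remaining constraints as follows. For any classical clause $\Pi$ in the instance, diagonality in the number basis turns $\Pi\ket{\psi}=0$ into the requirement $\langle x|\psi\rangle=0$ for each basis state $\ket{x}$ excluded by $\Pi$. Inserting the decomposition and using the uniqueness of the address gives $\beta_{A^*(x)}\langle x|\phi(A^*(x))\rangle=0$, so for every \emph{active} $A$ (i.e.\ $\beta_A\neq 0$), the state $\bigl\lvert\phi(A)\bigr\rangle$ has zero amplitude on every forbidden $\ket{x}$ --- on those with $A^*(x)=A$ by the above relation, and on those with $A^*(x)\neq A$ trivially --- and therefore $\Pi\bigl\lvert\phi(A)\bigr\rangle=0$. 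The quantum clauses are killed by $\bigl\lvert\phi(A)\bigr\rangle$ by construction via Corollary~\ref{cor:allparticlenumberallparity}, so any active cluster-product term is itself a satisfying assignment of the entire instance. For the fixed-parity (resp.\ PNC fixed-particle-number) variant I would note that $\hat{P}$ (resp.\ $\hat{N}$) is a function of the cluster operators above, so each $\bigl\lvert\phi(A)\bigr\rangle$ carries a definite overall eigenvalue and the decomposition of an eigenstate $\ket{\psi}$ only contains terms with the matching eigenvalue; picking any active $A$ then yields the desired cluster-product satisfying assignment with the correct global quantum number.

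The step I expect to be the most delicate is the bookkeeping for hPNC clusters, since the hidden cluster particle number $\hat{N}_q^i$ commutes with the cluster's quantum clauses only after the particle-hole transformation $K_{B^i}$. The cleanest way to resolve this is to perform the whole argument after applying the global unitary $K_B=\prod_{i\in\mathsf{hPNC}} K_{B^i}$, with the bipartitions fixed, e.g., via Definition~\ref{def:Tgraph}: in that rotated basis every hPNC cluster clause is manifestly particle-number-conserving, the ``unique address'' statement and the commutation of the cluster operators with all clauses are immediate, and the classical clauses remain diagonal in the number basis because $K_B$ only permutes basis states up to signs. The conclusion is then transferred back by $K_B^{-1}$ to the original basis to yield the claimed cluster-product form.
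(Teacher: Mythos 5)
Your proposal is correct and follows essentially the same route as the paper: both start from the superposition over cluster-product states in Eq.~\eqref{eq:superposproduct} and use the fact that every projector commutes with the sector-defining operators (hidden cluster particle numbers, cluster parities, classical-mode occupations) to conclude that each individual term with nonzero coefficient is itself a satisfying assignment with the required global parity or particle number. The only cosmetic difference is that the paper argues via vanishing cross-terms in $\bra{\psi}\sum_i\Pi_i\ket{\psi}$ together with positivity of the projectors, whereas you project the equations $\Pi\ket{\psi}=0$ onto each sector directly (treating the diagonal classical clauses by amplitude cancellation), and your explicit bookkeeping of the particle-hole transformation just makes precise the paper's parenthetical remark about working in the particle-hole-transformed basis.
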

\begin{proof}
Given Corollary \ref{cor:allparticlenumberallparity} the most general solution is of the form in Eq.~\eqref{eq:superposproduct}. However, if any superposition state with different cluster parities $\{P_q^j\}$ for non-hPNC clusters is a satisfying assignment, then so is each individual term in the superposition since cross terms with distinct cluster parities must be zero as $\sum_i \Pi_i$ commutes with any cluster parity. Similarly, if any superposition state with different hidden particle numbers $\{N_q^i\}$ for hPNC clusters is a satisfying assignment, then so is each individual term in the superposition since (using manipulations in the particle-hole transformed basis) cross-terms must be 0. Similarly, if any superposition state with different values $\{x_k\}$ is a satisfying assignment, then so is each individual term in the superposition since cross-terms must be 0 as $\sum_i \Pi_i$ commutes with $\hat{N}_k$. Thus wlog we can restrict ourselveves to assignments of cluster-product form $\bigl\lvert \phi\big( \{N_q^{i}\},\{P_q^{j}\},\{x_{k}\} \big) \bigr\rangle$.
\end{proof}

\subsection{Excluding certain particle numbers or parities on quantum clusters}
\label{sec:ferm-qubit}

In this section we further examine the structure of satisfying assignments, namely the allowed values for (hidden) cluster particle numbers and parities in the cluster-product form in Eq.~\eqref{eq:sh} which will restrict the search for an assignment. The following Lemma provides an important simplification for quantum clusters with vertices with degree at least 3. 

\begin{lemma}[Fermionic Degree $\geq3$ Simplification]
Let $G_{q} = (V_{q},E_{q})$ be a quantum cluster on $n_q$ modes (=vertices) with at least one vertex of degree $\geq 3$ and let $T_q$ be its maximal spanning hPNC subgraph with partition $A \cup B$ in Definition \ref{def:Tgraph}. Then, for any satisfying $n_q$-mode assignment $\ket{\psi}$ for $G_{q}$, we must have $\bra{S} K_B\ket{\psi} = 0$, for any $\ket{S} = a_{S}^{\dagger}\ket{\rm vac}$ with hidden cluster particle number $\hat{N}_q=2,3,\ldots,n_q-2$. 
\label{lemma:degreeatleast3}
\end{lemma}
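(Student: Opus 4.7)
The plan is to first pass to the particle--hole transformed basis by working with $K_B\ket{\psi}$, in which every clause of the maximal spanning hPNC subgraph $T_q$ becomes a $\tilde{\Pi}^{1,q}_e$ clause and the hidden cluster particle number $\hat{N}_q$ is a genuine symmetry of the $T_q$-constraints. Suppose for contradiction that $\bra{S}K_B\ket{\psi} \neq 0$ for some $\ket{S} = a_S^{\dagger}\ket{\rm vac}$ with $|S| = N_q \in \{2, 3, \ldots, n_q-2\}$; then Corollary~\ref{cor:allparticlenumberallparity} forces every $n_q$-mode amplitude of $K_B\ket{\psi}$ in this hidden-particle-number sector (with fixed occupations outside $V_q$) to be nonzero. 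My goal is then to exhibit a closed cycle in configuration space along which the composed partner-lemma relations yield $\alpha_S = -\alpha_S$, producing a contradiction.

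To build the cycle, I would invoke Fact~\ref{fact:degree3_Tgraph} to pick $T_q$ containing a vertex $v$ of degree at least three, and let $w_1, w_2, w_3$ be three distinct neighbors of $v$ in $T_q$. Because $n_q \geq 4$ and $N_q \in \{2, \ldots, n_q-2\}$, I can fix a background subset $R \subseteq V_q \setminus \{v, w_1, w_2, w_3\}$ of size $N_q - 2$. The six $N_q$-particle basis states $\ket{R \cup T}$, indexed by $T \in \binom{\{v, w_1, w_2, w_3\}}{2}$, are linked by the partner swaps associated with the three edges $(v, w_i)$ into the single $6$-cycle
\begin{equation*}
\{v,w_1\} \to \{w_1,w_2\} \to \{v,w_2\} \to \{w_2,w_3\} \to \{v,w_3\} \to \{w_1,w_3\} \to \{v,w_1\},
\end{equation*}
where each step is a single $v\leftrightarrow w_i$ swap (suppressing $R$ on every configuration), and each edge $(v,w_i)$ is traversed exactly once in each direction.

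I would then close the cycle by multiplying the six amplitude ratios prescribed by Eq.~\eqref{eq:coefficientrelationquantum1}. The coefficients $u_{e_1}, u_{e_2}, u_{e_3}$ cancel pairwise, each edge contributing one factor $u_{e_i}$ and one factor $1/u_{e_i}$, and the six explicit minus signs from Eq.~\eqref{eq:coefficientrelationquantum1} combine to $+1$. What survives is a product of reordering-sign ratios $\text{sign}(l, T)/\text{sign}(l', T')$; the contributions from the background $R$ each appear twice with matching role and cancel, so only the relative positions of $v, w_1, w_2, w_3$ in the global mode ordering contribute. A direct bookkeeping, carried out for one convenient labeling (say $v < w_1 < w_2 < w_3$) and then lifted to arbitrary labelings by invoking the invariance of closed-cycle signs under mode relabeling, yields a net factor of $-1$. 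Hence $\alpha_{R\cup\{v,w_1\}} = -\alpha_{R\cup\{v,w_1\}}$, forcing this amplitude to vanish and contradicting Corollary~\ref{cor:allparticlenumberallparity}.

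The main obstacle is this sign computation. One has to pick the correct case of Eq.~\eqref{eq:coefficientrelationquantum1} at every step (the lemma is stated separately for $x_j = 1, x_k = 0$ versus $x_j = 0, x_k = 1$), track every $\text{sign}(l, T)$ factor for the chosen mode ordering, and verify that exactly one residual fermionic anticommutation survives in the closure to produce the net $-1$. That residual $-1$ is the genuinely fermionic feature ruling out all intermediate hidden-particle-number sectors on clusters with a degree-$\geq 3$ vertex; the analogous cycle for qubits would close trivially to $+1$, which is why no such exclusion arises in \textsc{Quantum} 2-SAT.
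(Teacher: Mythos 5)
Your proposal is correct and follows essentially the same route as the paper: the paper likewise fixes the degree-$\geq 3$ vertex with three neighbours, uses the propagation of Lemma~\ref{lemma:partners}/Corollary~\ref{cor:allparticlenumberallparity} to force nonzero amplitudes on all six two-particle configurations of $\{v,w_1,w_2,w_3\}$ over a fixed background, and derives $+1=-1$ from the six partner relations (its Eq.~\eqref{eq:ferm-con}), which are exactly the steps of your $6$-cycle. The sign bookkeeping you defer does come out as you assert: for the ordering $v<w_1<w_2<w_3$ the six explicit minus signs cancel and the residual reordering-sign holonomy around the cycle is $-1$, matching the paper's computation.
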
 

\begin{proof}
Using Fact \ref{fact:degree3_Tgraph}, the graph $T_{q}$ contains a vertex $j$ of degree $\geq 3$ and let three of its neighbor vertices be labeled $k,l,m$. We may assume wlog that $j<k<l<m<i_1<i_2<\ldots$, with $i_1,i_2,\ldots$ labeling the other modes in the quantum cluster $G_q$. We write $e = (j,k)$, $e' = (j,l)$ and $e'' = (j,m)$. In the particle-hole transformed basis, all clauses adjacent to $j$ are of type $\Pi^{1,q}_e$ by construction via Definition \ref{def:Tgraph} and Fact \ref{fact:degree3_Tgraph}.
After the particle-hole transformation we write $\Pi^{1,q}_{e=(j,k)}$, $\Pi^{1,q}_{e'=(j,l)}$ and $\Pi^{1,q}_{e''=(j,m)}$ with non-zero coefficients $u_{e}$, $u_{e'}$ and $u_{e''}$ introduced in Eq.~\eqref{eq:defue}, uniquely characterizing the clauses. 

Suppose that $K_B\ket{\psi}$ has support on a state $\ket{S} = a_{S}^{\dagger}\ket{\rm vac}$ with particle number $N_q=2,3,\ldots,n_q-2$. Lemma \ref{lemma:partners} implies that $K_B\ket{\psi}$ must have support on \textit{all} $\binom{4}{2}$ states $\ket{S'} = a_{S'}^{\dagger}\ket{\rm vac}$ s.t. in $\ket{S'}$ two modes from $\{j,k,l,m\}$ are occupied, and $\ket{S'}$ has occupation equal to $\ket{S}$ on all modes $V_q\backslash \{j,k,l,m\}$. Let the occupied modes in $V_q \backslash \{j,k,l,m\}$ be denoted by subset $T$ (which depends on $S$). Let $K_B\ket{\psi}=\sum_S \alpha_S \ket{S}$, using, say, the fermionic mode ordering established at the start of Section \ref{sec:clus-product}, and assume wlog that in this ordering $T=T_{\rm before} \cup T_{\rm after}$ where $T_{\rm before}$ is a set of modes which come before $j,k,l,m$ in the chosen fermionic order and $T_{\rm after}$ is a set of modes which come after and let $a_T^{\dagger}=a_{T_{\rm before}}^{\dagger}a_{T_{\rm after}}^{\dagger}$.
Then $K_B\ket{\psi}$ must have support on each of the six states 
\begin{multline}
a_{j}^{\dagger}a_{k}^{\dagger}a_{T}^{\dagger}\ket{\rm vac},  a_{j}^{\dagger}a_{l}^{\dagger}a_{T}^{\dagger}\ket{\rm vac}, a_{j}^{\dagger}a_{m}^{\dagger}a_{T}^{\dagger}\ket{\rm vac},    a_{k}^{\dagger}a_{l}^{\dagger}a_{T}^{\dagger}\ket{\rm vac},  a_{k}^{\dagger}a_{m}^{\dagger}a_{T}^{\dagger}\ket{\rm vac}, 
a_{l}^{\dagger}a_{m}^{\dagger}a_{T}^{\dagger}\ket{\rm vac},
\label{eq:starstates}
\end{multline}
with amplitudes $\alpha_{j,k,T},\alpha_{j,l,T}$,$\alpha_{j,m,T},\alpha_{k,l,T},\alpha_{k,m,T}$ and $\alpha_{l,m,T}$. Lemma \ref{lemma:partners} implies the following relations between these amplitudes:
\begin{multline}
    \text{(1)}\: \alpha_{j,k,T} = +u_{e'}\:\alpha_{k,l,T}, \hspace{0.2cm}
    \text{(2)}\: \alpha_{j,l,T} = -u_{e}\:\alpha_{k,l,T}, \hspace{0.2cm}
    \text{(3)}\: \alpha_{j,l,T} = +u_{e''}\:\alpha_{l,m,T}, \\ 
    \text{(4)}\: \alpha_{j,k,T} = +u_{e''}\:\alpha_{k,m,T}, \hspace{0.2cm}
    \text{(5)}\: \alpha_{j,m,T} = -u_{e}\:\alpha_{k,m,T}, \hspace{0.2cm}
    \text{(6)}\: \alpha_{j,m,T} = -u_{e'}\:\alpha_{l,m,T}.
    \label{eq:ferm-con}
\end{multline}
However, conditions (1-6) imply $+1 = -1$. Hence, $K_B\ket{\psi}$ cannot have support on $n$-mode states $\ket{S} = a_{S}^{\dagger}\ket{\rm vac}$ with cluster particle number $\hat{N}_q=2,3,\ldots, n_q-2$. 
\end{proof}

We note that the previous Lemma is intrinsically fermionic, i.e. the anti-commutation of creation operators plays a role in the signs of Eq.~\eqref{eq:ferm-con} and there is no \textsc{Quantum} 2-SAT counterpart. The following corollary summarizes the consequences. 

\begin{corollary}
Consider an assignment of cluster-product form 
\begin{align}
\ket{ \phi\left( \{N_q^{i}\}_{i\in \mathsf{hPNC}},\{P_q^{j}\}_{j\in \mathsf{non-hPNC}},\{x_{k}\}_{k\in \mathsf{Class}}\right)},
\label{eq:clusprod}
\end{align}
in Eq.~\eqref{eq:sh}, which, per Proposition \ref{lemma:global}, exists if there is a satisfying assignment for $G$. We summarize some consequences for the quantum numbers $N_q^i$ and $P_q^i$ in the cluster-product form due to the quantum clauses on clusters.
\begin{enumerate}
    \item \label{hPNC3} For degree $\geq 3$ hPNC clusters $i\in \mathsf{hPNC}$, the only hidden cluster particle numbers can be $N_q^i=0,1,n_q^i-1,n_q^i$ which are Gaussian states, remaining Gaussian when undoing the particle-hole transformation as argued in Section \ref{sec:ph-trafo}.
    \item \label{alwaysemptyfull} For any hPNC cluster $i\in \mathsf{hPNC}$ with $n_q^i$ modes, the hidden cluster particle number can {\em always} be $N_q^i=0$ or $N_q^i=n_q^i$, since Lemma \ref{lemma:partners} imposes no constraints on the coefficients, i.e. particles don't propagate since there are either no particles or all modes are filled.
    \item \label{needstobefree} For a hPNC cluster $i$ with $n_q^i$ modes, let there be a cluster product assignment with hidden cluster particle number $N_q^i\neq 0, N_q^i\neq n_q^i$. Then the assignment will have support on states $\ket{S}$ in which any mode $j$ in the cluster is empty ($x_j=0$) \textit{and} states for which it is filled ($x_j=1$). Hence in this case any classical clause straddling the cluster on mode $j$ and a classical mode $k\in \mathsf{Class}$ or mode in another quantum cluster, will have to be satisfied (if it can) no matter the value for $x_j=0,1$. When $n_q^i > 2$, if the state with hidden cluster particle number $N_q^i\neq 0, N_q^i\neq n_q^i$ obeys any additional classical clause constraints inside the cluster, then either the all-empty $N_q^i=0$ or all-filled assignment $N_q^i=n_q^i$ (or both) should also obey these constraints since on any pair of modes, at least 3 out of 4 occupations on these two modes will be allowed (thus always including either 00 or 11). 
    \item If $G$ contains a degree $\geq 3$ non-hPNC cluster $j\in \mathsf{non-hPNC}$ with a number of modes $n_q^j\geq 5$, then $G$ has no satisfying assignment. Via Corollary \ref{cor:allparticlenumberallparity} the satisfying assignment is characterized by total parity $P$ and has support on states with any hidden cluster particle number with this parity. However, Lemma \ref{lemma:degreeatleast3} implies that the assignment cannot have support on odd hidden cluster particle number $N_q=3$ (in case of odd parity) or $N_q=2$ (in case of even parity) when $n_q\geq 5$, hence there cannot be any satisfying assignment.
    \item \label{degree3} For a degree $\geq 3$ non-hPNC cluster $G_q\equiv j\in \mathsf{non-hPNC}$ with $n_q^j = 4$, $K_B\ket{\psi}$ (with the set $B$ defined through the graph $T_q$ of the cluster $G_q$), can only have support on states with odd cluster parity. Undoing the particle hole transformation, this excludes one choice of parity $P$ for this cluster in Eq.~\eqref{eq:clusprod} (which one depends on whether $|B|$ is odd or even). In Appendix \ref{app:exampleNG} we show that the satisfying assignment for such cluster of four fermionic modes is a non-Gaussian fermionic state.
    \item \label{needstobefree2} For a non-hPNC $G_{q}$ cluster with $n_q$ modes, let there be an cluster product assignment with some cluster parity $P_q$. Then this assignment will have support on states $\ket{S}$ in which any mode $j$ in the cluster is empty \textit{and} states for which it is filled: it has support on all occupations $x_j$ for any $j$ in the cluster, with the given fixed parity by Corollary \ref{cor:allparticlenumberallparity}. Hence any classical clause straddling the cluster on mode $j$ and a classical mode $k\in \mathsf{Class}$ or mode in another quantum cluster, will have to be satisfied (if it can) no matter the value for $x_j=0,1$.
\end{enumerate}
\label{cor:restricted-PN}
\end{corollary}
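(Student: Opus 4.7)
The plan is to dispatch each of the six items by applying Lemma \ref{lemma:degreeatleast3}, Corollary \ref{cor:allparticlenumberallparity}, and Proposition \ref{lemma:global}; once the cluster-product form is in hand, most items follow essentially mechanically.

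For the hPNC items, I would handle (\ref{hPNC3}) by applying Lemma \ref{lemma:degreeatleast3} directly to a degree-$\geq 3$ hPNC cluster, which excludes all hidden cluster particle numbers $N_q^i \in \{2,\ldots,n_q^i-2\}$ and leaves only $\{0,1,n_q^i-1,n_q^i\}$; Gaussianity of these four particle-number sectors is the standard observation reviewed in Section \ref{section:prelims}, and is preserved under $K_B$ by the discussion in Section \ref{sec:ph-trafo}. For (\ref{alwaysemptyfull}), I would note that in the particle-hole transformed basis the all-empty and all-filled states have $x_j=x_k$ on every edge, so no $\Pi^{1,q}_e$ clause activates the case \ref{1-edge} propagation rule of Lemma \ref{lemma:partners}; these states thus trivially satisfy the cluster's quantum clauses. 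For (\ref{needstobefree}), I would invoke Corollary \ref{cor:allparticlenumberallparity} to conclude that a satisfying assignment with $0 < N_q^i < n_q^i$ has support on all $\binom{n_q^i}{N_q^i}$ occupation patterns on the cluster; in particular each mode $j$ appears both empty and occupied in the superposition, so any classical clause coupling $j$ to an external mode must hold for both $x_j=0$ and $x_j=1$. The secondary claim for $n_q^i>2$ reduces to a short counting argument: on any pair of cluster modes $(j,k)$, the occupation $00$ appears in the support whenever $N_q^i \le n_q^i - 2$ and $11$ appears whenever $N_q^i \ge 2$, so the internal classical clauses the state satisfies must allow either $00$ or $11$ on $(j,k)$, and hence the all-empty or all-filled cluster assignment also satisfies every such internal clause.

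For the non-hPNC items the strategy combines Corollary \ref{cor:allparticlenumberallparity}, which forces the satisfying assignment to have support on every state of one fixed hidden cluster parity, with Lemma \ref{lemma:degreeatleast3} in the presence of a degree-$\geq 3$ vertex. For item 4 with $n_q^j \geq 5$, the allowed hidden particle numbers are only $\{0,1,n_q^j-1,n_q^j\}$, which contain no even element in $\{2,\ldots,n_q^j-2\}$ and no odd element in $\{3,\ldots,n_q^j-3\}$; hence either choice of cluster parity forces a forbidden value (concretely $N_q=2$ for even parity or $N_q=3$ for odd parity), so no satisfying assignment exists. For (\ref{degree3}) with $n_q^j=4$, only $N_q^j=2$ is excluded by Lemma \ref{lemma:degreeatleast3}, so only the odd-hidden-parity sector survives; undoing $K_B$ flips $\hat{P}_q^j$ by a sign depending on the parity of $|B|$ via the identity $K_B \hat{P}_q^j K_B^{-1}= \pm \hat{P}_q^j$ from Section \ref{sec:ph-trafo}, which singles out the one forbidden value of $P_q^j$ in the cluster-product form; the non-Gaussianity claim is deferred to Appendix \ref{app:exampleNG}. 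Item (\ref{needstobefree2}) is then an immediate consequence of Corollary \ref{cor:allparticlenumberallparity}: support on every parity-$P_q$ occupation ensures each cluster mode appears both empty and filled, so any classical clause straddling the cluster must be satisfied regardless of the cluster-side value.

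The main potential subtlety sits inside (\ref{needstobefree}), where one must enumerate exactly which $2$-mode patterns $\{00,01,10,11\}$ are present in the support of the non-trivial-$N_q^i$ state: the boundary cases $N_q^i=1$ and $N_q^i=n_q^i-1$ need separate treatment because only one of $\{00,11\}$ appears, but this is a direct counting exercise. Every other item reduces cleanly to one of the two established key results.
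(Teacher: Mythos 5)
Your proposal is correct and follows essentially the same route as the paper, which gives no standalone proof of this corollary but instead embeds the justifications in the items themselves: each point is read off from Lemma \ref{lemma:degreeatleast3}, Corollary \ref{cor:allparticlenumberallparity}, and the cluster-product form, exactly as you do. Your explicit counting for item \ref{needstobefree} (occupation $00$ present iff $N_q^i\le n_q^i-2$, occupation $11$ present iff $N_q^i\ge 2$, with the boundary cases $N_q^i=1$ and $N_q^i=n_q^i-1$ each still guaranteeing one of the two) is a slightly more careful spelling-out of the paper's ``at least 3 out of 4 occupations'' remark, and is accurate.
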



The corollary means that if a vertex in the cluster is of degree at least $3$, then, if the cluster is non-hPNC, we must have $n_q\leq 4$, and hence it is easy to verify whether an assignment exists. Otherwise, if it is hPNC and has a degree $\geq 3$ vertex, its assignments are simple and Gaussian as they are very restricted in particle number (in the particle-hole transformed basis), collapsing the search space to being linearly dependent on the number of modes in the cluster. We capture this efficiency more precisely in Lemma \ref{lem:ccheck} in the next section.

If all vertices in the cluster $G_q$ are of degree less than 3, the graph $G_q$ is a line or a loop. For a line, one can apply the Jordan-Wigner transformation and map the problem onto quantum 2-SAT on a line which can fully be solved, using existing methods. For a loop there is a small adaptation due to the Jordan-Wigner Pauli-$Z$ strings. The cases of $G_q$ being a line or a loop are dealt with in Section \ref{sec:linesandloops} below. 

Corollary \ref{cor:restricted-PN} also makes clear that the assignments of hidden particle number and parity on the cluster lead to strong restrictions on whether and how one can satisfy the remaining classical clauses in $E_c$, and this will be used to prove our main Theorems \ref{theorem:parityconservingP}, \ref{theorem:parityconstrainedP} and \ref{theorem:NPcomplete}. 

\subsubsection{Quantum clusters: lines and loops}
\label{sec:linesandloops}
To handle lines and loops, it is convenient to switch to qubit language as \textsc{Fermionic} 2-SAT projectors on a line map directly onto \textsc{Quantum} 2-SAT projectors on qubits via the Jordan-Wigner transformation in Eq.~\eqref{eq:JW}.

Let us first consider the taxonomy of graphs $G_q$ that are lines or loops, shown in Figure \ref{fig:linelooptaxonomy}. Remember that there is at most a single $\Pi^{02,q}_e$ or single $\Pi^{1,q}_e$ per edge $e$, otherwise one can reduce the pair to a classical edge, see the arguments in Section \ref{sec:higherrank}, and we do not yet consider any classical clauses in this section yet. If $G_q$ is a line, its maximal spanning hPNC subgraph $T_q$ (see Definition \ref{def:Tgraph}) is also a line, and $G_q$ is either hPNC or non-hPNC as in Figure \ref{fig:linelooptaxonomy}(a) and (b). If $G_q$ is a loop, then $T_q$ is either a loop, see Figure \ref{fig:linelooptaxonomy}(c) and (d), or a line, Figure \ref{fig:linelooptaxonomy}(e) and (f).
\begin{figure}[H]
    \centering
    \begin{subfigure}[t]{0.16\textwidth}
        \centering
        \includegraphics[width=1.08\textwidth]{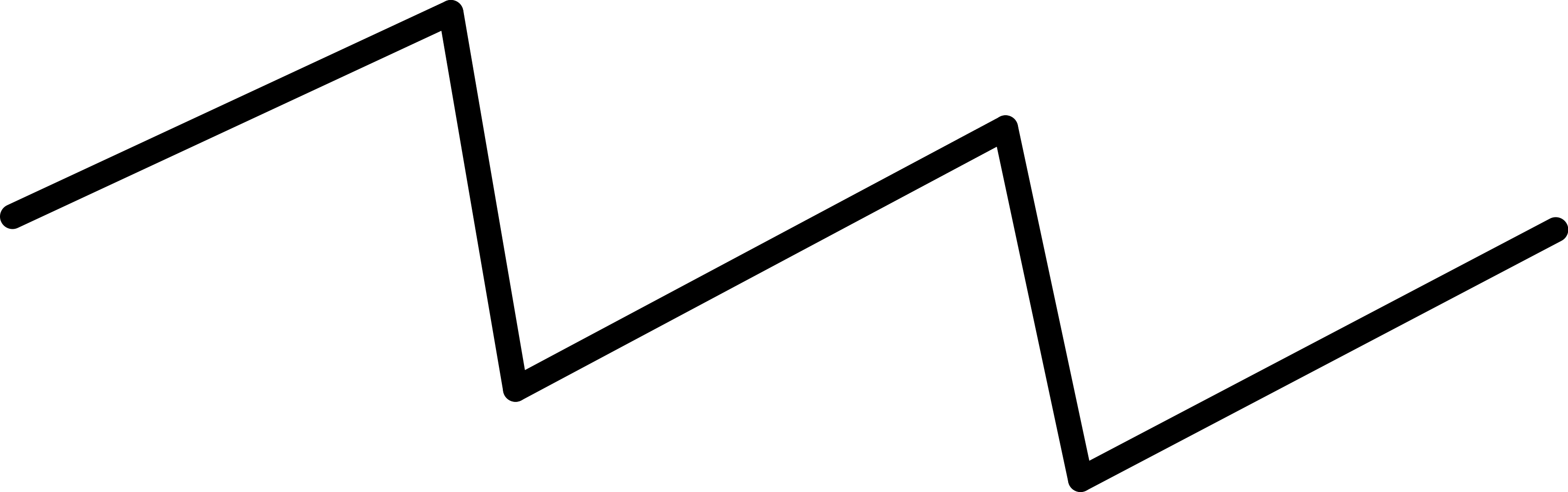}
        \caption{Lemma \ref{lem:line} (hPNC).}
    \end{subfigure}
    ~
    \centering
    \begin{subfigure}[t]{0.16\textwidth}
        \centering
        \includegraphics[width=1.08\textwidth]{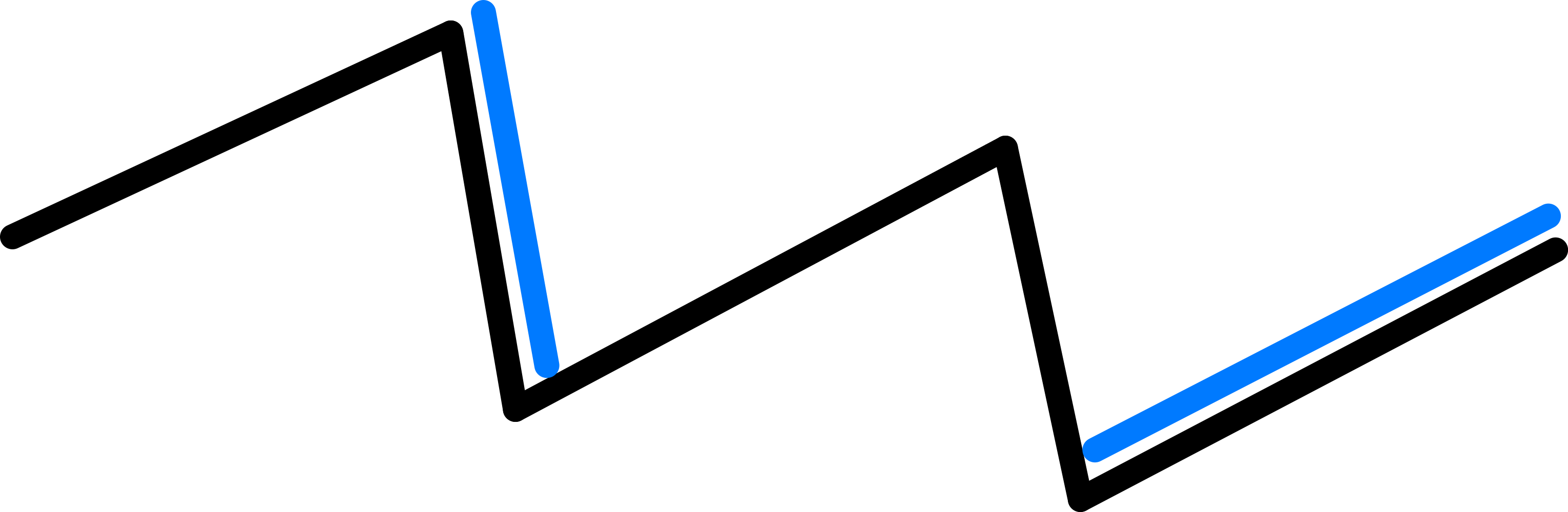}
        \caption{Lemma \ref{lem:nonhPNClinesloops} (non-hPNC).}
    \end{subfigure}
    \hspace{0.2cm}
    \begin{subfigure}[t]{0.14\textwidth}
        \centering
        \includegraphics[width=0.75\textwidth]{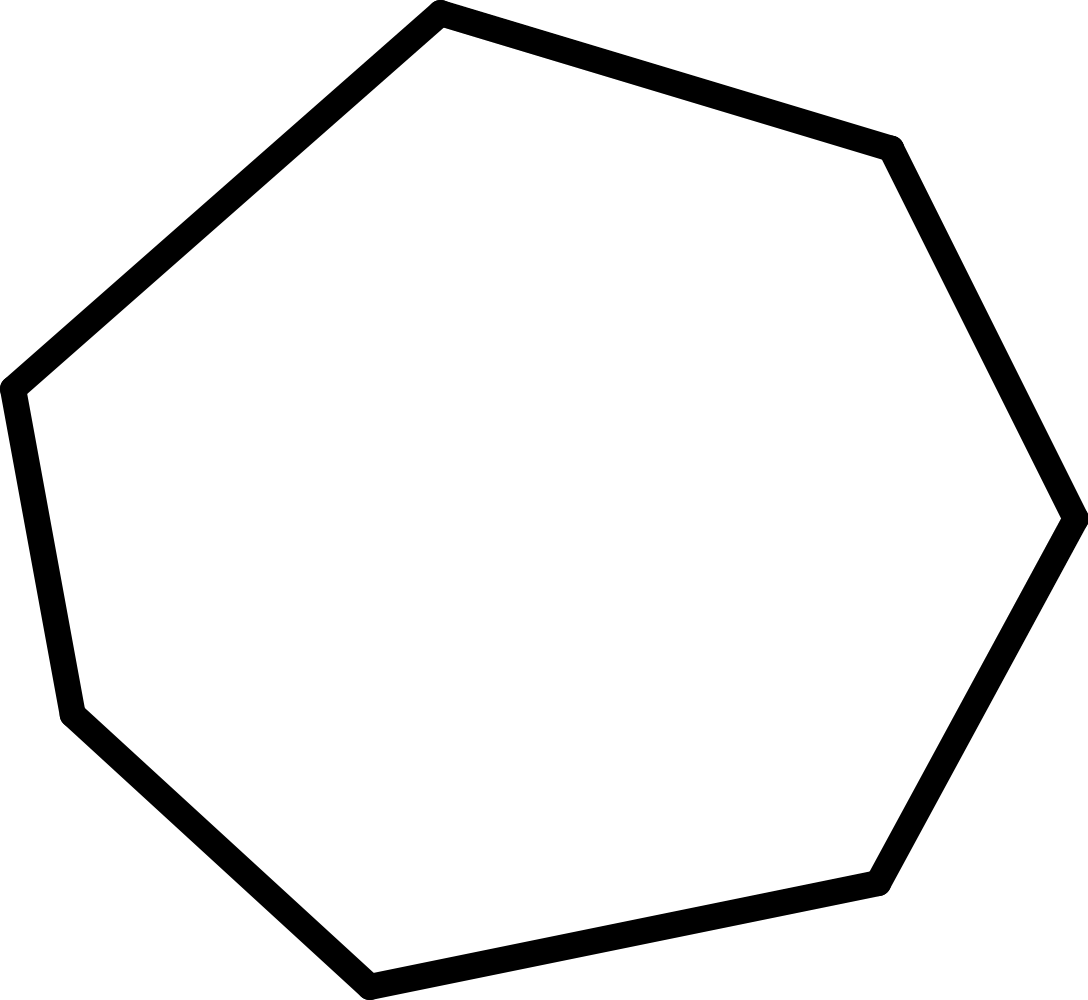}
        \caption{Lemma \ref{lem:loop} (hPNC).}
    \end{subfigure}
    ~
    \begin{subfigure}[t]{0.14\textwidth}
        \centering
        \includegraphics[width=0.75\textwidth]{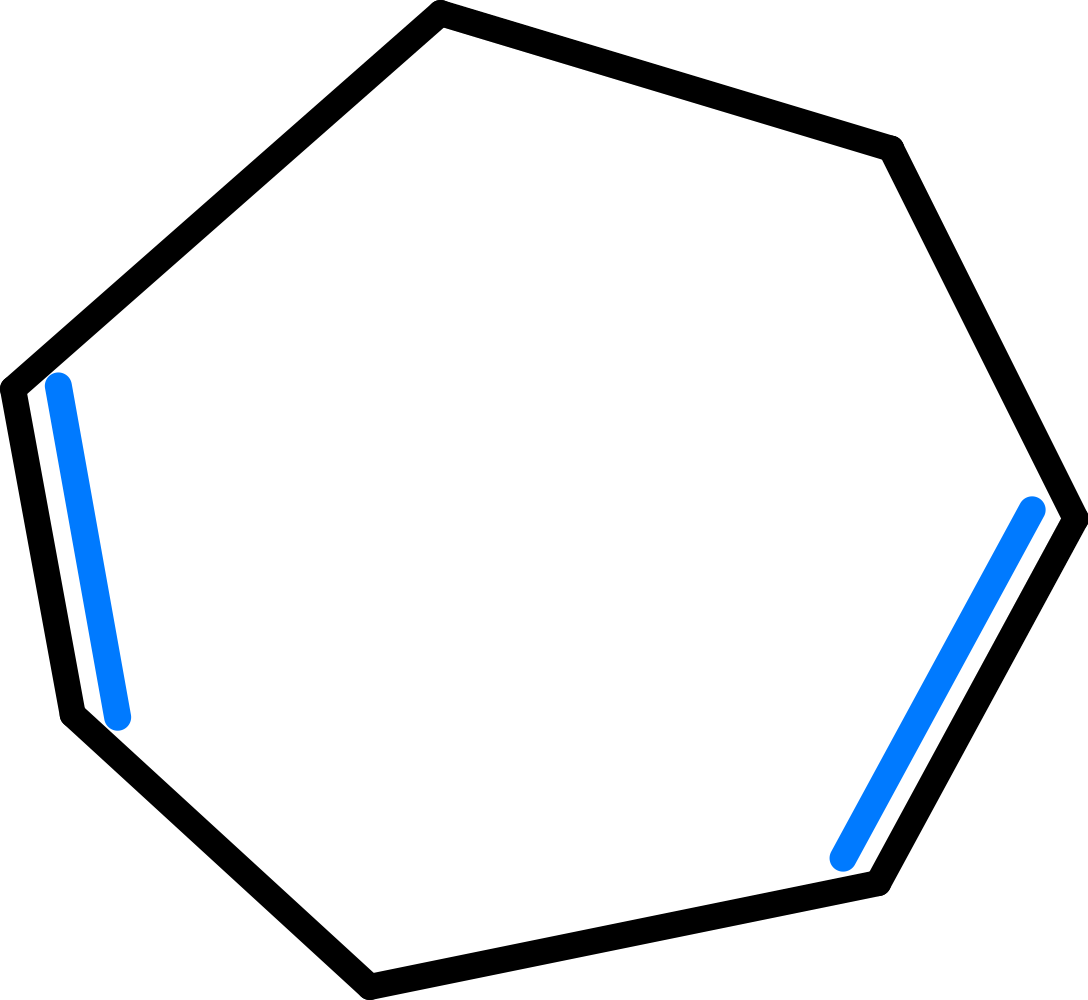}
        \caption{Lemma \ref{lem:nonhPNClinesloops} (non-hPNC).}
    \end{subfigure}%
    ~ 
    \begin{subfigure}[t]{0.14\textwidth}
        \centering
        \includegraphics[width=0.75\textwidth]{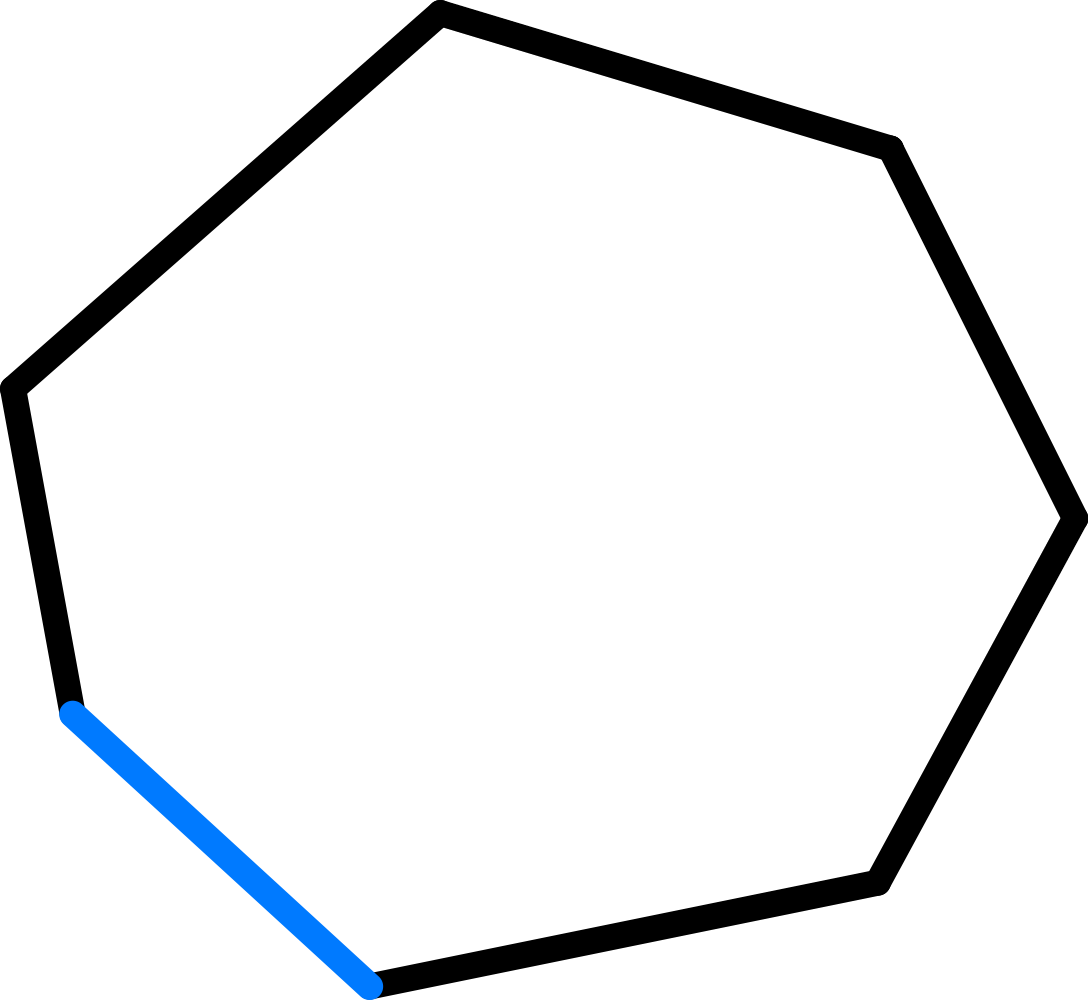}
        \caption{Lemma \ref{lem:nonhPNClinesloops} (non-hPNC).}
    \end{subfigure}
    ~
    \begin{subfigure}[t]{0.14\textwidth}
        \centering
        \includegraphics[width=0.75\textwidth]{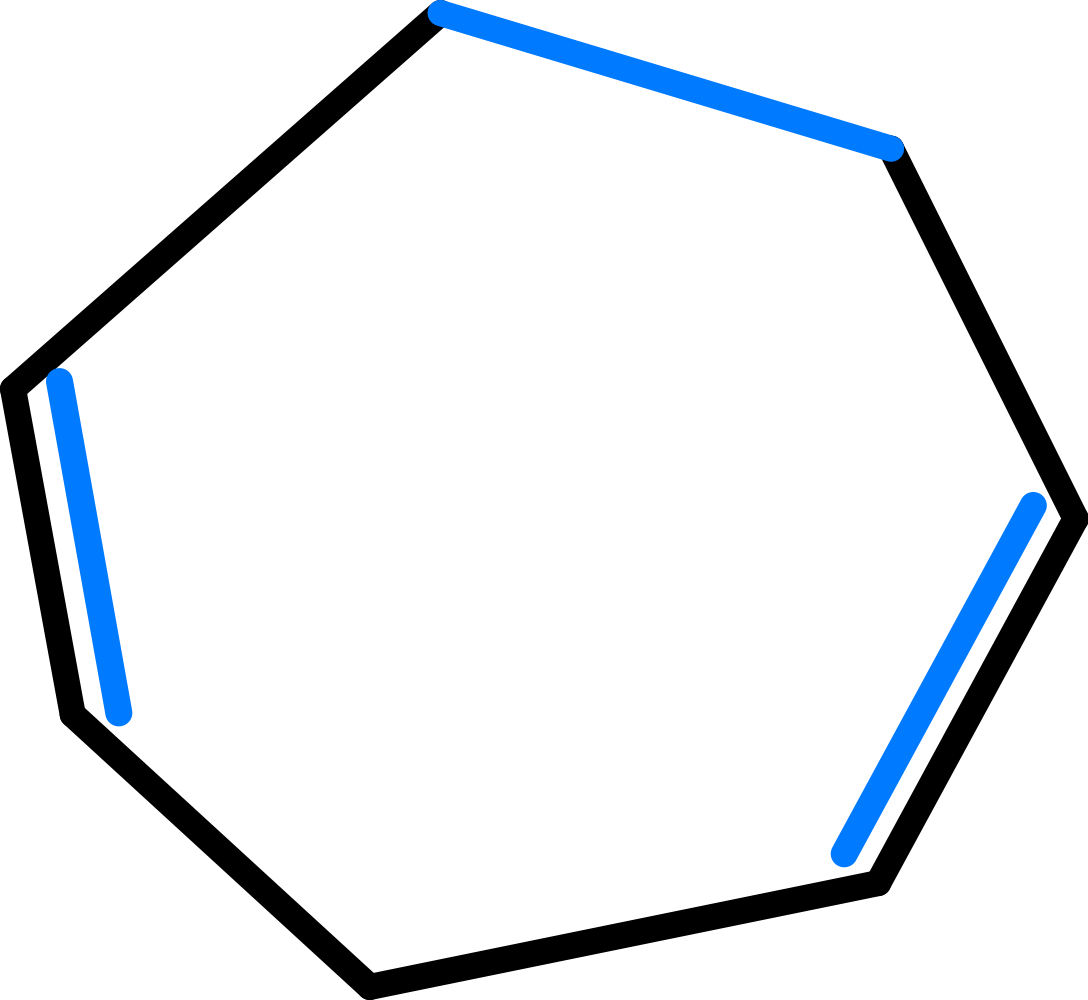}
        \caption{Lemma \ref{lem:nonhPNClinesloops} (non-hPNC).}
    \end{subfigure}
    \caption{Different types of line and loop clusters in the particle-hole-transformed basis. The black edges make up $T_q$ (representing $\Pi^{1,q}_e$-type clauses) and the blue edges are $\Pi^{02,q}_e$-type clauses.}
    \label{fig:linelooptaxonomy}
\end{figure}

First, we consider the case of $G_q$ being a line. The following Lemma is to be applied to $T_q$ in the particle-holed transformed basis, thus containing only $\Pi^{1,q}$-type clauses. It is essentially a rephrasing of some results in \cite{JWZ:SATstructure}. 

\begin{lemma}
    Given a line with $n_q$ fermionic modes and (rank-1) clauses $\Pi_{(i,i+1)}^{1,q}$ between modes $i$ and $i+1$ for $i\in \{1,2,\ldots,n_q-1\}$. For each particle number $N_q\in \{0,1,\ldots,n_q\}$, there is a unique assignment satisfying the clauses $\Pi_{(i,i+1)}^{1,q}$. 
    \label{lem:line}
\end{lemma}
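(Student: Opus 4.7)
The plan is to exploit the Jordan-Wigner transformation, which on a 1D line with only nearest-neighbor clauses reduces the fermionic problem to a strictly local qubit \textsc{Quantum} 2-SAT instance on a chain, and then to combine a product-state ansatz for existence with the Partner Lemma for uniqueness, invoking the structural result of \cite{JWZ:SATstructure} at the end to close the argument.

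First I would observe that, on a line with only adjacent clauses, the JW $Z$-strings in $a_i^\dagger a_{i+1}$ cancel pairwise and produce the strictly local qubit image $\sigma_i^+\sigma_{i+1}^-$, so every $\Pi_{(i,i+1)}^{1,q}$ becomes a genuinely two-local qubit projector onto $\beta_i\ket{10}+\gamma_i\ket{01}$ with $\beta_i,\gamma_i\neq 0$. Since each such projector commutes with $\hat N_q$, the joint null space decomposes as a direct sum over sectors of fixed $N_q$, and it suffices to prove existence and uniqueness of a satisfying assignment in each sector. The extremal sectors $N_q=0$ and $N_q=n_q$ are one-dimensional, spanned by $\ket{\rm vac}$ and $a_1^\dagger\cdots a_{n_q}^\dagger\ket{\rm vac}$ respectively; both states are trivially in the kernel of every clause, since each $\Pi_{(i,i+1)}^{1,q}$ is supported in the single-particle sector of its edge.

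For the remaining sectors $0<N_q<n_q$, I would establish existence via the product-state ansatz $\ket{\psi}=\bigotimes_{i=1}^{n_q}(c_i\ket{0}+d_i\ket{1})$ in the JW qubit basis: orthogonality to each $\beta_i\ket{10}+\gamma_i\ket{01}$ reduces to the linear recursion $\beta_i^\ast c_{i+1}d_i+\gamma_i^\ast c_i d_{i+1}=0$, which admits a nontrivial one-complex-parameter family of solutions. Projecting such a product state onto the fixed-$N_q$ subspace yields a nonzero satisfying assignment in that sector, because the sector projector commutes with every clause. For uniqueness, I would apply the Partner Lemma (item \ref{1-edge} of Lemma \ref{lemma:partners}) to an arbitrary satisfying assignment $\ket{\psi}$ with particle number $N_q$: the lemma pins down the ratio $\alpha_S/\alpha_{S'}$ for every pair of configurations related by a single adjacent $01\!\leftrightarrow\!10$ swap, and since the graph of $N_q$-particle configurations on a line is connected by such swaps, every amplitude is fixed by any single reference amplitude. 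This gives at most one satisfying assignment per sector.

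The main obstacle is verifying consistency of this propagation: the partner rule must produce the same amplitude along any two different swap-paths between two configurations. On a line the independent cycles in the configuration graph are generated by pairs of disjoint swaps on edges $(i,i+1)$ and $(j,j+1)$ with $|i-j|\geq 2$; these swaps commute as qubit operators, and one checks that the factors $-u_{(i,i+1)}$ and $-u_{(j,j+1)}$ and their inverses pair up to give $+1$ once the JW signs are tracked (as illustrated by the $n_q=4$, $N_q=2$ square cycle $1010\to 0110\to 0101\to 1001\to 1010$). Rather than carrying out this sign-chase for every cycle, I would appeal to the corresponding uniqueness-per-sector structural statement in \cite{JWZ:SATstructure} for nearest-neighbor qubit 2-SAT on a chain, which is exactly the content needed after the JW reduction.
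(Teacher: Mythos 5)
Your proof is correct, but it takes a genuinely different route from the paper. The paper also starts from the Jordan--Wigner reduction to a strictly local qubit chain, but then writes each projector as $B_i\otimes I\,\ket{\Psi^-}\bra{\Psi^-}\,B_i^{\dagger}\otimes I$ and conjugates the whole sum by a product of diagonal invertible matrices $A_2\otimes\cdots\otimes A_{n_q}$ into the ferromagnetic Heisenberg model $\sum_i\ket{\Psi^-}\bra{\Psi^-}_{i,i+1}$; the $U^{\otimes n_q}$-invariance of that model forces its null space to be the symmetric subspace, which contains exactly one (Dicke) state per excitation number, and pulling back through the number-preserving conjugation gives existence and uniqueness in one stroke. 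You instead get existence from a product-state ansatz (the two-term recursion on the ratios $d_i/c_i$, solvable with all entries nonzero because $\beta_i,\gamma_i\neq 0$) followed by projection onto the fixed-$N_q$ sector, and uniqueness from the Partner Lemma plus connectivity of the $N_q$-particle configurations under adjacent swaps. Both are sound; the paper's version yields the explicit closed form $I\otimes A_2^{\dagger\,-1}\otimes\cdots\otimes A_{n_q}^{\dagger\,-1}\ket{\psi^{N_q}_{\rm FM\text{-}Heis}}$, which it reuses when analyzing the loop-closing clause in Lemma \ref{lem:loop}, whereas yours is more elementary and makes the uniqueness mechanism transparent. One remark: your worry about path-independence of the propagation, and the concluding appeal to \cite{JWZ:SATstructure}, are unnecessary. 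Propagation along any spanning tree of the configuration graph already shows the solution space in each sector is at most one-dimensional, and your product-ansatz existence argument shows it is at least one-dimensional; together these force consistency of the relations along every cycle, so the sign-chase (and the external reference) can be dropped.
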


\begin{proof}
    Via the Jordan-Wigner transformation in Eq.~\eqref{eq:JW}, each 2-mode rank-1 projector $\Pi_{(i,i+1)}^{1,q}$ becomes a projector onto a two-qubit quantum state $\Pi_{i,i+1}^{\rm qubit}=\ket{\psi}\bra{\psi}_{i,i+1}$ with $\ket{\psi}_{(i,i+1)}= \alpha\ket{01} + \beta\ket{10}$ (where $|\alpha|^2+|\beta|^2 = 1$). The entangled state $\ket{\psi}_{(i,i+1)}$ can generally be expressed as $B_i \otimes I \ket{\Psi^-}$ with a two-qubit singlet state $\ket{\Psi^-}=\frac{1}{\sqrt{2}}(\ket{01}-\ket{10})$ and some diagonal, invertible $B_{i}$. 
    
    Let $H=I \otimes A_2 \ldots \otimes A_{n_q} H_{\rm FM-Heis} I \otimes A_2^{\dagger} \ldots \otimes A_{n_q}^{\dagger}\geq 0$, with $H_{\rm FM-Heis}=\sum_i \ket{\Psi^-}\bra{\Psi^-}_{i,i+1}$, the 1D ferromagnetic Heisenberg model. Here $A_2$ is uniquely chosen such that $I \otimes A_2 \ket{\Psi^-}_{1,2}=B_1 \otimes I \ket{\Psi^-}_{1,2}=\ket{\psi}_{1,2}$. Given $A_2$, $A_3$ is chosen such that $A_2 \otimes A_3 \ket{\Psi^-}_{2,3}=
    B_2 \otimes I \ket{\Psi^-}_{2,3}=\ket{\psi}_{2,3}$ etc. Importantly, the satisfying assignments of $\{\Pi_{(i,i+1)}^{\rm qubit}\}$ are the zero-eigenvalue states of $H$.
    A zero-eigenvalue state with particle (or "excitation") number $N_q$ can thus be obtained by applying $I \otimes A_2^{\dagger \:-1} \otimes \ldots \otimes A_{N_q}^{\dagger \:-1}$ to any state $\ket{\psi_{\rm FM-Heis}^{N_q}}$ in the null-space of $H_{\rm FM-Heis}$, with equal particle number $N_q$ as the $A_i^{\dagger \:-1}$ matrices are diagonal. Hence, given a null-space solution with excitation number $N_q$ for $H_{\rm FM-Heis}$, one obtains a satisfying assignment for the fermionic problem on a line.
    
    Due to the $U^{\otimes n_q}$-invariance of $H_{\rm FM-Heis}$, the eigenspaces must be spanned by states of the form $\ket{\chi^{\otimes n_q}}$, hence symmetric subspaces of $n_q$ qubits \cite{harrow}. There is only one such state with a given particle number $N_q$, namely the fully permutation-symmetric state, which is in the null-space of $H_{\rm FM-Heis}$ since $\ket{\Psi^-}\bra{\Psi^-}_{i,i+1} \ket{\psi_{\rm FM-Heis}^{N_q}}=\ket{\Psi^-}\bra{\Psi^-}_{i,i+1} \pi(i,i+1)\ket{\psi_{\rm FM-Heis}^{N_q}}=-\ket{\Psi^-}\bra{\Psi^-}_{i,i+1} \ket{\psi_{\rm FM-Heis}^{N_q}}=0$ where $\pi(i,i+1)$ is the permutation between qubits $i$ and $i+1$. Note that the unique satisfying assignment for the quantum $2$-SAT clauses is thus $I \otimes A_2^{\dagger \:-1} \otimes \ldots \otimes A_{n_q}^{\dagger \:-1}\ket{\psi_{\rm FM-Heis}^{N_q}}$.
\end{proof}

Let us now consider a hPNC cluster for which both $G_q$ and $T_q$ are a loop as in Fig.~\ref{fig:linelooptaxonomy}(c). Again the following Lemma is to be applied to $T_q$ in the particle-hole-transformed basis in which it consists of $\Pi^{1,q}_e$ clauses only. For there to be any satisfying assignments on $T_q$, the clause closing the loop needs to be consistent with the other clauses. For \textsc{Fermionic} $2$-SAT clauses this consistency condition also turns out to depend on the cluster parity: 

\begin{lemma}
    Given a loop consisting of $n_q$ fermionic modes with clauses $\Pi^{1,q}_{(i,i+1 \text{\normalfont{ mod }} n_q)}$ for $i \in \{1,2,\ldots,n_q\}$, with coefficients $u_{(i,i+1 \text{\normalfont{ mod }} n)}$ defined in Eq. \eqref{eq:defue}. Besides the satisfying assignments at $N_q = 0$ (all-empty) and $N_q = n_q$ (all-filled), there are satisfying assignments at a given $N_q$ iff $(-1)^{N_q+n_q-1} \big[ 1/u_{(1,n_q)} \prod_{i = 1}^{n_q-1}u_{(i,i+1)} \big] = 1$. So, if $| 1/u_{(1,n_q)} \prod_{i = 1}^{n_q-1}u_{(i,i+1)} | = 1$, there are only satisfying assignments at all even $N_q$ (when $n_q$ is odd) or odd $N_q$ (when $n_q$ is even).
    \label{lem:loop}
\end{lemma}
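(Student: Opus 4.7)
The loop, viewed in the particle-hole transformed basis of $T_q$, has only $\Pi^{1,q}_e$ clauses, so the whole problem is particle-number conserving and by Corollary \ref{cor:allparticlenumberallparity} any satisfying assignment is labelled by $N_q$. Deleting the closing edge $(1,n_q)$ leaves a line, and Lemma \ref{lem:line} gives a unique satisfying state $\ket{\psi_{N_q}}$ on the line for every $N_q\in\{0,1,\ldots,n_q\}$. My plan is to ask precisely when this $\ket{\psi_{N_q}}$ additionally obeys $\Pi^{1,q}_{(1,n_q)}$. For $N_q=0$ and $N_q=n_q$ no classical basis state $\ket{S}$ in the support has $x_1\neq x_{n_q}$, so Lemma \ref{lemma:partners} case \ref{1-edge} imposes no constraint from the closing edge and the all-empty resp. all-filled state always works. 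The entire content of the lemma is therefore the range $1\le N_q\le n_q-1$.

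For this range I compute the amplitude ratio of one carefully chosen pair of partnered basis states in two ways. Take $S=\{1,2,\ldots,N_q\}$ and $S'=\{2,3,\ldots,N_q,n_q\}$, which differ only by swapping the occupations at modes $1$ and $n_q$. Applying Lemma \ref{lemma:partners} case \ref{1-edge} to the closing edge with $\text{sign}(1,S)=+1$ and $\text{sign}(n_q,S')=(-1)^{N_q-1}$ (since $a_{n_q}^{\dagger}$ sits at the back of $a_{S'}^{\dagger}$ past $N_q-1$ operators) gives $\alpha_S/\alpha_{S'}=(-1)^{N_q}\,u_{(1,n_q)}$.

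The delicate step is to compute the same ratio from the line. I propagate $S\to S'$ through an $(n_q-1)$-step path of adjacent Partner-Lemma swaps: first move the rightmost particle from mode $N_q$ through the empty modes using edges $(N_q,N_q+1),\ldots,(n_q-1,n_q)$, arriving at $\{1,2,\ldots,N_q-1,n_q\}$; then cascade leftward by moving the particle at $N_q-1$ into the freshly emptied $N_q$, then the one at $N_q-2$ into $N_q-1$, and so on down to moving the particle at $1$ into $2$ using edges $(N_q-1,N_q),(N_q-2,N_q-1),\ldots,(1,2)$, arriving exactly at $S'$. The key property of this path is that in every single swap the moving particle occupies the same rank in the ordered list of occupied modes before and after the swap, so the two $\text{sign}(\cdot,\cdot)$ factors in Lemma \ref{lemma:partners} cancel and each swap contributes exactly $-u_{(i,i+1)}$. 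Multiplying the $n_q-1$ contributions yields $\alpha_S/\alpha_{S'}=(-1)^{n_q-1}\prod_{i=1}^{n_q-1} u_{(i,i+1)}$.

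Equating the two expressions for $\alpha_S/\alpha_{S'}$ and rearranging gives the stated condition $(-1)^{N_q+n_q-1}\bigl[u_{(1,n_q)}^{-1}\prod_{i=1}^{n_q-1}u_{(i,i+1)}\bigr]=1$; conversely, if this identity holds then the unique line-satisfying state $\ket{\psi_{N_q}}$ automatically satisfies every Partner-Lemma relation from the closing edge (using uniqueness at fixed $N_q$), hence is a loop-satisfying state. The final claim is then immediate: when the bracket has modulus $1$ it is a unit phase $\pm 1$, and since $(-1)^{N_q+n_q-1}$ alternates with $N_q$ exactly one parity of $N_q$ can meet the condition, namely the parity opposite to that of $n_q$ when the phase is $+1$ (and identical to that of $n_q$ when the phase is $-1$). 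The main obstacle, and the reason for choosing this specific $(S,S')$ and this specific path, is the fermionic sign bookkeeping across many swaps; the choice above is what makes the moving particle's rank invariant in each swap, so all intermediate signs cancel and the calculation collapses to the clean product above.
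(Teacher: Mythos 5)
Your argument is essentially correct but takes a genuinely different route from the paper. The paper stays in the qubit picture of Lemma \ref{lem:line}: it Jordan--Wigner-transforms the loop-closing clause (picking up the $Z_2\cdots Z_{n_q-1}$ string as in Eq.~\eqref{eq:projqubit}), and asks when this projector annihilates the unique state $I\otimes A_2^{\dagger\,-1}\otimes\cdots\otimes A_{n_q}^{\dagger\,-1}\ket{\psi^{N_q}_{\rm FM\,Heis}}$, extracting the condition from the entries of the $A_i$ matrices. You instead work directly with fermionic amplitudes via the Partner Lemma (Lemma \ref{lemma:partners}), computing the single ratio $\alpha_S/\alpha_{S'}$ for $S=\{1,\ldots,N_q\}$, $S'=\{2,\ldots,N_q,n_q\}$ in two ways. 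Your two computations are correct: the closing edge gives $(-1)^{N_q}u_{(1,n_q)}$ via ${\rm sign}(1,S)=+1$ and ${\rm sign}(n_q,S')=(-1)^{N_q-1}$, and each adjacent swap along the line contributes exactly $-u_{(i,i+1)}$ because the sign factors in Eq.~\eqref{eq:coefficientrelationquantum1} cancel for nearest-neighbour moves (indeed this cancellation is automatic for \emph{any} adjacent swap, not only for your particular path). Equating yields precisely the stated condition. Your route is more elementary and makes the origin of the $(-1)^{N_q}$ transparent as the fermionic sign of hopping a particle past $N_q-1$ occupied modes; the paper's route buys a uniform treatment of lines and loops within one formalism.

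The one step you assert rather than prove is the sufficiency direction: "if the identity holds then $\ket{\psi_{N_q}}$ automatically satisfies every Partner-Lemma relation from the closing edge (using uniqueness at fixed $N_q$)." Uniqueness of the line state does not by itself deliver this — the closing edge imposes one relation per partnered pair $(T,T')$ with $x_1\neq x_{n_q}$, and you have verified only the pair $(S,S')$. The claim is true, but needs the extra observation that for any such pair one has $\alpha_T/\alpha_S=\alpha_{T'}/\alpha_{S'}$ (the chains of adjacent swaps taking $T\to S$ and $T'\to S'$ can be chosen identical on modes $2,\ldots,n_q-1$ and each swap contributes the same factor $-u_e$ in both chains), together with ${\rm sign}(1,T)={\rm sign}(1,S)=+1$ and ${\rm sign}(n_q,T')={\rm sign}(n_q,S')=(-1)^{N_q-1}$; hence all closing-edge relations hold or fail simultaneously. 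With that line added your proof is complete. A last cosmetic point: modulus one does not imply the bracket equals $\pm 1$; if it is a non-real phase then no $N_q\in\{1,\ldots,n_q-1\}$ is allowed at all, and your subsequent case split over $+1$ versus $-1$ (which is actually more careful than the lemma's final sentence) should be prefaced by that observation.
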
 

\begin{proof}
    The proof of Lemma \ref{lem:loop} is similar to the proof of Lemma \ref{lem:line}. The only difference is that in the case of a loop, the clause $\Pi^{1,q}_{(1,n_q)}$, when Jordan-Wigner transformed, becomes 
    \begin{multline}
        2\Pi_{(1,n_q)}^{\rm qubit} = I_{1}\otimes B_{n_q}\Big(\ket{1}_{1}\ket{0}_{n_q}-Z_{2}\ldots Z_{n_q-1}\ket{0}_{1}\ket{1}_{n_q}\Big)\times \\ \Big(\bra{1}_{1}\bra{0}_{n_q}-Z_{2}\ldots Z_{n_q-1}\bra{0}_{1}\bra{1}_{n_q}\Big)I_{1}\otimes B_{n_q}^{\dagger},
        \label{eq:projqubit}
    \end{multline}
    with $B_{n_q}$ diagonal and  $(B_{n_q})_{11}/(B_{n_q})_{22} = -\beta_{(1,n_q)}/\gamma_{(1,n_q)} = -1/u^{*}_{(1,n_q)}$ (see Eq. \eqref{eq:defue}).
    The satisfying assignments of the loop problem correspond to zero-energy states of $H$ in the proof of Lemma \ref{lem:line} that are also projected to zero by $\Pi_{(1,n_q)}^{\rm qubit}$. We have that $(B_{i})_{11}/(B_{i})_{22} = -\gamma_{(i,i+1)}/\beta_{(i,i+1)} = -u^{*}_{(i,i+1)}$ for $i\in \{1,2,\ldots,n_q-1\}$, from which the entries of $A_{2},\ldots,A_{n_q}$ follow (see the proof of Lemma \ref{lem:line}). 
    The relation $\Pi_{(1,n_q)}^{\rm qubit} I\otimes A^{\dagger \: -1}_{2} \otimes \ldots \otimes A^{\dagger \: -1}_{n_q}\ket{\psi^{N_q}_{\rm FM Heis}} = 0$ holds iff $u_{(1,n_q)} = (-1)^{N_q}(A_{n_q})_{22}^*/(A_{n_q})_{11}^*$ is obeyed (where we have used $(B_{n_q})_{11}/(B_{n_q})_{22} = -1/u^{*}_{(1,n_q)}$). Using that the entries of $A_{2},\ldots,A_{n_q}$ can be expressed in terms of $\{u_{(i,i+1)}\}_{i=1}^{n_q-1}$, this condition is equivalent to the condition $(-1)^{N_q+n_q-1} \big[ 1/u_{(1,n_q)} \prod_{i = 1}^{n_q-1}u_{(i,i+1)} \big] = 1$ from the lemma statement. 
\end{proof}

\subsubsection{non-hPNC lines and loops}

We have thus far considered the hPNC line or loop graphs. For non-hPNC lines and loops, however, we must consider the effect of the $\Pi^{02,q}_e$ clauses (in the particle-hole-transformed basis) on the satisfiability of these clusters as in Fig.~\ref{fig:linelooptaxonomy}(b),(d), (e) and (f).
These $\Pi^{02,q}_e$ clauses can be of two types: they can either be part of a rank-2 projector as in Fig.~\ref{fig:linelooptaxonomy}(d), or they can close a loop as in Figure \ref{fig:linelooptaxonomy}(e). We will refer to the former as a \textit{non-loop-closing} $\Pi^{02,q}_e$ clause and to the latter as a \textit{loop-closing} $\Pi^{02,q}_e$ clause. Of course, any loop $G_q$ contains at most one loop-closing $\Pi^{02,q}_e$ clause. If $G_q$ is a line, then all $\Pi^{02,q}_e$ clauses are non-loop-closing as in Fig.~\ref{fig:linelooptaxonomy}(b). If $G_q$ and its maximal spanning hPNC subgraph $T_q$ are loops, all $\Pi^{02,q}_e$ clauses are also non-loop-closing (Fig.~\ref{fig:linelooptaxonomy}(d)). 

Note that one can always apply a different particle-hole transformation (with a different $T_q$) such that the instance in Figure \ref{fig:linelooptaxonomy}(f) becomes an instance in Figure \ref{fig:linelooptaxonomy}(d), by ensuring that the loop is closed on a rank-2 edge. Therefore, if $G_q$ contains a loop-closing $\Pi^{02,q}_e$ clause, then wlog it is the only $\Pi^{02,q}_e$ clause in $G_q$. 

Lemma \ref{lem:nonhPNClinesloops} below provides a detailed account of conditions under which a parity $P_q$ is allowed on a quantum cluster $G_q$, provided that $P_q$ is allowed on $T_q$. The important take-away is that for any non-hPNC line or loop, these conditions come down to straightforward relations between the $u_e$ and $v_e$ coefficients of respectively $\Pi^{1,q}_e$ and $\Pi^{02,q}_e$ clauses along the line or loop. 

\begin{lemma}
Given a non-hPNC cluster $G_q$ which is a line or a loop on $n_q$ modes. The following statements hold in the particle-hole-transformed basis in which $T_q$ only has $\Pi^{1,q}_e$ clauses and $P_q$ refers to the parity of the particle-hole-transformed assignment. Assume that $T_q$ allows for satisfying assignments of parity $P_q$. If $G_q$ contains a single $\Pi^{02,q}_{e}$ clause, then the same parities of $T_q$ are allowed as satisfying assignments for $G_q$. If $G_q$ contains at least two $\Pi^{02,q}_{e}$ clauses, then suppose:
\begin{enumerate}
\item $G_q$ is a line (Figure \ref{fig:linelooptaxonomy}(b)) then any $P_{q}$ of $T_q$ is allowed for $G_q$ iff the following consistency condition is satisfied. For each pair of subsequent non-loop-closing $\Pi^{02,q}_e$ clauses on edges $(j,j+1)$ and $(k,k+1)$ ($j<k$), 
\begin{equation}
(\star)\;\;\; v_{(j,j+1)} = \Big[ \prod_{i=j}^{k-1}u_{(i,i+1)}u_{(i+1,i+2)}\Big]v_{(k,k+1)} 
\label{eq:nonloopclosingcondition}
\end{equation} 
must hold. 
\item $G_q$ is a loop and $T_q$ a line (Figure \ref{fig:linelooptaxonomy}(e). $P_q$ is always allowed on $G_q$ since there is only one $\Pi^{02,q}_e$ clause in $G_q$. 
\item Both $G_q$ and $T_q$ are loops (Figure \ref{fig:linelooptaxonomy}(d)). Any $P_{q}$ of $T_q$ is allowed for $G_q$, if condition (\:$\star$) holds on $\Pi^{02,q}_e$ clauses along the loop, with the following caveat. If one of the non-loop-closing clauses is on edge $(1,n_q)$, then the one condition (on subsequent pairs) involving $\Pi^{02,q}_{(1,n_q)}$ has $v_{(1,n_q)}$ replaced by $-1/u_{(1,n_q)}^{2}v_{(1,n_q)}$ on the RHS of (\:$\star$). 
\end{enumerate}
\label{lem:nonhPNClinesloops}
\end{lemma}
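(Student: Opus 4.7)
The plan is to work throughout in the particle-hole-transformed basis associated with $T_q$, so that $T_q$ contains only $\Pi^{1,q}_e$ clauses and $P_q$ is the cluster parity of the transformed state. Because $T_q$ is hPNC and is either a line or a loop, Lemma \ref{lem:line} or Lemma \ref{lem:loop} yields, for each particle number $N_q$ of parity $P_q$ allowed on $T_q$, a unique satisfying assignment $\ket{\psi_{N_q}^{T_q}}$ for the clauses of $T_q$ alone. Any satisfying assignment for $T_q$ of parity $P_q$ can therefore be written $\ket{\psi} = \sum_{N_q} c_{N_q} \ket{\psi_{N_q}^{T_q}}$ with the sum over allowed $N_q$ of parity $P_q$ and coefficients $c_{N_q} \in \mathbb{C}$ free. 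I would then ask when such a superposition can also satisfy the additional $\Pi^{02,q}_e$ clauses of $G_q$.

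The second step is to translate each $\Pi^{02,q}_{(j,j+1)}$ constraint into an equation on the $c_{N_q}$'s using case 2 of Lemma \ref{lemma:partners}. That case relates any basis state $\ket{S}$ with $x_j = x_{j+1} = 0$ to its partner $\ket{S'}$ (with $x_j = x_{j+1} = 1$ and everything else fixed) by the ratio $-\text{sign}(j(j+1),S')\,v_{(j,j+1)}$, so since $\ket{S}$ and $\ket{S'}$ lie in different particle-number sectors, the constraint pins down $c_{N_q+2}/c_{N_q}$ in terms of $v_{(j,j+1)}$ and the intra-cluster amplitude ratio $\alpha_{S'}/\alpha_S$ already fixed inside $\ket{\psi_{N_q+2}^{T_q}}$ and $\ket{\psi_{N_q}^{T_q}}$. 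If $G_q$ contains a single $\Pi^{02,q}_e$ clause, then in each sector exactly one ratio is fixed, the superposition is always consistent, and every parity allowed on $T_q$ survives to $G_q$. This handles both the one-clause subcase and case 2 of the lemma, since in case 2 the unique $\Pi^{02,q}_e$ clause is precisely the loop-closing one.

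For two or more $\Pi^{02,q}_e$ clauses, say $\Pi^{02,q}_{(j,j+1)}$ and $\Pi^{02,q}_{(k,k+1)}$ with $j<k$, two independent determinations of the same ratio $c_{N_q+2}/c_{N_q}$ must agree. The cleanest way to extract the resulting condition is to work in the $N_q = 0 \leftrightarrow N_q = 2$ sector with $\ket{S}$ the $T_q$-vacuum: the partner relation from $\Pi^{02,q}_{(j,j+1)}$ links the vacuum to the state with a pair at $(j,j+1)$ via $v_{(j,j+1)}$, and similarly for $(k,k+1)$. Both two-particle states lie inside $\ket{\psi_2^{T_q}}$, so their amplitudes are related by repeated application of case 1 of Lemma \ref{lemma:partners} along the intermediate $\Pi^{1,q}_{(i,i+1)}$ clauses; each elementary shift of the pair from $(i,i+1)$ to $(i+1,i+2)$ requires two single-particle hops and produces the factor $u_{(i,i+1)}u_{(i+1,i+2)}$, which upon telescoping yields exactly the product on the right-hand side of $(\star)$. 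The $\text{sign}(\cdot,\cdot)$ factors from neighbouring partner-lemma applications cancel pairwise on adjacent sites, and the remaining relation between $v_{(j,j+1)}$ and $v_{(k,k+1)}$ is precisely $(\star)$; the same equation arises for any other background occupation because the background contributes identical Jordan-Wigner signs to both sides.

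For the loop case 3 the same argument applies, except that the propagation between the two $\Pi^{02,q}_e$ clauses may now go the long way around via the wrap-around edge $(1,n_q)$. Propagating through this edge via case 1 of Lemma \ref{lemma:partners} drags the Jordan-Wigner string $Z_2 \cdots Z_{n_q-1}$ in the qubit representation, exactly as in the proof of Lemma \ref{lem:loop}, and its net contribution to a pair traversing $(1,n_q)$ is the factor $-1/u_{(1,n_q)}^{2}$ --- the origin of the modified replacement $v_{(1,n_q)} \to -v_{(1,n_q)}/u_{(1,n_q)}^{2}$ in the single consistency relation that involves $\Pi^{02,q}_{(1,n_q)}$. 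The main technical obstacle will be precisely this sign bookkeeping: tracking the $\text{sign}(\cdot,\cdot)$ factors from Lemma \ref{lemma:partners} together with the Jordan-Wigner string on the wrap-around edge and verifying that all other signs cancel, so that only the claimed $-1/u_{(1,n_q)}^{2}$ correction survives in the loop case, in direct parallel to the bookkeeping already carried out for Lemma \ref{lem:loop}.
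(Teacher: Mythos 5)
Your proposal is correct and follows essentially the same route as the paper's proof: both reduce the question to consistency of the amplitude relations imposed by Lemma \ref{lemma:partners} when a particle/hole pair is created at one $\Pi^{02,q}_e$ clause, migrated along the $\Pi^{1,q}_e$ clauses (accumulating the telescoping product of $u$'s), and annihilated at another, with the wrap-around edge supplying the $-1/u_{(1,n_q)}^{2}$ correction. Your reformulation in terms of matching the ratios $c_{N_q+2}/c_{N_q}$ between the unique per-particle-number $T_q$-states is just a slightly more structured packaging of the paper's cycle-consistency argument, and both accounts defer the same sign bookkeeping.
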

\begin{proof}
By assumption, $P_q$ is allowed on $T_q$. According to Corollary \ref{cor:allparticlenumberallparity}, if $P_q$ is allowed on $G_q$, then a satisfying assignment has support on all states $\ket{S}=a_{S}^{\dagger}\ket{\rm vac}$ with parity $P_q$. For $P_{q}$ to be allowed on $G_{q}$, the $\Pi^{02,q}_e$ clauses on $G_q$ should not lead to inconsistencies when connecting these states of parity $P_q$ via Lemma \ref{lemma:partners}. Such inconsistencies can come up when, starting with any $\ket{S}$, one repeatedly applies Lemma \ref{lemma:partners} to end up at state $\ket{S}$ again. If the associated coefficient relations in Lemma \ref{lemma:partners} are not consistent, $P_q$ is excluded on $G_q$. For $G_q$ lines and loops containing a single $\Pi^{02,q}_e$ clause, these inconsistencies are always avoided. For non-hPNC lines and loops with multiple $\Pi^{02,q}_e$ clauses, the consistency of these relations can be checked rather straightforwardly. To see this, we distinguish between cases (1), (2) and (3) in the lemma statement. 

For case (1), all $\Pi^{02,q}_e$ clauses are non-loop-closing. The following process, realized by repeated application of Lemma \ref{lemma:partners}, can lead to inconsistency. Start off with $\ket{S}$ with equal occupation on an edge with a $\Pi^{02,q}_e$ clause. One creates a pair of particles/holes on edge $e$, migrates them to a clause $\Pi^{02,q}_{e'}$ at edge $e'$ via $\Pi^{1,q}_e$ clauses, and undoes the creation of the particle/hole pair through clause $\Pi^{02,q}_{e'}$ to end up at $\ket{S}$. Provided that the condition in the lemma statement holds, this can never lead to inconsistency. Ensuring that the condition holds on \textit{subsequent} $\Pi^{02,q}_e$ clauses along $G_q$ is sufficient to ensure the consistency between non-subsequent $\Pi^{02,q}_e$ clauses. 

For case (2), the $\Pi^{02,q}_e$ clauses cannot lead to inconsistency, since there is only one such clause in $G_q$. 

For case (3), the same process as for case (1) determines the consistency conditions. However, there can be a non-loop-closing $\Pi^{02,q}_{(1,n_q)}$ clause (i.e., on edge $(1,n_q)$). Again using Lemma \ref{lemma:partners}, it can be shown straightforwardly that the altered $(\star)$ condition implies consistency between the $\Pi^{02,q}_{(1,n_q)}$ clause and its subsequent $\Pi^{02,q}_{e}$ clause along the loop. 
\end{proof} 

Importantly, we can show that satisfying assignments on non-hPNC clusters which are lines or loops are always fermionic Gaussian states of fixed parity. This is argued using the fact that a $\Pi^{1,q}_e+\Pi^{02,q}_e$ projector is quadratic, see Section \ref{sec:higherrank}, and arguing that one can always add extra clauses onto edges in the cluster so that all edges become of this form $\Pi^{1,q}_e+\Pi^{02,q}_e$. Satisfying assignments must then lie in the null-space of a quadratic fermionic (positive semi-definite) Hamiltonian, hence be a Gaussian state.

\begin{lemma}
    Given a non-hPNC cluster $G_q$ which is a line or loop. The satisfying assignments (if they exist) on this cluster are Gaussian states. 
\label{cor:onlygaussianSAs}
\end{lemma}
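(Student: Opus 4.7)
The plan is to leverage the observation in Section~\ref{sec:higherrank}: a rank-$2$ projector of the form $\Pi^{1,q}_e + \Pi^{02,q}_e$ is purely quadratic in the fermionic operators, because the quartic contributions $\propto a_j^\dagger a_j a_k^\dagger a_k$ from the two clauses are opposite in sign and cancel. My strategy is to extend the cluster Hamiltonian by adding clauses on each edge of $G_q$ so that every edge carries such a rank-$2$ projector, while keeping a chosen satisfying assignment $\ket{\psi}$ in the null-space. The extended Hamiltonian will then be positive semi-definite and quadratic, and a short Bogoliubov argument forces $\ket{\psi}$ to be Gaussian.

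Concretely, I first fix $\ket{\psi}$ to be a satisfying assignment on $G_q$; by Corollary~\ref{cor:allparticlenumberallparity} applied to non-hPNC clusters, $\ket{\psi}$ has a definite cluster parity $P_q$, is the unique satisfying assignment in that parity sector, and has support on all $2^{n_q-1}$ basis states of parity $P_q$. For each edge $e = (j,k)$ of $G_q$ that currently carries only a $\Pi^{1,q}_e$ clause (resp.\ only a $\Pi^{02,q}_e$ clause), I would add the missing partner clause with coefficient $v_e$ (resp.\ $u_e$) read off from the amplitude ratios of $\ket{\psi}$ so that Eq.~\eqref{eq:coefficientrelationquantum02} (resp.\ Eq.~\eqref{eq:coefficientrelationquantum1}) in Lemma~\ref{lemma:partners} is already obeyed by $\ket{\psi}$. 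A consistent such coefficient exists because $\ket{\psi}$ is a global satisfying assignment on $G_q$, so the amplitude ratios between partnered basis states around any cycle of the line or loop are automatically self-consistent, in analogy with the consistency condition $(\star)$ of Lemma~\ref{lem:nonhPNClinesloops}. After this extension, every edge carries the quadratic projector $\Pi^{1,q}_e + \Pi^{02,q}_e$, and the augmented Hamiltonian $H' := \sum_{e\in E_q} (\Pi^{1,q}_e + \Pi^{02,q}_e)$ restricted to $G_q$ is PSD, quadratic, and contains $\ket{\psi}$ in its null-space.

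To finish, I diagonalize $H'$ via a Bogoliubov transformation, writing $H' = \sum_k \epsilon_k b_k^\dagger b_k + \mathrm{const}$ with $\epsilon_k \geq 0$. Its null-space is then spanned by the Gaussian states $b_S^\dagger \ket{\mathrm{vac}_b}$ with $S \subseteq \{k : \epsilon_k = 0\}$, each of definite parity. Because the null-space of $H'$ is contained in that of the original cluster Hamiltonian, Corollary~\ref{cor:allparticlenumberallparity} forces it to be at most one-dimensional in each parity sector; equivalently, there can be at most one zero Bogoliubov mode. Hence in the parity sector $P_q$ the null-space of $H'$ is one-dimensional and spanned by a single Gaussian basis vector, which must therefore equal $\ket{\psi}$. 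This shows $\ket{\psi}$ is Gaussian.

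I expect the principal obstacle to be the consistency check in the middle paragraph: one must verify, with full fermionic sign bookkeeping, that the coefficients extracted from $\ket{\psi}$ along different edges of the line or loop are pairwise compatible. I would discharge this by first reading off coefficients along the spanning subgraph $T_q$ of Definition~\ref{def:Tgraph}, where compatibility is immediate since the associated partner relations propagate from a single reference amplitude, and then checking compatibility on the at most one remaining loop-closing edge via the conditions of Lemmas~\ref{lem:loop} and~\ref{lem:nonhPNClinesloops} already derived in the preceding sections.
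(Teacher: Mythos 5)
Your proposal is correct and follows essentially the same route as the paper's proof: augment every edge of the line or loop with the missing partner clause (chosen consistently with the satisfying assignment, via the condition $(\star)$ of Lemma~\ref{lem:nonhPNClinesloops}) so that all projectors become the quadratic rank-2 form $\Pi^{1,q}_e+\Pi^{02,q}_e$ of Section~\ref{sec:higherrank}, and then conclude Gaussianity from the null-space of the resulting PSD quadratic Hamiltonian. Your closing Bogoliubov step, invoking the one-dimensionality of the null-space per parity sector from Corollary~\ref{cor:allparticlenumberallparity}, is a slightly more explicit justification of the final inference than the paper gives, but it is not a different argument.
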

\begin{proof}
Let us first consider line or loop clusters (in the particle-hole transformed basis) for which the only $\Pi^{02,q}_e$ clauses are non-loop-closing. There can be multiple such clauses (i.e., cases (1) and (3) in Lemma \ref{lem:nonhPNClinesloops}), or just a single one. In the former case, we assume Eq. \eqref{eq:nonloopclosingcondition} is satisfied (otherwise, the cluster would not be satisfiable).
Let us add a non-loop-closing $\Pi^{02,q}_e$ clause to an edge not yet containing one. If this $\Pi^{02,q}_e$ clause also obeys Eq. \eqref{eq:nonloopclosingcondition}, then the satisfying assignments of the original problem are still satisfying assignments. One can keep adding $\Pi^{02,q}_e$ clauses in this manner until \textit{all} edges in the cluster are of type $\Pi^{1,q}_e+\Pi^{02,q}_e$. In that case, all cluster projectors are quadratic and hence the satisfying assignments are Gaussian states as per observations in Section \ref{sec:char}.

What is left is to consider clusters whose only $\Pi^{02,q}_{e}$ clause is a loop-closing $\Pi^{02,q}_e$ clause (i.e., Figure \ref{fig:linelooptaxonomy}(e) and case (2) in Lemma \ref{lem:nonhPNClinesloops}). 
Importantly, a quantum cluster with exactly the same satisfying assignments can be obtained by deleting the loop-closing $\Pi^{02,q}_e$ clause, and adding an appropriate non-loop-closing $\Pi^{02,q}_e$ clause anywhere along $T_q$. Hence we have obtained a cluster of type (1) in Lemma \ref{lem:nonhPNClinesloops}, for which we have already shown that the lemma statement holds. 
\end{proof}

\subsubsection{Efficient verification of allowed hidden particle number or parity on a quantum cluster}
\label{sec:eff-ver}
The following Lemma captures that verifying what cluster particle numbers and parities are allowed in the cluster-product form assignment, Eq.~\eqref{eq:clusprod}, already restricted by the results in the previous section, is classically efficient.

\begin{lemma}[Cluster checks]
    Given a quantum cluster $G_q = (V_q,E_q)$ consisting of $n_q$ modes and $m_q$ quantum clauses. If $G_q$ is an hPNC cluster, one can check in $O(n_q+m_q)$ time which hidden cluster particle numbers $N_q$ (being the cluster particle number of $K_{B}\ket{\psi}$, with $\ket{\psi}$ a satisfying assignment) correspond to satisfying assignments. If $G_q$ is a non-hPNC cluster, one can check in $O(n_q+m_q)$ time which cluster parities correspond to satisfying assignments.
    \label{lem:ccheck}
\end{lemma}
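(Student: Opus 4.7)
The plan is to begin by constructing the maximal spanning hPNC subgraph $T_q$ via the greedy procedure of Definition \ref{def:Tgraph}, which runs in $O(n_q+m_q)$ time and simultaneously decides whether $G_q$ is hPNC (namely, iff no clauses of $G_q$ are left outside $T_q$). All subsequent analysis is carried out in the particle-hole-transformed basis associated with the bipartition $A\cup B$ of $T_q$; this transformation merely relabels clauses between types $\Pi^{1,q}_e$ and $\Pi^{02,q}_e$ and incurs no computational cost. Thereafter the argument splits into the hPNC and non-hPNC cases.

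For an hPNC cluster, all clauses of $G_q$ become of type $\Pi^{1,q}_e$ in the new basis, so $N_q=0$ and $N_q=n_q$ are always allowed by Corollary \ref{cor:restricted-PN}(\ref{alwaysemptyfull}). If $G_q$ contains a vertex of degree $\geq 3$, Corollary \ref{cor:restricted-PN}(\ref{hPNC3}) restricts further to $N_q\in\{0,1,n_q-1,n_q\}$, so only $N_q=1$ and $N_q=n_q-1$ require verification. To test $N_q=1$, I pick any spanning tree of $G_q$, root it, set the amplitude of the single-particle basis state at the root to $1$, and traverse the tree once, using the propagation relation \eqref{eq:coefficientrelationquantum1} of Lemma \ref{lemma:partners} to compute the amplitude of the single-particle basis state localized at every other mode. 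For each of the at most $m_q$ non-tree clauses, I then verify in $O(1)$ that the already-assigned amplitudes at its two endpoints satisfy \eqref{eq:coefficientrelationquantum1}; $N_q=1$ is allowed iff every such check passes. The value $N_q=n_q-1$ is handled identically after a further particle-hole transformation on the entire cluster, which swaps particles and holes. If instead every vertex of $G_q$ has degree $\leq 2$, then $G_q$ is a line or a loop: in the former Lemma \ref{lem:line} tells us all $N_q$ are allowed, and in the latter I compute the product $\prod_{i=1}^{n_q-1}u_{(i,i+1)}/u_{(1,n_q)}$ in a single pass and read off the admissible $N_q$ from the quantization condition $(-1)^{N_q+n_q-1}\bigl[\,1/u_{(1,n_q)}\prod_{i=1}^{n_q-1}u_{(i,i+1)}\bigr]=1$ of Lemma \ref{lem:loop}.

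For a non-hPNC cluster, Corollary \ref{cor:restricted-PN} immediately handles the case of a vertex of degree $\geq 3$: no satisfying assignment exists when $n_q\geq 5$, and when $n_q\leq 4$ the cluster has constant size so both parities are brute-forced in $O(1)$. Otherwise $G_q$ is a line or a loop. I first determine the parities allowed by $T_q$: if $T_q$ is a line both parities are allowed by Lemma \ref{lem:line}, while if $T_q$ is a loop I invoke Lemma \ref{lem:loop} with a single pass. I then walk once along $G_q$, maintaining a running product of the $u_e$ coefficients of the $\Pi^{1,q}_e$ clauses traversed, and at every non-loop-closing $\Pi^{02,q}_e$ clause I verify the consistency relation $(\star)$ of Lemma \ref{lem:nonhPNClinesloops} against the previously encountered $\Pi^{02,q}_e$ clause (with the $-1/u_{(1,n_q)}^{2}v_{(1,n_q)}$ modification described there when the edge $(1,n_q)$ is involved). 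A lone loop-closing $\Pi^{02,q}_e$ clause imposes no additional constraint by Lemma \ref{lem:nonhPNClinesloops}(2). The entire pass is $O(n_q+m_q)$.

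The main obstacle is the hPNC high-degree case: the propagation on the chosen spanning tree must produce well-defined amplitudes, and every non-tree edge of $G_q$ must then close its cycle with exactly the right fermionic sign. The $\text{sign}(l,S)$ and $\text{sign}(lm,S)$ factors in Lemma \ref{lemma:partners} are genuinely fermionic, as witnessed by the $+1=-1$ obstruction in Lemma \ref{lemma:degreeatleast3}, so one must fix a single mode ordering at the outset and use it consistently during the greedy construction of $T_q$, the particle-hole relabeling, the spanning-tree traversal, and the non-tree-edge checks. Once this bookkeeping is in place each clause is touched $O(1)$ times, yielding the claimed $O(n_q+m_q)$ runtime in both the hPNC and non-hPNC cases.
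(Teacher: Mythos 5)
Your proposal is correct and follows essentially the same route as the paper's proof: the same case split (hPNC vs.\ non-hPNC, degree $\geq 3$ vs.\ lines/loops) resting on Corollary \ref{cor:restricted-PN} and Lemmas \ref{lemma:partners}, \ref{lem:line}, \ref{lem:loop} and \ref{lem:nonhPNClinesloops}. The only difference is that you spell out the linear-time consistency check for $N_q=1$ (spanning-tree propagation plus non-tree-edge verification) where the paper simply states that the $m_q$ partnering constraints among $n_q$ amplitudes can be checked in $O(n_q+m_q)$ time.
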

\begin{proof}
    First, suppose $G_q$ is a hPNC cluster. By Corollary \ref{cor:restricted-PN} point \ref{alwaysemptyfull}, hidden cluster particle numbers $N_q = 0$ and $N_q = n_q$ are {\em always} permitted. 
    
    If the hPNC cluster $G_{q}$ has a vertex of degree at least $3$, then by Corollary \ref{cor:restricted-PN} point \ref{hPNC3}, to see if a single-particle state with $N_{q} = 1$ and/or single-hole state with $N_q = n_q - 1$ are permitted, one simply lists all $m_q$ partnering constraints (Lemma \ref{lemma:partners}) between $n_{q}$ coefficients in the potential satisfying assignment and checks the consistency of these constraints in time $O(n_q+m_q)$. 
      On the other hand, suppose the hPNC cluster $G_{q}$ has no vertex of degree at least $3$. If $G_q$ is a line, all hidden cluster particle numbers are allowed via Lemma \ref{lem:line}, and the corresponding cluster particle states are unique. In case $G_q$ is a loop, then all cluster particle numbers for which the condition in Lemma \ref{lem:loop} is obeyed, are permitted, with the caveat that $N_q = 0$ and $N_q = n_q$ are permitted in any case.   

    Now suppose that $G_q$ is a non-hPNC cluster. If $G_q$ has a vertex of degree at least $3$, then by Corollary \ref{cor:restricted-PN} any assignment is excluded for $n_q\geq 5$, one parity is excluded for $n_q = 4$, and obviously $n_q$ cannot be $\leq 3$. Clearly, one can efficiently verify the existence of an assignment for $n_q=4$. If the non-hPNC cluster $G_q$ does not have a vertex of degree at least $3$, then $G_q$ can be a line or a loop and which (if any) parity is allowed for those clusters can be checked in $O(n_q+m_q)$ time using the conditions in Lemmas \ref{lem:loop} and \ref{lem:nonhPNClinesloops}.
\end{proof}

\section{An efficient classical algorithm for Fermionic 2-SAT (with and without fixed parity)}
\label{sec:fixedparity}

\subsection{Solving Fermionic 2-SAT: proof of Theorem \ref{theorem:parityconservingP}}
\label{sec:solvingfermionic2SAT}
Let us now prove Theorem \ref{theorem:parityconservingP}, which states that \textsc{Fermionic} 2-SAT can be solved in time $O(n+m)$. 
Solving \textsc{Fermionic} 2-SAT comes down to performing certain checks on properties of quantum clusters, and running a (classical) 2-SAT algorithm based on these constraints and the remaining classical clauses:

\begin{proof}[Proof of Theorem \ref{theorem:parityconservingP}]
Due to Proposition \ref{lemma:global} we can restrict the assignment to cluster-product form. We will be constructing a classical 2-SAT instance, which will be used for deciding whether the \textsc{Fermionic} 2-SAT instance is satisfiable. This classical 2-SAT instance is denoted by $\mathcal{C}$. First, let us consider only the constraints from quantum edges. 
\begin{enumerate}
\item \label{hPNC} We first consider all hPNC clusters. For these clusters, Corollary \ref{cor:restricted-PN} point \ref{alwaysemptyfull} implies that there are \textit{always} two classical assignments that satisfy all quantum clauses in the cluster and that are each other's negations (see Figure \ref{fig:F2SATinP} for an illustration). Now consider the set of classical clauses, either internal to the cluster or straddling between the cluster and some classical modes or between the cluster and some other quantum cluster. By Corollary \ref{cor:restricted-PN} point \ref{needstobefree} for $n_q > 2$, obeying any internal classical clauses on the cluster in any assignment with hidden particle number $N_q \neq 0,N_q \neq n_q$ would also allow for two classical assignments $N_q=0,n_q$. Hence wlog we can assume that we have to choose from these classical assignments for any $n_q>2$ cluster when constructing a global satisfying assignment. One can impose this using a set of 2-SAT clauses whose two unique satisfying assignments are these two classical assignments (which are each other's negations): note that we need to apply the particle-hole transformation $K_B$ to obtain these 2-SAT clauses. These 2-SAT clauses, the classical clauses internal to the cluster and any straddling clauses are included in $\mathcal{C}$. The case $n_q=2$ is dealt with separately for convenience. For $n_q=2$, if there is a single clause and no internal classical clause, we do the same as for $n_q>2$. If $n_q=2$ with one $\Pi^{1,q}_e$ clause and one $\Pi^{02,q}_e$ clause, the cluster is non-hPNC and we deal with it in the next point \ref{non-hPNC}. If $n_q=2$ and the projector is rank-3, then it is hPNC, with 2 classical internal clauses and there is at most one satisfying Gaussian, non-classical assignment. For example, the two classical clauses exclude the assignments 00 and 11, allowing only the state $(\alpha a_1^{\dagger}+\beta a_2^{\dagger})\ket{\rm vac}$ orthogonal to some $\Pi^{1,q}_{(1,2)}$ clause. Hence, in this case only, one has to choose this non-classical (Gaussian) assignment and handle the possible classical clauses straddling this cluster via Corollary \ref{cor:restricted-PN} point \ref{needstobefree}: a straddling clause on mode 1 in the cluster, i.e. $(x_1 \vee u)$ or $(\overline{x}_1 \vee u)$, gets replaced by clause $(u)$ with literal $u$ on some classical mode (and similarly for mode 2 in the cluster). These latter clauses are included in $\mathcal{C}$. 
\item \label{non-hPNC} For all quantum clusters that are non-hPNC, we first check via Lemma \ref{lem:ccheck} in $O(n+m)$ time whether there exists a satisfying assignment. Corollary \ref{cor:restricted-PN} point \ref{needstobefree2} says that if there exists a satisfying assignment, there can be no additional internal classical clauses (as the state has support on all occupations with given parity). We handle any straddling classical clauses by replacing $(x_m \vee u_i)$ or $(\overline{x}_m \vee u_i)$, with mode $m$ inside the non-hPNC cluster and classical literal $u_i$, by clauses $(u_i)$. These latter clauses are included in $\mathcal{C}$. The case when the classical clause is straddling between two quantum clusters can be dealt with similarly.
\end{enumerate}

\begin{figure}[t]
\centering
\includegraphics[width=0.7\textwidth]{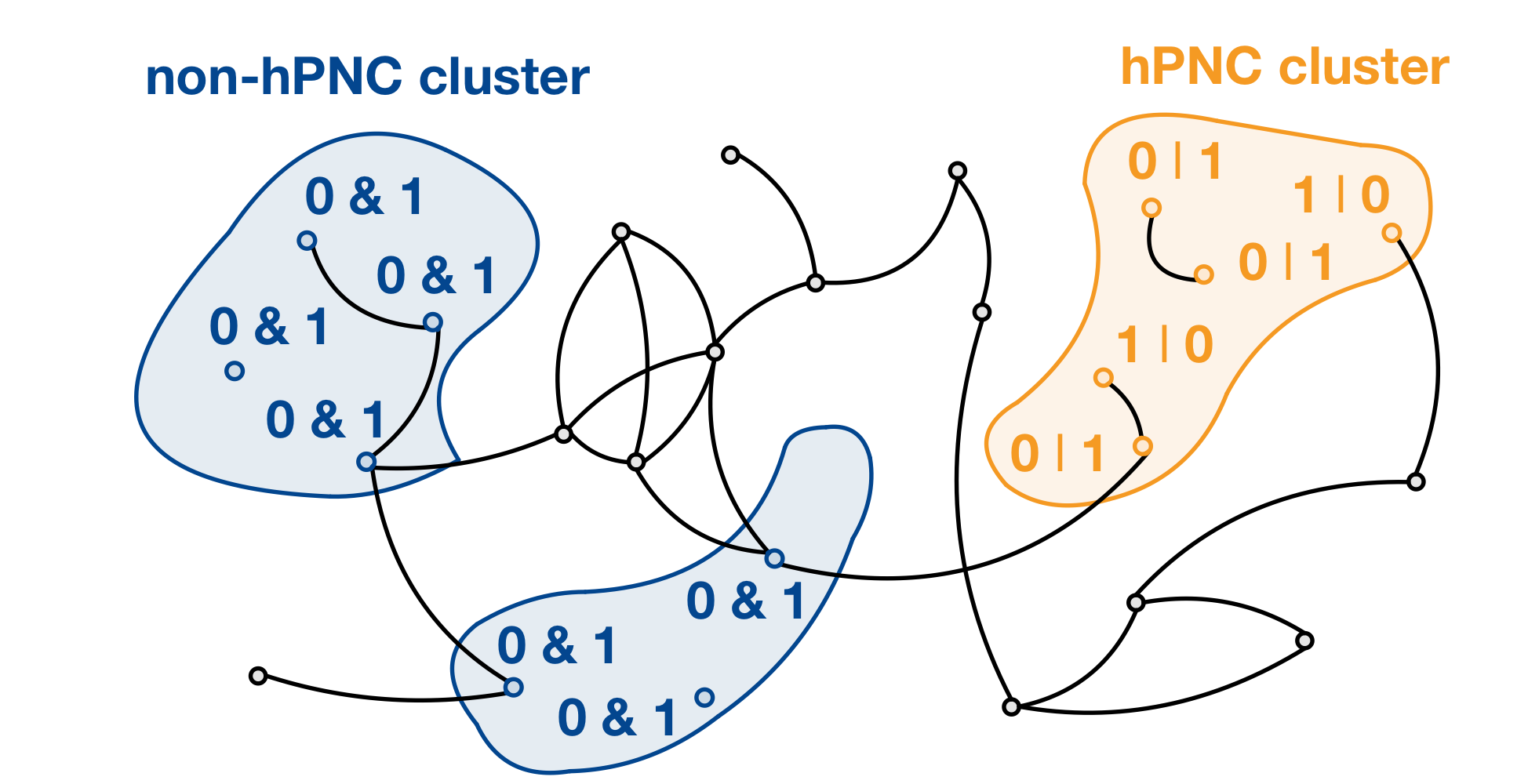}
\caption{Satisfying assignments on non-hPNC quantum clusters are such that each mode has to be both empty \textit{and} occupied, see Corollary \ref{cor:restricted-PN} point \ref{needstobefree2}. On hPNC clusters, one only has to assign either of two classical states to check whether the instance is satisfiable (apart from a special case when the cluster only has two modes, $n_q=2$), when we do not aim to find an assignment of given parity. The black vertices and edges respectively denote modes that are not in any quantum cluster (i.e., classical modes) and classical clauses. }
\label{fig:F2SATinP}
\end{figure}

After these preprocessing steps in which we may find that there can't be a satisfying assignment (then output NO), let the collection of classical clauses obtained via the preprocessing, as well as the remaining clauses on solely classical modes, be the classical 2-SAT instance $\mathcal{C}$. We then solve instance $\mathcal{C}$ in time $O(n+m)$ \cite{Lintimeclassical2sat}.
\end{proof}

\subsection{Solving Fermionic 2-SAT with fixed parity: proof of Theorem \ref{theorem:parityconstrainedP}}

Let us now address the question of whether fixing the parity changes the complexity of the \textsc{Fermionic} 2-SAT problem. We prove Theorem \ref{theorem:parityconstrainedP} in this section, which states that this problem can still be solved efficiently, in time $O(nm)$. Our algorithm for solving \textsc{Fermionic} 2-SAT with fixed parity $P\in \{-1,+1\}$ consists of running certain efficient checks on the allowed satisfying assignments on quantum clusters as a preprocessing step to construct a (classical) 2-SAT with fixed parity instance, on which one runs the $O(nm)$ time algorithm developed in Theorem \ref{thm:parity} in Appendix \ref{app:parityconstrained2SAT}. 

Importantly, non-hPNC clusters are treated differently than hPNC clusters in this next proof. Namely, any assignment in a non-hPNC cluster satisfies the classical cluster-straddling clauses (or not) {\em independent} of its parity, while for hPNC clusters, assignments of different parity can lead to different ways of satisfying the straddling classical clauses.

\begin{proof}[Proof of Theorem \ref{theorem:parityconstrainedP}.]
By Proposition \ref{lemma:global} an assignment of fixed parity is of cluster-product form. 
We proceed in three steps.
\begin{enumerate}
    \item We run the $O(nm)$ time algorithm (developed in Appendix \ref{app:parityconstrained2SAT}) for solving the (classical) 2-SAT instance $\mathcal{C}$ obtained in the proof of Theorem \ref{theorem:parityconservingP} via items \ref{hPNC} and \ref{non-hPNC}, with fixed parity. If the algorithm outputs an assignment of the right parity, we are done and output YES. If there is no assignment, output NO. If there is an assignment but with the wrong parity (which the algorithm will tell you), let's call the assignment $X_0$, and proceed to the next step. 
    \item We consider whether it is possible to flip the parity of any non-hPNC cluster in $X_0$. No variables internal to a non-hPNC cluster enter the instance $\mathcal{C}$ due to how we handle the straddling clauses in item \ref{non-hPNC} in the proof of Theorem \ref{theorem:parityconservingP}. Therefore, if we can flip the parity of a single cluster $G_q^j \in \mathsf{non-hPNC}$, the assignment on $G\backslash G_q^j$ of $X_0$ is still valid, and hence we have full assignment with the correct parity. According to Corollary \ref{cor:allparticlenumberallparity}, each non-hPNC cluster has at most one satisfying assignment per parity. Which parities (one or both) are allowed for each non-hPNC cluster can be checked in time $O(n+m)$ via Lemma \ref{lem:ccheck}. So we iterate over all non-hPNC clusters and if there is one whose parity can be flipped, we are done (output YES). If not, we proceed to the next step.
    \item The next step pertains to all hPNC clusters, except the $n_q=2$ clusters with a rank-3 edge which were treated differently in obtaining $\mathcal{C}$ in item \ref{hPNC} of the proof of Theorem \ref{theorem:parityconservingP}: we will not attempt flipping parity on these clusters as there is only one possible assignment. 
    
    Due to the failure of obtaining a flipped parity assignment in the previous two steps, we know that it is not possible to change the parity by flipping classical modes and/or flipping the hidden particle number $N_q=0$ (all-empty) to $N_q=n_q$ (all-filled) assignment on any hPNC cluster (which are the possible assignments inside the hPNC clusters in $\mathcal{C}$ which are examined in step 1). This has the following consequence. We define the classical 2-SAT instances $\mathcal{C}^{+,\mathsf{M}}$ which is a superset of the classical clauses of $\mathcal{C}$ and some classical clauses which pertain to a subset of non-hPNC clusters $\mathsf{M} \subseteq \mathsf{hPNC}$. That is, $\mathcal{C}^{+,\mathsf{M}}$ is obtained by adding to the collection of clauses in $\mathcal{C}$, for each cluster $G_q^j\in \mathsf{M}$, the following: replace each straddling clause between a mode $i \in G_q^j$ and classical mode, of the form $(x_i \vee u)$ or $(\overline{x}_i \vee u)$ by clause $(u)$ with $u$ a literal of the classical mode. One does the same for any straddling clause with another hPNC quantum cluster provided it is not in the set $\mathsf{M}$ (if it is non-hPNC or in the set $\mathsf{M}$, there would be no satisfying assignment for $\mathcal{C}^{+,\mathsf{M}}$). Due to failure in the previous two steps, either an instance $\mathcal{C}^{+,\mathsf{M}}$ will have no satisfying assignment, or an assignment of the wrong parity, i.e. the same as $X_0$, for any $\mathsf{M}$, since any satisfying assignment of $\mathcal{C}^{+,\mathsf{M}}$ is also a satisfying assignment for $\mathcal{C}$. Hence, the only way to still obtain a flipped parity assignment as compared to $X_0$ is to possibly flip the parity of some individual hPNC cluster by {\em not choosing} the hidden particle number $N_q=0$ or $N_q=n_q$ assignment on it, as we did for $\mathcal{C}$. Note that if we do not constrain the overall parity, as in Theorem \ref{theorem:parityconservingP}, there was never a reason to consider other possible assignments (see also Fig.~\ref{fig:F2SATinP}).
    
    Hence, we first consider for each hPNC cluster $G_q^i$ whether one can flip its parity, and then if so, we decide whether an altered classical 2-SAT instance $\mathcal{C}_{G_q^i}$ has a satisfying assignment using a 2-SAT solver. To check whether one can flip its parity, observe that no internal classical clauses must be present as any assignment with hiddden particle numbers $N_q\neq 0,N_q\neq n_q$ has the property that each mode $j$ in the cluster can be both 0 and 1. If classical clauses are present and/or the parity cannot be flipped for other reasons, we move onto the next hPNC cluster. As for checking on a hPNC cluster whether an assignment of flipped parity is allowed, given its quantum clauses, Lemma \ref{lem:ccheck} shows that this can be done efficiently, in $O(n+m)$ time. For a hPNC cluster which is a line or a loop, in principle non-Gaussian assignments with hidden particle number $N_q\neq 0,1,n_q-1,n_q$ may be allowed. However, one can observe that, if a flipped parity assignment is possible, one can take this assignment actually to be a Gaussian state, i.e. one only with hidden particle number $N_q=0,1,n_q-1,n_q$.
    
    If the parity can be flipped for cluster $G_q^i$, we construct $\mathcal{C}_{G_q^i}$. To get from $\mathcal{C}$ to $\mathcal{C}_{G_q^i}$ we (1) replace all straddling clauses between a mode $j$ in cluster $G_q^i$ and a classical mode $(x_j \vee u)$ and $(\overline{x}_j \vee u)$ by clause $(u)$. One does the same for a straddling clause with a mode in a hPNC quantum cluster (if there is a straddling clause with a non-hPNC quantum cluster, then there won't be a satisfying assignment when flipping cluster $G_q^i$). From $\mathcal{C}$ we remove all classical clauses which constrain the assignment on $G_q^i$ to have hidden particle number $N_q=0$ or $N_q=n_q$ as described in item \ref{hPNC} in the proof of Theorem \ref{theorem:parityconservingP}: there are no more variables in $\mathcal{C}_{G_q^i}$ which pertain to $\mathcal{C}_{G_q^i}$. If $\mathcal{C}_{G_q^i}$ has an assignment, then it must now be an assignment of the right parity and we output YES. If not, for all hPNC clusters that we iterate over, we output NO, and this process is clearly efficient.  
\end{enumerate}
\end{proof}

\section{PNC Fermionic 2-SAT with fixed particle number is NP-complete}
\label{sec:NPC}

Let us now consider the complexity of solving Problem \ref{problem:NparticleFkSAT}, which corresponds to deciding PNC \textsc{Fermionic} 2-SAT with particle number fixed to some $N\in \{0,1,\ldots,n\}$. In contrast to fixing the parity, we show that fixing the particle number to $N$ makes \textsc{Fermionic} 2-SAT NP-complete (Theorem \ref{theorem:NPcomplete}).  

\begin{proof}[Proof of Theorem \ref{theorem:NPcomplete}]
Proposition \ref{lemma:global} and Corollary \ref{cor:restricted-PN} show that satisfying assignments of PNC \textsc{Fermionic} 2-SAT are of cluster-product form. In the PNC version of \textsc{Fermionic} 2-SAT, quantum clauses are only of type $\Pi^{1,q}_{e}$. As a consequence, all quantum clusters are PNC, even without performing a particle-hole transformation. Therefore, we can characterize satisfying assignments on quantum clusters by the particle number on that cluster. A witness $W$ thus consists of a list of $O(n)$ integers $W = \big(\{N_q^{i}\}_{i\in \mathsf{hPNC}},\{x_{k}\}_{k\in \mathsf{Class}}\big)$. 
The verifier algorithm verifies for each quantum cluster $i$, using Lemma \ref{lem:ccheck} whether $N_q^{i}$ is indeed an allowed particle number on that cluster. Note that the verifier does not need to construct or have the actual fermionic state to do this. Using the collection of particle numbers on quantum clusters $N_q^{i}$ and Corollary \ref{cor:restricted-PN}, she infers for each mode in a quantum cluster whether it is empty, occupied or both. Combined with $\{x_{k}\}_{k\in \mathsf{Class}}$, it is then verified whether all classical clauses $e\in E_c$ are indeed satisfied. As a final step, the verifier checks if the total particle number of the witness $W$ is indeed the given $N$. This verification algorithm is efficient. To show NP-hardness, let us consider an instance of \textsc{Fermionic} 2-SAT consisting of monotone classical clauses only, i.e. only clauses of type $(x_{i}\lor x_{j})$. Deciding this problem with fixed Hamming weight $N$ is equivalent to deciding whether there is vertex cover with $N$ vertices, which is NP-complete. Indeed, it is known that classical 2-SAT with fixed Hamming weight, i.e. \textit{the weighted 2-SAT problem} (W2SAT) is NP-complete, see \url{https://en.wikipedia.org/wiki/2-satisfiability}.
\end{proof}

\begin{remark}[Classical Assignments]
Part \ref{hPNC} of the proof of Theorem \ref{theorem:parityconservingP} for hPNC clusters and classical modes shows that PNC \textsc{Fermionic} 2-SAT (which only has $\Pi^{1,q}_e$ quantum clauses and thus hPNC quantum clusters) without any particle number constraint always has a classical satisfying assignment, i.e., a satisfying assignment of the form $a_{S}^{\dagger}\ket{{\rm vac}}$, with the exception of 2-mode quantum clusters with rank-3 projectors in Gaussian states. Note that this is similar to quantum 2-SAT always having a product assignment with the exception of rank-3 edges \cite{ASSZ:linearSAT}. 
In contrast, when fixing particle number to some integer $N$ as in Problem \ref{problem:NparticleFkSAT}, there are instances with only \textit{non-Gaussian} satisfying assignments, like the satisfying assignments on a line for $1 < N_q < n_q-1$ in Lemma \ref{lem:line}.
\end{remark}


\section{Complexity of Fermionic k-SAT and related problems}
\label{sec:kSAT}

Let us provide some background and perspective on the (known) complexity of related fermionic problems. Using qubit-to-fermion mappings, such as the one in Ref.~\cite{HATH:ferm}, we can straightforwardly argue the following, see Appendix \ref{app:qma} for the proof.
\begin{lemma}
\textsc{Fermionic} $k$-\textsc{SAT} $\in$ \textsc{QMA} and \textsc{Fermionic} $k$-\textsc{SAT} is \textsc{QMA}$_1$-hard for $k=9$.
\label{lem:QMA1}
\end{lemma}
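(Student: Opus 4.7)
The plan is in two parts: containment in QMA via a standard verifier, and QMA$_1$-hardness via reduction from \textsc{Quantum} $3$-\textsc{SAT} using a qubit-to-fermion encoding from Ref.~\cite{HATH:ferm}.

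For QMA membership, I would take the witness to be the candidate satisfying fermionic state $\ket{\psi}$, interpreted as an $n$-qubit state via the Jordan-Wigner transformation in Eq.~\eqref{eq:JW}. The verifier samples a clause $\Pi_i$ uniformly from the instance, expresses it as a (generally nonlocal) qubit operator, measures it on the witness, and accepts iff the outcome is $0$. Completeness is immediate in the YES case: a satisfying assignment gives acceptance probability $1$. Soundness follows from the NO-case promise $\sum_i \bra{\psi}\Pi_i\ket{\psi} > 1/\mathrm{poly}(n)$, which forces the average rejection probability to be at least $1/(m\cdot \mathrm{poly}(n))$, giving the required inverse-polynomial gap after standard amplification.

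For the QMA$_1$-hardness, reduce from \textsc{Quantum} $3$-\textsc{SAT}, which is QMA$_1$-complete \cite{GN:3SAT}. Use a qubit-to-fermion encoding that places each logical qubit into $r$ fermionic modes, so that each $3$-qubit projector becomes a fermionic operator on at most $3r$ modes; the encoding of Ref.~\cite{HATH:ferm} takes $r=3$, giving locality $9$. Restricted to the code subspace, the encoded operator acts as the original qubit projector and commutes with the global parity, while outside the code subspace we are free to extend it by zero. The reduction outputs the union of (i) the encoded $3$-qubit projectors and (ii) per-qubit code-enforcing penalty projectors of support $\leq r$ modes, both of which are parity-conserving fermionic projectors.

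Completeness is direct: an exact satisfying assignment of the \textsc{Quantum} $3$-\textsc{SAT} instance, pushed forward through the encoding, is annihilated by every clause. For soundness, I would invoke Kitaev's nullspace projection lemma to argue that any state with small total violation is $O(1/\mathrm{poly}(n))$-close to the code subspace, and its projection onto the code subspace witnesses a small violation of the original \textsc{Quantum} $3$-\textsc{SAT} instance, preserving the inverse-polynomial gap up to polynomial rescaling. The main obstacle I expect is verifying that the Ref.~\cite{HATH:ferm} encoding yields bona fide parity-conserving fermionic \emph{projectors} (as opposed to more general Hermitian operators) at locality $\leq 9$, and that the code-enforcing penalty projectors can similarly be written as parity-conserving fermionic projectors within that same locality budget.
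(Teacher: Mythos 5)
Your containment argument is essentially the paper's: both Jordan--Wigner the instance and verify with a poly-size collection of measurements (the paper delegates this to the Pauli-term-sampling verification of \cite{Morimae_2016}, while you measure a randomly chosen whole projector; either works, though yours additionally requires that the eigenbasis rotation of each $k$-mode projector be implementable from its specification, a precision caveat the paper itself flags in the introduction). For the hardness part, however, you take a genuinely different route. The paper uses the \emph{unitary} ``assimilation mapping'' of Lemma 15 in \cite{HATH:ferm}: the $n$-qubit space is tensored with $n/2$ ancillary fermionic modes and unitarily identified with the full Fock space of $3n/2$ modes, so each qubit's Paulis become quadratic Majorana monomials, each $3$-local clause becomes a parity-conserving projector on at most $9$ modes, and there is \emph{no code subspace at all} --- hence no penalty projectors and no need for the nullspace projection lemma. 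Soundness is then immediate: invert the unitary and trace out the ancilla modes to obtain a qubit state with the same energy. Your version instead encodes each qubit into a proper subspace of $r$ modes, adds code-enforcing penalty projectors, and controls soundness via closeness to the code space; this works and preserves perfect completeness (the encoded witness exactly annihilates both the encoded clauses and the penalties), but it is more machinery than needed, it incurs a polynomial loss in the gap that the paper's argument avoids, and the ``main obstacle'' you flag --- whether \cite{HATH:ferm} supplies a subspace encoding with $r=3$ and parity-conserving penalty projectors --- is real in the sense that the cited construction is a full-space unitary rather than a subspace encoding, so you would either have to supply the code and its logical operators yourself or switch to the paper's assimilation route, where that obstacle simply does not arise.
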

\noindent
This Lemma mirrors the results for \textsc{QMA}$_{1}$-hardness of \textsc{Quantum} $k$-SAT for $k\geq 3$ \cite{GN:3SAT, bravyi:quantumSAT}.

One may expect that \textsc{Fermionic} $k$-\textsc{SAT} $\in$ $\textsc{QMA}_1$ for all variants of the problem (particle or non-particle number conserving, with a fixed particle number or not), but one has to be cautious about what basic gates are used in the class $\textsc{QMA}_1$ (usually $H$, CNOT and $T$) and whether this set of gates allows for a zero-error acceptance in the YES case when the fermionic problem and its specification is mapped to qubits.

It may be possible to reduce $k=9$ in Lemma \ref{lem:QMA1} to a lower $k$ by adapting the space-time circuit-to-Hamiltonian construction in Section 3.3 in \cite{BT:spacetime} which proves $\textsc{QMA}$-completeness of a fermionic circuit Hamiltonian with projectors which involve at most 4 fermionic modes ($k=4$), {\em under the restriction} that there is 1 particle per track on the 2D lattice. Thus we don't have an overall particle constraint, but several fixed particle sectors, which can possibly be shown to be equivalent to an overall constraint.

Let us also mention a result that is related to this work. It is known that the `\textsc{Fermionic MAX-}2\textsc{-SAT}' problem {\em with} particle number constraint is QMA-complete, see the next Theorem \ref{theorem:QMA}. Note that this problem is fundamentally different from \textsc{Fermionic }2\textsc{-SAT}, since the former is a ground-state energy estimation problem and not a question of whether a given Hamiltonian is frustrated or not. 

\begin{theorem}[Theorem 2 in \cite{gorman_QMA}]
Determining the ground state energy with $1/{\rm poly}(n)$ precision for a class of Fermi-Hubbard Hamiltonians with $n$ fermionic modes, with particle number $N=n/2$ (half-filling) is $\textsc{QMA}$-complete. The particle-number-conserving Fermi-Hubbard Hamiltonian on a graph $G=(V,E)$ is
\begin{equation}
H_{\rm FH}=U\sum_{i \in V} n_{i,+} n_{i,-}+\sum_{(i,j)\in E}\sum_{\sigma=\pm} t_{i,j} (a_{i,\sigma}^{\dagger} a_{j,\sigma}+a_{j,\sigma}^{\dagger} a_{i,\sigma}),
\end{equation}
with $n_{i,+}=a_{i,+}^{\dagger} a_{i,+}$, $U,t_{i,j}\in \mathbb{R}$. For $\textsc{QMA}$-completeness, bounds are specified on the parameters $U$ and $\{t_{i,j}\}$. 
\label{theorem:QMA}
\end{theorem}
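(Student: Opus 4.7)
The plan is to prove \textsc{QMA}-membership by standard phase estimation and \textsc{QMA}-hardness by a perturbative-gadget reduction from a known \textsc{QMA}-hard 2-local spin Hamiltonian, exploiting the large-$U$ and half-filling conditions to force exactly one logical qubit per lattice site.

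For \textsc{QMA}-membership, a verifier receives a claimed ground-state witness $\ket{\psi}$ on $n$ fermionic modes, Jordan--Wigner maps $H_{\rm FH}$ to a local qubit Hamiltonian, Trotterises $e^{-iH_{\rm FH}t}$ into polynomially many 2-qubit gates, and applies quantum phase estimation to estimate $\bra{\psi}H_{\rm FH}\ket{\psi}$ to $1/\mathrm{poly}(n)$ precision; accept iff the measured energy lies below the promise threshold. One must additionally project onto the half-filling sector $\hat{N}=n/2$, which is done by first measuring $\hat{N}=\sum_{i,\sigma} a_{i,\sigma}^{\dagger} a_{i,\sigma}$ (a sum of commuting number operators) nondestructively via an ancilla. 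Completeness, soundness and efficiency follow from the usual Kitaev phase-estimation analysis.

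For \textsc{QMA}-hardness, I would start from a 2-local \textsc{QMA}-hard spin Hamiltonian on a suitable graph, e.g.\ the antiferromagnetic Heisenberg / XY models shown \textsc{QMA}-hard by Oliveira--Terhal and Schuch--Verstraete. Encode each logical qubit as the spin of a single Hubbard site by identifying $\{a_{i,+}^{\dagger}\ket{\rm vac},\; a_{i,-}^{\dagger}\ket{\rm vac}\}$ with $\{\ket{0},\ket{1}\}$. Taking $U \gg \max_{i,j}|t_{i,j}|$ with $N=n/2=|V|$ pins the low-energy sector to singly-occupied configurations, with an $\Omega(U)$ gap to doubly-occupied or empty sites. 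By a Schrieffer--Wolff / Bravyi--DiVincenzo--Loss--Terhal analysis, the second-order effective Hamiltonian on the singly-occupied subspace is
\begin{equation}
H_{\rm eff} \;=\; \sum_{(i,j)\in E} \frac{4 t_{i,j}^2}{U}\,\vec{S}_i\cdot \vec{S}_j \;+\; \text{const},
\end{equation}
an isotropic antiferromagnetic Heisenberg model with tunable couplings on $G$. Choosing $G$ and $\{t_{i,j}\}$ to realise the target \textsc{QMA}-hard Heisenberg instance, and invoking the perturbative gadget theorem to bound the error to $O(t^3/U^2)$, shows that the ground-state energy of $H_{\rm FH}$ at half-filling decides the spin problem to $1/\mathrm{poly}(n)$ precision, yielding \textsc{QMA}-hardness.

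The main obstacle is the \emph{restrictiveness} of the induced effective interactions and the need to preserve a $1/\mathrm{poly}(n)$ promise gap: bare second-order superexchange yields only isotropic, positive-sign Heisenberg terms, so either the starting \textsc{QMA}-hard model must be chosen in this family on the relevant graph, or one must build higher-order gadgets (mediator sites, ring-exchange at order $t^4/U^3$) to synthesise anisotropic or single-site terms. Simultaneously, the parameters $U$ and $\{t_{i,j}\}$ must be tuned so that (i) the perturbative expansion converges with the required precision, (ii) the half-filled singly-occupied sector remains the global ground-state sector rather than the ground state leaking into imbalanced or doubly-occupied sectors, and (iii) all couplings remain within the promised bounds. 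Delicately balancing these three constraints while keeping the reduction \emph{efficient} (polynomial overhead in graph size) is the technical crux and is where the detailed gadget construction of \cite{gorman_QMA} does the real work.
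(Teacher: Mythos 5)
The paper does not prove this statement at all: it is quoted verbatim as Theorem~2 of \cite{gorman_QMA} purely to give context for the complexity landscape around \textsc{Fermionic} 2-SAT, so there is no in-paper proof to compare your attempt against. Judged on its own merits, your sketch does capture the strategy actually used in the cited reference: containment in \textsc{QMA} by standard energy/phase estimation restricted to the $\hat{N}=n/2$ sector, and hardness by a Schrieffer--Wolff reduction of the half-filled, large-$U$ Fermi--Hubbard model to an effective antiferromagnetic Heisenberg model with couplings $4t_{i,j}^2/U$ on the interaction graph. Two corrections are worth making. First, the attribution: the \textsc{QMA}-hardness of the target spin model --- the antiferromagnetic Heisenberg model with variable \emph{non-negative} couplings --- is due to Piddock and Montanaro (building on the Cubitt--Montanaro universality classification), not to Oliveira--Terhal or Schuch--Verstraete; the Schuch--Verstraete result concerns the Hubbard model \emph{with} site-dependent magnetic fields, and the point of the cited theorem is precisely to dispense with those fields. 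Second, the obstacle you flag --- that superexchange only produces isotropic, uniform-sign Heisenberg terms, possibly forcing you into higher-order ring-exchange gadgets --- is not actually an obstruction, because the uniform-sign antiferromagnet with tunable non-negative weights is already \textsc{QMA}-hard; the genuine technical work in \cite{gorman_QMA} lies in the perturbative error bounds and in confirming that the ground state stays in the singly-occupied half-filled sector, which you do correctly identify as the crux.
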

\noindent 

\section{Discussion} 
\label{sec:discuss}

Interestingly, it is not clear whether there is an efficient classical algorithm to solve instances of \textsc{Quantum} 2-SAT (with only parity-conserving projectors) when we ask for an assignment with fixed parity: in this case there can be non-product satisfying assignments. In addition, we don't have the ``natively-fermionic'' Lemma \ref{lemma:degreeatleast3} for \textsc{Quantum} 2-SAT. We conjecture that this problem is of different complexity than \textsc{Fermionic} 2-SAT with fixed parity which we have proved to be efficiently solvable in Theorem \ref{theorem:parityconstrainedP}. This would mesh elegantly with the point of view that parity-conserving interactions are fundamental for fermions. 

We have seen that the satisfying assignments of \textsc{Fermionic} $2$-SAT are of cluster-product form with some modes with purely classical occupations, some modes in a possibly-large clusters in Gaussian states of fixed parity or hidden fixed particle number, and some modes in a 4-fermion cluster in a non-Gaussian state. It will be interesting to explore how we can use these results for \textsc{Fermionic} 2-SAT to develop approximation algorithms or heuristic strategies, quantum or classical, to solve \textsc{Fermionic} $k$-SAT for $k> 2$. For \textsc{Fermionic} $(k>2)$-SAT one may expect that satisfying assignments can be genuinely many-mode non-Gaussian states: due to its QMA$_1$-hardness, satisfying assignments for general \textsc{Fermionic} $k$-SAT problems are not expected to be classically efficiently describable. Thus for such problems, one can seek quantum heuristic strategies, i.e. a quantum equivalent of classical heuristic SAT solvers, which aim at constructing a satisfying state on a quantum computer: such strategies could build on the nature of satisfying assignments for the (\textsc{Fermionic}) 2-SAT problem. It is an open question whether there are interesting classical mathematical problems which can be formulated as a question about the existence of a satisfying assignment to a finite-size, quantum or \textsc{Fermionic} $k$-SAT problem,---perhaps the results in \cite{KK:cliquehomo} can be useful here---. Such construction would be a quantum counterpart to a classical computer-assisted proof obtained through the use of a classical heuristic SAT solver \cite{heule:SAT}.

As a general question, it might be interesting to consider \textsc{Fermionic} $(k>2)$-SAT problems with additional fermionic symmetry, for example consider whether there are complexity-theoretic consequences of time-reversal, spatial-parity and charge-conjugation symmetries as used in the classification of non-interacting fermionic models \cite{Kitaev_2009}.

\section*{Acknowledgements}
We thank Y.~Herasymenko and M.H.~Shaw for insightful discussions. This work is supported by QuTech NWO funding 2020-2024 – Part I “Fundamental Research”, project number 601.QT.001-1, financed by the Dutch Research Council (NWO).

\bibliographystyle{quantum}

%

\appendix

\section{Mathematical micro-facts for Section \ref{section:prelims}}
\label{app:prelims}

\subsection{Invariance of Fermionic 2-SAT clauses that exclude all-empty or all-filled states}

For an edge $e=(j,k)$ one can define rotated annihilation operators:
\begin{equation}
\begin{pmatrix}
\tilde{a}_{j} \\
\tilde{a}_{k} 
\end{pmatrix}
= 
\begin{pmatrix}
U_{11} & U_{12} \\
U_{21} & U_{22}  
\end{pmatrix}
\begin{pmatrix}
a_{j} \\
a_{k} 
\end{pmatrix},
\label{eq:jktransformation}
\end{equation}
with $U$ a unitary matrix.

We can define
\begin{equation}
\Pi^0_{e}= \big(I-\tilde{a}_{j}^{\dagger}\tilde{a}_{j}\big)\big(I-\tilde{a}_{k}^{\dagger}\tilde{a}_{k}\big), 
\label{eq:pi0'}
\end{equation}
and
\begin{equation}
\Pi^2_{e}= \tilde{a}_{j}^{\dagger}\tilde{a}_{j} \tilde{a}_{k}^{\dagger}\tilde{a}_{k}.
\label{eq:pi2'}
\end{equation}

The following holds
\begin{proposition}
Projectors $\Pi^0_e$ and $\Pi^2_e$ in Eqs.~\eqref{eq:pi0'} and \eqref{eq:pi2'} are invariant under any transformation $U$, as in Eq.~\eqref{eq:jktransformation}, and can thus be viewed as classical clauses in the $\{a_{j},a_{k}\}$ mode basis, with the classical Boolean variables corresponding to occupation numbers of the $\{a_{j},a_{k}\}$ modes. That is, $\Pi_{e}^2$ becomes the clause $(\overline{x}_j \vee  \overline{x}_k)$ (excluding 11) and $\Pi_{e}^0$ becomes $(x_j \vee x_k)$ (excluding 00) as in Eq.~\eqref{eq:pi0andpi2}.
\label{prop:02invariance}
\end{proposition}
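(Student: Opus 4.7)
The plan is to reduce both $\Pi^{0}_{e}$ and $\Pi^{2}_{e}$ to polynomials in the local number operator $\hat{N}_{jk} = a_{j}^{\dagger}a_{j} + a_{k}^{\dagger}a_{k}$, which is manifestly invariant under any unitary rotation mixing only modes $j$ and $k$. Since everything relevant lives in the two-mode Fock space spanned by $\ket{00},\ket{10},\ket{01},\ket{11}$, only a short spectral-calculus computation will be needed.

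First I would verify that $\tilde{a}_{j}^{\dagger}\tilde{a}_{j} + \tilde{a}_{k}^{\dagger}\tilde{a}_{k} = \hat{N}_{jk}$ by direct expansion using Eq.~\eqref{eq:jktransformation}. The coefficients of $a_{j}^{\dagger}a_{j}$ and $a_{k}^{\dagger}a_{k}$ sum to $|U_{11}|^{2}+|U_{21}|^{2}=1$ and $|U_{12}|^{2}+|U_{22}|^{2}=1$ respectively, by orthonormality of the columns of $U$, while the cross terms $a_{j}^{\dagger}a_{k}$ and $a_{k}^{\dagger}a_{j}$ pick up the factor $U_{11}^{*}U_{12}+U_{21}^{*}U_{22}=0$ and vanish. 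Next I would observe that the commuting operators $\tilde{a}_{j}^{\dagger}\tilde{a}_{j}$ and $\tilde{a}_{k}^{\dagger}\tilde{a}_{k}$ each have spectrum $\{0,1\}$, so their product $\Pi^{2}_{e}$ in Eq.~\eqref{eq:pi2'} is the projector onto their joint eigenvalue-$1$ subspace, which equals the eigenvalue-$2$ eigenspace of $\hat{N}_{jk}$. Analogously, $\Pi^{0}_{e}$ in Eq.~\eqref{eq:pi0'} projects onto the eigenvalue-$0$ eigenspace of $\hat{N}_{jk}$.

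Since $\hat{N}_{jk}$ has only eigenvalues $0,1,2$ on the two-mode Fock space, Lagrange interpolation gives $\Pi^{2}_{e} = \tfrac{1}{2}\hat{N}_{jk}(\hat{N}_{jk}-1)$ and $\Pi^{0}_{e} = \tfrac{1}{2}(\hat{N}_{jk}-1)(\hat{N}_{jk}-2)$, both polynomials in $\hat{N}_{jk}$. By the first step these are independent of the choice of $U$, and a trivial expansion (using $(a_{i}^{\dagger}a_{i})^{2} = a_{i}^{\dagger}a_{i}$ and $[a_{j}^{\dagger}a_{j}, a_{k}^{\dagger}a_{k}] = 0$) recovers the original-basis forms $a_{j}^{\dagger}a_{j}a_{k}^{\dagger}a_{k}$ and $a_{j}a_{j}^{\dagger}a_{k}a_{k}^{\dagger}$ of Eq.~\eqref{eq:pi0andpi2}. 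I expect no substantive obstacle: the only mildly careful step is bookkeeping of the unitarity relations on the $2\times 2$ matrix $U$, which takes a few lines. The argument passes unchanged to the $n$-mode setting because the rotation acts trivially on the modes outside $\{j,k\}$, and the Jordan--Wigner strings cancel between $\tilde{a}_i$ and $\tilde{a}_i^{\dagger}$ in each local number operator.
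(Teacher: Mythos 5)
Your proof is correct, but it takes a genuinely different route from the paper's. The paper proves invariance of $\Pi^0_e$ by brute-force expansion of $\tilde{a}_{j}\tilde{a}^{\dagger}_{j}\tilde{a}_{k}\tilde{a}^{\dagger}_{k}$ in the original modes, using the unitarity identities $U_{21}U_{22}^{*}=-U_{11}U_{12}^{*}$ and $|\det U|=1$ to kill the cross-terms in the quartic piece, and then gets $\Pi^2_e$ for free from $\Pi_e^2=I-\tilde{a}_{j}^{\dagger}\tilde{a}_{j}-\tilde{a}_{k}^{\dagger}\tilde{a}_{k}+\Pi^0_e$ together with the invariance of the quadratic part. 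You instead identify both projectors as spectral projectors of the edge number operator $\hat{N}_{jk}$ and write them via Lagrange interpolation as $\tfrac{1}{2}\hat{N}_{jk}(\hat{N}_{jk}-1)$ and $\tfrac{1}{2}(\hat{N}_{jk}-1)(\hat{N}_{jk}-2)$, so that invariance reduces entirely to the (column-orthonormality) invariance of $\hat{N}_{jk}$ itself. This is arguably the more conceptual argument: it exposes \emph{why} the statement is true (these are functions of a basis-independent observable) and sidesteps the quartic cross-term bookkeeping, at the modest cost of one step you leave implicit --- namely that the rotated operators $\tilde{a}_j,\tilde{a}_k$ still satisfy the canonical anticommutation relations (in particular $\{\tilde{a}_j,\tilde{a}_k^{\dagger}\}=0$, which uses row orthonormality of $U$), so that $\tilde{a}_{j}^{\dagger}\tilde{a}_{j}$ and $\tilde{a}_{k}^{\dagger}\tilde{a}_{k}$ are indeed commuting projectors with spectrum $\{0,1\}$. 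That fact is standard for passive Bogoliubov transformations, but it is doing real work in your argument and deserves a sentence. Your closing remarks on the $n$-mode setting and the cancellation of Jordan--Wigner strings in the number operators are correct and match the paper's implicit treatment.
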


\begin{proof} 
Inserting the transformation $U$ into $\Pi^{0}_{e}$ with $e = (j,k)$ gives
\vspace{-0.7cm}
\begin{center}
\begin{align}
\Pi^0_{e}=\: \tilde{a}_{j}\tilde{a}^{\dagger}_{j}\tilde{a}_{k}\tilde{a}^{\dagger}_{k} \nonumber \\ 
= \big[I -\big(U^{*}_{11}a_{j}^{\dagger}+U^{*}_{12}a_{k}^{\dagger}\big)\big(U_{11}a_{j} + U_{12}a_{k}\big) \big]\big[I -\big(U^{*}_{21}a_{j}^{\dagger}+U^{*}_{22}a_{k}^{\dagger}\big)\big(U_{21}a_{j} + U_{22}a_{k}\big) \big] \nonumber \\ 
=I - a_{j}^{\dagger}a_{j} - a_{k}^{\dagger}a_{k} + |\det U\:|^2 a_{j}^{\dagger}a_{j}a_{k}^{\dagger}a_{k} =
a_{j}a_{j}^{\dagger}a_{k}a_{k}^{\dagger},    
\end{align}
\end{center}
where we have used unitarity of $U$: $U_{21} U_{22}^*=-U_{11} U_{12}^*$ and $|\det U\:| = 1$. Similarly, since $\Pi_e^2=I-\tilde{a}_{j}^{\dagger}\tilde{a}_{j}-\tilde{a}_{k}^{\dagger}\tilde{a}_{k}+\Pi^0_e$, and $\tilde{a}_{j}^{\dagger}\tilde{a}_{j}+\tilde{a}_{k}^{\dagger}\tilde{a}_{k}$ is invariant under $U$, $\Pi_e^2$ is also invariant under $U$: $\Pi_e^2 = \tilde{a}_{j}^{\dagger}\tilde{a}_{j}\tilde{a}_{k}^{\dagger}\tilde{a}_{k} =  a_{j}^{\dagger}a_{j}a_{k}^{\dagger}a_{k}$.
\end{proof}

\subsection{Action of the particle-hole transformation $K_S$ on $\Pi^{1,q}_e$ and $\Pi^{02,q}_e$ clauses}
\label{sec:PH-details}

The transformation rules are
\begin{align}
    j\in S,\:k\notin S \colon &    \left\{ \begin{aligned} 
 K_{S}\Pi^{1,q}_{e}K_{S}^{-1} &= \tilde{\Pi}^{02,q}_{e} \text{ with } \tilde{\alpha}_{e} = -\bigg[\hspace{-0.1cm} \prod_{\substack{j\leq i < k \\ \text{ s.t. } i\in S}} \hspace{-0.3cm} (-1) \hspace{0.1cm}\bigg] \hspace{0.1cm} \beta_{e} \text{ and } \tilde{\delta}_{e} = \gamma_{e} 
  \\
  K_{S}\Pi^{02,q}_{e}K_{S}^{-1} &= \tilde{\Pi}^{1,q}_{e} \text{ with } \tilde{\beta}_{e} = -\bigg[\hspace{-0.1cm} \prod_{\substack{j\leq i < k \\ \text{ s.t. } i\in S}} \hspace{-0.3cm} (-1) \hspace{0.1cm}\bigg] \hspace{0.1cm} \alpha_{e} \text{ and } \tilde{\gamma}_{e} = \delta_{e}. 
\end{aligned} \right. \\
    j\notin S,\:k\in S  \colon &  \left\{ \begin{aligned} 
 K_{S}\Pi^{1,q}_{e}K_{S}^{-1} &= \tilde{\Pi}^{02,q}_{e} \text{ with } \tilde{\alpha}_{e} = \bigg[\hspace{-0.1cm} \prod_{\substack{j\leq i < k \\ \text{ s.t. } i\in S}} \hspace{-0.3cm} (-1) \hspace{0.1cm}\bigg] \hspace{0.1cm} \gamma_{e} \text{ and } \tilde{\delta}_{e} = \beta_{e}.
  \\
  K_{S}\Pi^{02,q}_{e}K_{S}^{-1} &= \tilde{\Pi}^{1,q}_{e} \text{ with } \tilde{\beta}_{e} = \bigg[\hspace{-0.1cm} \prod_{\substack{j\leq i < k \\ \text{ s.t. } i\in S}} \hspace{-0.3cm} (-1) \hspace{0.1cm}\bigg] \hspace{0.1cm} \delta_{e} \text{ and } \tilde{\gamma}_{e} = \alpha_{e}.
\end{aligned} \right. \\
 j\in S,\:k\in S \colon &\left\{ \begin{aligned}
  K_{S}\Pi^{1,q}_{e}K_{S}^{-1} &= \tilde{\Pi}^{1,q}_{e} \text{ with } \tilde{\beta}_{e} = -\bigg[\hspace{-0.1cm} \prod_{\substack{j\leq i < k \\ \text{ s.t. } i\in S}} \hspace{-0.3cm} (-1) \hspace{0.1cm}\bigg] \hspace{0.1cm} \gamma_{e} \text{ and } \tilde{\gamma}_{e} = \beta_{e}. \\
 K_{S}\Pi^{02,q}_{e}K_{S}^{-1} &= \tilde{\Pi}^{02,q}_{e} \text{ with } \tilde{\alpha}_{e} = -\bigg[\hspace{-0.1cm} \prod_{\substack{j\leq i < k \\ \text{ s.t. } i\in S}} \hspace{-0.3cm} (-1) \hspace{0.1cm}\bigg] \hspace{0.1cm} \delta_{e} \text{ and } \tilde{\delta}_{e} = \alpha_{e}.
\end{aligned} \right. \\
  j\notin S,\:k\notin S\colon & \left\{ \begin{aligned}
  K_{S}\Pi^{1,q}_{e}K_{S}^{-1} &= \tilde{\Pi}^{1,q}_{e} \text{ with } \tilde{\beta}_{e} = \bigg[\hspace{-0.1cm} \prod_{\substack{j\leq i < k \\ \text{ s.t. } i\in S}} \hspace{-0.3cm} (-1) \hspace{0.1cm}\bigg] \hspace{0.1cm} \beta_{e} \text{ and } \tilde{\gamma}_{e} = \gamma_{e}. \\
 K_{S}\Pi^{02,q}_{e}K_{S}^{-1} &= \tilde{\Pi}^{02,q}_{e} \text{ with } \tilde{\alpha}_{e} = \bigg[\hspace{-0.1cm} \prod_{\substack{j\leq i < k \\ \text{ s.t. } i\in S}} \hspace{-0.3cm} (-1) \hspace{0.1cm}\bigg] \hspace{0.1cm} \alpha_{e} \text{ and } \tilde{\delta}_{e} = \delta_{e}.
\end{aligned} \right. 
    \label{eq:ph-edge}
\end{align}

\section{Example of a unique 4-fermion non-Gaussian, non-product satisfying assignment}
\label{app:exampleNG}

We work through explicitly how the satisfying assignments depend on fermionic parity for a simple illustrative 4-fermionic problem where one has a line of three $\Pi^{1,q}_e$ clauses on modes 1,2,3 and 4 as in Lemma \ref{lem:line}, and one adds a single $\Pi^{02,q}_{(2,4)}$ clause between modes 2 and 4. This is an example of a non-hPNC cluster with vertex 2 having degree 3, i.e. an example of Fig.~\ref{fig:linelooptaxonomy}(d).

This example gives insight into why product state assignments suffice for \textsc{Quantum} 2-SAT (with the exception of cases involving rank-3 projectors) but 4-fermion non-product, non-Gaussian states are needed for \textsc{Fermionic} 2-SAT. On a separate note: from Lemma \ref{lem:line} it is clear that if we ask for fixed particle number for \textsc{Quantum} 2-SAT, one \textit{can} have a unique satisfying assignment which is not a product state, i.e. a fully permutation-symmetric state of $n_q$ qubits with a fixed number $N_q$ of $1$s (with $N_q$ unequal to $0$ or the maximum $n_q$).

The instance we consider here consists of 4 fermionic modes with projectors $\Pi^{1,q}_{(1,2)}$, $\Pi^{1,q}_{(2,3)}, \Pi^{1,q}_{(3,4)}$ which we give in qubit language. Specifically, we take $\Pi_{(1,2)}^{\rm qubit}=\ket{\Psi^-}\bra{\Psi^-}_{12}$, $\Pi_{(2,3)}^{\rm qubit}=\ket{\Psi^-}\bra{\Psi^-}_{23}$,$\Pi_{(3,4)}^{\rm qubit}=\ket{\Psi^-}\bra{\Psi^-}_{34}$, with $\ket{\Psi^-}$ denoting a singlet state. For a fixed particle number $0\leq N\leq 4$, the permutation symmetric state with $N$ excitations is the unique satisfying assignment (i.e., ground state of the ferromagnetic Heisenberg model) via Lemma \ref{lem:line}. For each $N$, there is thus a satisfying assignment, i.e.
\begin{align}
\ket{\psi_{N=0}}=&\: \ket{0000} \nonumber \\ 
\ket{\psi_{N=1}}=&\: \ket{0001}+\ket{0010}+\ket{0100}+\ket{1000}, \nonumber \\ 
\ket{\psi_{N=2}}=&\: \ket{1100}+\ket{0011}+\ket{1010}+\ket{1001}+\ket{0110}+\ket{0101} \nonumber \\ 
\ket{\psi_{N=3}}=&\: \ket{0111}+\ket{1011}+\ket{1101}+\ket{1110}, \nonumber \\ 
\ket{\psi_{N=4}}=&\: \ket{1111}.
\end{align}

Now image we wish to add a fourth fermionic $\Pi^{02,q}_{(2,4)}$-type projector with amplitudes $\alpha_{(2,4)} = \delta_{(2,4)} = 1/\sqrt{2}$ which, in qubit language ---note the additional $Z_3$ due to the Jordan-Wigner transformation in Eq.~\eqref{eq:JW}---, equals
\begin{align}
    \Pi_{(2,4)}^{\rm qubit} =&\: \frac{1}{2}\Big(\sigma^-_{2}\sigma^+_{2}\sigma^-_{4}\sigma^+_{4} + \sigma^-_{2} Z_3 \sigma^-_{4} + \sigma^+_{2}  Z_3\sigma^+_{4} + \sigma_{2}^+\sigma^-_{2}\sigma_{4}^{+}\sigma^-_{4}\Big) \notag \\ =&\:\frac{1}{2}\Big( \ket{00}\bra{00}_{24}+Z_3 [\ket{00}\bra{11}_{24} + \ket{11}\bra{00}_{24}] + \ket{11}\bra{11}_{24}\Big).
    \label{eq:pi24}
\end{align}
From the expression for $\Pi_{(2,4)}^{\rm qubit}$, it is clear that an assignment $\ket{\psi}$ can only be a satisfying assignment if it is of the form $\ket{\psi_{\rm even}} = a_{0}\ket{\psi_{N=0}} + a_{2}\ket{\psi_{N=2}} + a_{4}\ket{\psi_{N=4}}$ (even parity) or $\ket{\psi_{\rm odd}} = a_{1}\ket{\psi_{N=1}} + a_{3}\ket{\psi_{N=3}}$ (odd parity). Let us see which (if any) of these states is projected to zero by $\Pi_{(2,4)}^{\rm qubit}$.
\begin{align}
    \Pi_{(2,4)}^{\rm qubit}\ket{\psi_{\rm even}} =&\: \frac{1}{2}\Big( a_0(\ket{0000}+\ket{0101}) + a_2(\ket{1010}-\ket{1111}+\ket{0000}+\ket{0101}) \nonumber \\ &\: \hspace{7.0cm} + a_4(-\ket{1010}+\ket{1111}) \Big), \nonumber \\
    \Pi_{(2,4)}^{\rm qubit}\ket{\psi_{\rm odd}} =&\: \frac{1}{2}\Big(a_1(\ket{0010}-\ket{0111}+\ket{1000}+\ket{1101}) \nonumber \\ &\: \hspace{4.0cm} + a_3(-\ket{0010}+\ket{0111}+\ket{1000}+\ket{1101})\Big),
\end{align}
where the different signs are caused by $Z_3$ in $\Pi_{(2,4)}^{\rm qubit}$. Clearly, $\Pi_{(2,4)}^{\rm qubit}\ket{\psi_{\rm even}} = 0$ for $a_0 = -a_2$ and $a_2 = a_4$, and $\Pi_{(2,4)}^{\rm qubit}\ket{\psi_{\rm odd}}$ cannot be zero. So, $\ket{\psi_{\rm even}}$ with $a_0 = -a_2$ and $a_2 = a_4$ is the unique satisfying assignment of $\{\Pi_{(1,2)}^{\rm qubit},\Pi_{(2,3)}^{\rm qubit},\Pi_{(3,4)}^{\rm qubit},\Pi_{(2,4)}^{\rm qubit}\}$, which is clearly not a product state. Let us set $a_0 = 1/2\sqrt{2}$, $a_2 = -1/2\sqrt{2}$ and $a_4 = -1/2\sqrt{2}$ wlog (so that $\ket{\psi_{\rm even}}$ is normalized). 

Back in fermionic language, we thus have a satisfying assignment
\begin{align}
    \ket{\psi_{\rm even,f}}=\frac{1}{2\sqrt{2}}\left(I -(a_1^{\dagger}a_2^{\dagger}+ a_3^{\dagger}a_4^{\dagger}+a_2^{\dagger} a_3^{\dagger} +a_1^{\dagger} a_4^{\dagger}+a_2^{\dagger} a_4^{\dagger}+a_1^{\dagger}a_3^{\dagger})-a_1^{\dagger}a_2^{\dagger}a_3^{\dagger}a_4^{\dagger}\right)\ket{\rm vac}.
 \label{eq:SATNG}
\end{align}
To prove that this is a non-Gaussian state, we argue as follows. A pure Gaussian state has a correlation matrix $M\in \mathbb{R}^{8\times 8}$ (with entries in Eq.~(\ref{eq:cor-mat})) with orthonormal columns as $M^{T}M = I$. To show that this does not hold for our state, let's evaluate the first column of $M$ ($M_{j,1}$ for $j = 1,2,\ldots,8$).

By definition, we have that $M_{1,1} = 0$. Furthermore, $M_{2,1} = 2\bra{\psi_{\rm even,f}}a_{1}^{\dagger}a_{1}\ket{\psi_{\rm even,f}}-1$, which can be simply seen to equal $0$. For the other entries, let us distinguish between 
\begin{align}
    &M_{2j-1,1} = i\bra{\psi_{\rm even,f}}(a_{j}+a_{j}^{\dagger})(a_{1} + a_{1}^{\dagger})\ket{\psi_{\rm even,f}},\text{ with odd index $2j-1$,} \nonumber \\
    &M_{2j,1} = i\bra{\psi_{\rm even,f}}i(a_{j}-a_{j}^{\dagger})(a_{1} + a_{1}^{\dagger})\ket{\psi_{\rm even,f}},\text{ with even index $2j$,} 
\end{align}
for $j = 2,3,4$. For either type of entry, we have to evaluate the following expectation values, which can be done straightforwardly. 
\begin{align}
    &\bra{\psi_{\rm even,f}}a_{j}a_{1}^{\dagger}\ket{\psi_{\rm even,f}} = 
    \begin{cases}
      -1/4 & \text{if $j=2$,}\\
      0 & \text{if $j=3$,}\\
      1/4 & \text{if $j=4$.}
    \end{cases} 
    \hspace{0.8cm}
    \bra{\psi_{\rm even,f}}a_{j}^{\dagger}a_{1}\ket{\psi_{\rm even,f}} = 
    \begin{cases}
      1/4 & \text{if $j=2$,}\\
      0 & \text{if $j=3$,}\\
      -1/4 & \text{if $j=4$.}
    \end{cases}
    \nonumber \\
    &\bra{\psi_{\rm even,f}}a_{j}^{\dagger}a_{1}^{\dagger}\ket{\psi_{\rm even,f}} = 
    \begin{cases}
      0 & \text{if $j=2$,}\\
      1/4 & \text{if $j=3$,}\\
      0 & \text{if $j=4$.}
    \end{cases}
    \hspace{1.1cm}
    \bra{\psi_{\rm even,f}}a_{j}a_{1}\ket{\psi_{\rm even,f}} = 
    \begin{cases}
      0 & \text{if $j=2$,}\\
      -1/4 & \text{if $j=3$,}\\
      0 & \text{if $j=4$.}
    \end{cases}
\end{align}
Using these expressions, we conclude the following. 
\begin{equation}
    M_{2j-1,1} = 0 \text{ for $j=2,3,4$ } \hspace{0.4cm} \text{and} \hspace{0.4cm} M_{2j,1} = 
    \begin{cases}
      1/2 & \text{if $j=2$,}\\
      1/2 & \text{if $j=3$,}\\
      -1/2 & \text{if $j=4$.}
    \end{cases}
\end{equation}
Thus the first column of $M$ is $(0,0,0,1/2,0,1/2,0,-1/2)^{T}$, which has 2-norm $\sqrt{3/4}<1$, so $\ket{\psi_{\rm even,f}}$ is not Gaussian. 

Let's compare our findings briefly with an equivalent set of \textsc{Quantum} 2-SAT constraints: Will there be a product assignment here? The \textsc{Quantum} 2-SAT equivalent of the \textsc{Fermionic} 2-SAT instance considered here is $\Big\{\Pi_{(1,2)}^{\rm qubit} = \ket{\Psi^-}\bra{\Psi^-}_{12}$, $\Pi_{(2,3)}^{\rm qubit} = \ket{\Psi^-}\bra{\Psi^-}_{23}$, $\Pi_{(3,4)}^{\rm qubit} = \ket{\Psi^-}\bra{\Psi^-}_{34}$, $\Pi_{(2,4)}^{\rm qubit} = \frac{1}{2}(\ket{00}+\ket{11})(\bra{00}+\ket{11})_{24}\Big\}$ (note the absence of the Jordan-Wigner $Z_3$). This \textsc{Quantum} 2-SAT instance indeed has a product state satisfying assignment, namely $\bigotimes_{j=1}^{4}\frac{1}{\sqrt{2}}(\ket{0}+i\ket{1})_{j}$.

Our pedestrian findings here are captured by the statement for $n_q=4$ in Corollary \ref{cor:restricted-PN} point \ref{degree3}. The degree-3 vertex in this case is vertex 2, the particle-hole-transformation acts on modes $1,3,4$ and changes the parity of the non-Gaussian even parity state in Eq.~\eqref{eq:SATNG} to an odd parity state, while an even parity state in the particle-hole transformed basis (hence odd particle here) is disallowed. The exclusion of one of the two parities for this \textsc{Fermionic} 2-SAT instance is essentially what causes the unique satisfying assignment to be non-product and non-Gaussian. 

\section{Classical 2-SAT with fixed parity}
\label{app:parityconstrained2SAT}
In this section, we prove that classical 2-SAT can be solved efficiently, even when constraining the Hamming weight parity of the assignment. This is not necessarily a trivial problem as they are, for example, simple examples of 2-SAT instances with exponentially many solutions, \textit{all} with the same parity. Consider for instance 
\begin{equation}
    \bigwedge_{j=0}^{n/4-1} (x_{j+1}\lor \overline{x}_{j+2}) \land (x_{j+2}\lor \overline{x}_{j+3}) \land (x_{j+3}\lor \overline{x}_{j+4}) \land (\overline{x}_{j+1}\lor x_{j+4}),
\end{equation}
with $n$ a multiple of four. This instance corresponds to disjoint units of four variables, where the only satisfying assignments on each unit are the all-zeros or the all-ones assignment. Clearly, there are exponentially many satisfying assignments, and they are all of even parity. In fact, there are even simple examples of \textit{connected} 2-SAT instances with exponentially many satisfying assignments, all with the same parity. Thus even if one is guaranteed that a 2-SAT instance has (exponentially) many satisfying assignments, the question of whether there is a satisfying assignment with a given parity is a nontrivial one. We prove the following

\begin{theorem}
      Given an instance of classical 2-SAT on $n$ variables with $m$ clauses, and a parity $P\in \{-1,+1\}$. Decide whether there exists a satisfying assignment $x$ with Hamming weight parity $P$ (YES), or there is no satisfying assignment with Hamming weight parity $P$ (NO). This problem can be decided in time $O(nm)$. 
    \label{thm:parity}
\end{theorem}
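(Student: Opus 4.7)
The plan is to reduce the problem to ordinary 2-SAT followed by a ``flip analysis''. First, I would run the standard linear-time 2-SAT algorithm to obtain a satisfying assignment $X_0$, or certify unsatisfiability. If $X_0$ already has the required Hamming weight parity $P$, output YES. Otherwise, I must decide whether some other satisfying assignment $X_1$ exists with parity opposite to $X_0$; equivalently, whether there exists a ``flip'' $T = X_0 \oplus X_1$ of odd size such that $X_0 \oplus T$ is still satisfying.

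For each variable $x_i$, I would define the \emph{minimal flip closure} $T_i \subseteq [n]$: the smallest set of variables that must simultaneously flip if one flips $x_i$, while remaining a satisfying assignment. Each clause $(l_1 \vee l_2)$ induces 2-SAT-style implications on the flip indicators $t_j \in \{0,1\}$ (meaning ``$x_j$ flips'') determined by the literal values in $X_0$: if both $l_1, l_2$ are true in $X_0$, then $t_{\mathrm{var}(l_1)} \Rightarrow \neg t_{\mathrm{var}(l_2)}$, whereas if only $l_1$ is true, then $t_{\mathrm{var}(l_1)} \Rightarrow t_{\mathrm{var}(l_2)}$. Propagating these implications from $t_i = 1$ yields either $T_i$ or a contradiction (the latter indicating $x_i$ is forced to the same value in every SA); each propagation takes $O(n+m)$ time. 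The algorithm then outputs YES iff some $|T_i|$ is odd, in which case $X_0 \oplus T_i$ is the desired SA.

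The hard part will be proving that this single-variable check is sufficient: if every $|T_i|$ is even, no satisfying assignment has opposite parity. I would argue this via M{\"o}bius inversion on an auxiliary poset. The relation $j \preceq i \iff j \in T_i$ is reflexive (since $i \in T_i$) and transitive (since $j \in T_i$ forces $T_j \subseteq T_i$ by minimality of $T_j$), so it induces a quotient poset $P$ whose elements are the equivalence classes of variables. Every valid flip $T$ satisfies $T = \bigcup_{i \in T} T_i$, so $T$ corresponds to a down-set $D \subseteq P$, and $|T| = \sum_{C \in D}|C|$. Setting $g(C) := |C| \pmod 2$ and $\phi(D) := \sum_{C \in D} g(C) \pmod 2$, the hypothesis that every $|T_i|$ is even reads as $\phi(\downarrow C) = 0$ for every $C \in P$. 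A direct induction on $|\downarrow C|$ (equivalently, M{\"o}bius inversion) then yields $g \equiv 0$, hence $\phi(D) = 0$ for every down-set $D$, so no flip can change the parity.

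For the $O(nm)$ time bound: the naive scheme of computing all closures $T_i$ gives $O(n(n+m))$, which is $O(nm)$ whenever $m \geq n$. In the residual case $m < n$, at least $n - 2m$ variables appear in no clauses and so can be freely flipped to adjust parity, reducing the problem to an $O(n+m)$ satisfiability check on the constrained sub-instance. The subtle correctness ingredient, as noted above, is the M{\"o}bius argument: although union cardinalities do not behave additively, the parity of any union of even-parity flip closures remains even, so the single-variable checks are enough.
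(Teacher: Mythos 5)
Your proposal is correct, and its algorithmic core coincides with the paper's: find a satisfying assignment $X_0$, recenter so that the question becomes whether a valid ``flip set'' of odd cardinality exists, and decide this by $O(n)$ single-seed implication propagations, each costing $O(n+m)$. Where you genuinely diverge is in the correctness argument for the NO case. The paper contracts the strongly connected components of the one-directional implication graph (its Lemma \ref{lemma:SCCassignemnts} shows each SCC is all-flipped or all-unflipped), topologically sorts the condensation, and iterates over odd-weight super-variables \emph{in topological order}, using the order to argue that the first consistently flippable odd-weight super-variable yields an odd flip and that any odd-weight assignment must set some odd-weight super-variable to $1$. You instead test the parity of every single-variable closure $T_i$ directly and justify the NO answer by induction on the condensation poset: if every principal down-set has even total size then every class has even size, hence so does every union of closures. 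This dispenses with the topological sort in the correctness proof and is arguably cleaner; your explicit treatment of the $m<n$ regime via free variables is also more careful than the paper's, which leaves the $O(n(n+m))$-versus-$O(nm)$ discrepancy implicit. Two points to tighten: (i) the hypothesis of your induction must read ``every \emph{consistent} $|T_i|$ is even,'' since closures that violate a $\neg(t_a\wedge t_b)$ constraint are excluded from the YES test; the induction still goes through because consistency is inherited downward in your preorder, so the consistent classes form a down-set; (ii) no M\"obius machinery is really needed --- the direct induction you describe is the whole argument.
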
 

\begin{proof}
    First, we find a satisfying assignment in time $O(n+m)$ \cite{Lintimeclassical2sat} for the $2$-SAT instance, if it is satisfiable. If this solution has Hamming weight parity $P$, then we are done (output YES), else we proceed. For convenience, let us redefine the $2$-SAT instance such that the obtained solution is the all-zeros assignment. Then what is left is to check whether there exists an odd Hamming weight satisfying assignment for the redefined instance. The redefined instance wlog consists of clauses $(x_{i}\lor \overline{x}_{j})$ (with $i<j$ or $i>j$) and $(\overline{x}_{i}\lor \overline{x}_{j})$ (with $i<j$ wlog). Note that there can be multiple clauses per pair of variables $i,j$. 

    Let us consider the sub-graph $G_{\rm sub}$ consisting of just the clauses of type $(x_{i}\lor \overline{x}_{j})$ (with $i<j$ or $i>j$). We associate a directed edge $i\to j$ with each clause $(x_{i}\lor \overline{x}_{j})$ (i.e., edges point from ``variable'' $i$ to ``negated variable'' $j$). Note that two variables $i,j$ can simultaneously be connected by a $(x_{i}\lor \overline{x}_{j})$ clause and a $(x_{j}\lor \overline{x}_{i})$ clause, resulting in a $i\to j$ edge and a $j\to i$ edge in $G_{\rm sub}$. Next, we identify the strongly-connected components (SCC's) of the directed graph $G_{\rm sub}$ in time $O(n+m)$ \cite{TarjanSCC}. The only satisfying assignments on these SCC's are the all-zeros and the all-ones assignments, see Lemma \ref{lemma:SCCassignemnts} below. Hence we compress each SCC into a single Boolean variable. We label the new collection of variables by $\{y_{j}\}_{j=1}^{\leq n}$, where some were previously single variables and others are compressed SCC's. Since each directed cycle is (part of) an SCC, the compressed problem cannot contain any directed cycles, i.e., it is a directed acyclic graph. The graph with $(y_{i}\lor \overline{y}_{j})$-type edges can be topologically sorted in time $O(n+m)$ \cite{TarjanTopologicalSorting}. The topologically-sorted graph now only includes edges of type $(y_{i}\lor \overline{y}_{j})$ for which $i<j$. Next, let us add the $(\overline{x}_{i}\lor \overline{x}_{j})$-type clauses, which after compression have become either $(\overline{y}_{i}\lor \overline{y}_{j})$ clauses or self-edges $(\overline{y}_{i}\lor \overline{y}_{i})$. The former can be added s.t. $i<j$ in the topologically sorted graph (since, obviously, $(\overline{y}_{i}\lor \overline{y}_{j})$ and $(\overline{y}_{j}\lor \overline{y}_{i})$ are equivalent) to construct the new graph $\tilde{G}$, see Fig.~\ref{lemma:SCCassignemnts}. The self-edges of type $(\overline{y}_{i}\lor \overline{y}_{i})$ exclude $y_{i}=1$. These variables with self-edges can be flagged in time $O(n+m)$. 

    With each variable $j$ in the compressed problem, we associate a weight $w_{j}\in \{0,1\}$. The weight $w_{j}$ corresponds to the number (modulo $2$) of original variables in the SCC that has been compressed into variable $y_{j}$. Note that for any uncompressed variable, we assign weight $1$. To obtain a satisfying assignment of odd Hamming weight (if it exists), we need to flip an odd number of odd-weight variables. In order to do so, we might have to simultaneously flip some even-weight variables. 

    Importantly, if we flip a variable $j$ (with some weight $w_{j}$) to $y_{j} = 1$, then, by construction, all clauses to variables in \textit{later} layers of the topological sort are still satisfied. Clauses from variable $j$ to variables in \textit{earlier} layers can now become unsatisfied and thus some variables $i$ s.t. $i<j$ might also have to be set to $y_{i} = 1$. Similarly, flipping such a variable $i$ can only lead to unsatisfied clauses that connect $i$ to variables in earlier layers of the topological sort, not to variables in later layers.

    The algorithm runs by first identifying the earliest odd-weight variable $j$ in the topological sort. For some instances, this \textit{might} already be in the first layer. We set $y_{j} = 1$ and see whether the propagation to earlier layers of the topological sort does not lead to any contradiction. If it does not, then we have found a satisfying assignment of flipped parity (output YES), since all variables that are flipped in earlier layers due to propagation are even-weight variables by definition. If the propagation leads to contradiction, then clearly there is \textit{no} satisfying assignment for which $y_{j} = 1$. Next, we reset to the all-zeros assignment and identify the next odd-weight variable $k$ in the topological sort (which is in the same or in a later layer). Again, one checks whether setting $y_{k} = 1$ leads to contradiction in earlier layers. If it does not, then we have constructed a satisfying assignment with flipped parity (output YES), since $y_{k}$ is the \textit{only} odd-weight variable set to $1$ in that assignment. Indeed, if $y_{j}$ would also have to be set to $1$, then there would be a contradiction. If setting $y_{k}=1$ does lead to contradiction, then there is \textit{no} satisfying assignment for which $y_{k} = 1$. We proceed again by resetting to the all-zeros assignment and identifying the next odd-weight variable in the topological sort. We iterate over all odd-weight variables in this manner and either obtain a satisfying assignment of flipped parity or conclude that no odd-weight variable can be set to $1$ consistently and hence there is no satisfying assignment with flipped parity (output NO). Iterating over all odd-weight variables and checking for the consistency of flipping them takes time $O(nm)$. 
\end{proof}

\begin{figure}[H]
\centering
\includegraphics[width=0.6\textwidth]{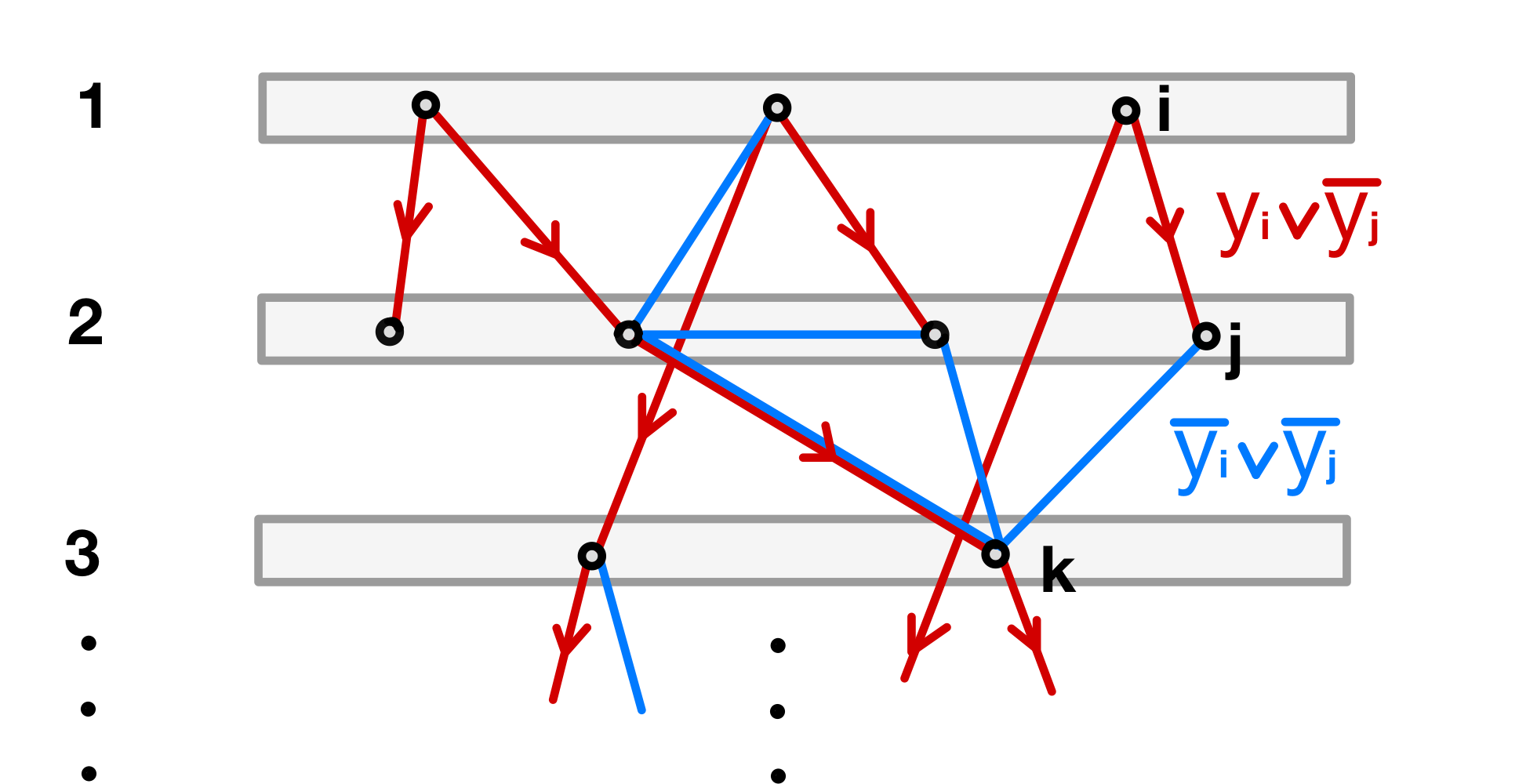}
\caption{First three layers of a compressed topologically sorted graph $\tilde{G}$ derived from $G$, where each variable $j$ has been given a weight $w_{j}$. The red clauses are of type $(y_{i}\lor \overline{y}_{j})$ (with $i<j$) and the blue clauses are of type $(\overline{y}_{i}\lor \overline{y}_{j})$. }
\label{fig:TopolSort}
\end{figure}

\begin{lemma}
    Given an instance of classical 2-SAT with $(x_{i}\lor \overline{x}_{j})$-type clauses only. Let us consider the directed graph with edges $i\to j$ for each clause $(x_{i}\lor \overline{x}_{j})$ (i.e., edges pointing from ``variable'' $i$ to ``negated variable'' $j$). If this graph is strongly connected, then only the all-zeros and all-ones assignments are satisfying assignments. 
    \label{lemma:SCCassignemnts}
\end{lemma}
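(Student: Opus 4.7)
The plan is to read each edge as a logical implication and let strong connectivity do the propagation. Specifically, the clause $(x_i \lor \overline{x}_j)$ fails exactly when $x_i = 0$ and $x_j = 1$, so it is logically equivalent to the implication $x_j \Rightarrow x_i$. Under the identification given in the lemma, the directed edge $i \to j$ therefore encodes ``if $x_j$ is true then $x_i$ is true''.

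I would first dispense with the easy direction: both the all-zeros and the all-ones assignment trivially satisfy any clause of the form $(x_i \lor \overline{x}_j)$, since the former makes each $\overline{x}_j$ true and the latter makes each $x_i$ true. So these two assignments are always satisfying; the content of the lemma is that there are no others.

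For the uniqueness direction, I would argue by contradiction. Suppose $x$ is a satisfying assignment that is neither all-zeros nor all-ones. Then there exist indices $a$ and $b$ with $x_a = 1$ and $x_b = 0$. By strong connectivity, there is a directed path $b = i_0 \to i_1 \to \cdots \to i_k = a$. Each edge $i_r \to i_{r+1}$ corresponds to a clause $(x_{i_r} \lor \overline{x}_{i_{r+1}})$, i.e., to the implication $x_{i_{r+1}} \Rightarrow x_{i_r}$. Composing along the path yields
\begin{equation*}
x_a \;=\; x_{i_k} \;\Rightarrow\; x_{i_{k-1}} \;\Rightarrow\; \cdots \;\Rightarrow\; x_{i_0} \;=\; x_b,
\end{equation*}
and since $x_a = 1$ this forces $x_b = 1$, contradicting $x_b = 0$. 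Hence any satisfying assignment must be uniform, proving the claim.

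There is really no significant obstacle here. The only step that needs a little care is getting the direction of the implication right, i.e., remembering that $(x_i \lor \overline{x}_j)$ propagates truth from $x_j$ to $x_i$ (so an edge $i \to j$ in the graph corresponds to propagating truth backwards along the edge direction). Once this is pinned down, strong connectivity immediately gives the chain of implications needed to collapse any mixed assignment.
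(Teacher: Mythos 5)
Your proof is correct and uses essentially the same argument as the paper: reading each clause $(x_i \lor \overline{x}_j)$ as the implication $x_j \Rightarrow x_i$ and propagating truth values along paths guaranteed by strong connectivity (the paper phrases this as propagating $0$'s downstream and $1$'s upstream, which is the same implication chain you compose). Your by-contradiction packaging of the two directions into a single path from $b$ to $a$ is a minor stylistic difference, not a different method.
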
 

\begin{proof}
    If the directed graph is strongly connected, then for \textit{each} pair of variables $(i,j)$ there is path from $i$ to $j$ and a path from $j$ to $i$. Let us set $x_{i} = 0$. Then clauses along the path of directed edges from variable $i$ to variable $j$ can only be satisfied if $x_{j} = 0$. Conversely, let us set $x_{i} = 1$ and let us now travel ``upstream'' to variable $j$. The clauses along the path of directed edges from variable $j$ to variable $i$ can only be satisfied if $x_{j} = 1$. Hence the lemma statement follows. 
\end{proof}

\begin{remark} The algorithm in Theorem \ref{thm:parity} cannot be used to solve the {\rm NP}-complete problem of finding an assignment with fixed Hamming weight (particle number) since one can generally not appropriately change the particle number by flipping a polynomial set of designated variables. Similarly, the algorithm above cannot be used to count the number of solutions to a 2-SAT instance (which is a \#P-complete problem), although there is a way of enumerating all solutions with effort growing with the number of solutions, which uses similar techniques  as our algorithm \cite{feder}. 
\end{remark}

\section{Proof of Lemma \ref{lem:QMA1}}
\label{app:qma}

\begin{proof}
Containment in $\textsc{QMA}$ can be obtained by simply mapping the $n$-mode fermionic problem onto a $n$-qubit problem using e.g. the Jordan-Wigner transformation in Eq.~\eqref{eq:JW}. This means that each projector $\Pi_i$ using $k$ fermionic modes is mapped to a projector which is a sum of terms, each of which has a $k$-local part to which a string of Pauli $Z$s of some length is appended. Hence the \textsc{Fermionic} $k$-SAT problem maps onto a local Hamiltonian problem but some Pauli strings in the Hamiltonian have weight larger than $k$. However, one can still apply standard QMA-arguments for the proof verification as in Section IV in \cite{Morimae_2016} since this method only relies on the fact that the number of Pauli terms in the Hamiltonian is ${\rm poly}(n)$. 

To prove hardness, we can map the $\textsc{QMA}_1$-complete problem in \cite{GN:3SAT} onto a \textsc{Fermionic} $k=9$-SAT problem using the unitary ``assimilation mapping" described in Lemma 15 in \cite{HATH:ferm}. We take the 3-local circuit Hamiltonian, with its $3$-local projectors, on some $n$ qubits in the $\textsc{QMA}_1$-proof and view it as a Hamiltonian on $n$ qubits plus $n/2$ fermionic modes which gets mapped on a space of $3n/2$ fermionic modes and the Pauli operators of one qubit get mapped on products of two (Majorana) fermion operators out of three fermionic modes. Thus each 3-local projector becomes a fermionic projector $\Pi_i$ involving at most $9$ fermionic modes which conserves fermionic parity. In the YES case, if there is a $n$-qubit satisfying assignment for the \textsc{Quantum} 3-SAT problem, it can be rewritten as a $3n/2$-fermionic state which is a satisfying assignment for the \textsc{Fermionic} 9-SAT problem. In the NO case, if there is no approximate satisfying assignment for \textsc{Quantum} 3-SAT, imagine that there is a fermionic state $\ket{\Phi}$ for the \textsc{Fermionic} 9-SAT problem for which $\sum_i \bra{\Phi}\Pi_i \ket{\Phi}=\beta$. Then performing the inverse mapping and tracing over the additional fermionic registers in $\ket{\Phi}\bra{\Phi}$ to get the qubit reduced density matrix $\rho^{\rm qubit}$, this implies that $\sum_i {\rm Tr}\, \Pi_i^{\rm qubit} \rho^{\rm qubit}=\beta$ which by assumption implies that $\beta\geq 1/{\rm poly}(n)$. 
\end{proof}

\end{document}